\journal{Theoretical Computer Science}
\newtheorem{theorem}{Theorem}
\newtheorem{lemma}[theorem]{Lemma}
\newtheorem{corollary}[theorem]{Corollary}
\newtheorem{example}{Example}
\newproof{proof}{Proof}
\newclass{\DCM}{DCM}
\newclass{\RCM}{RCM}
\newclass{\eDCM}{eDCM}
\newclass{\eNPDA}{eNPDA}
\newclass{\DPDA}{DPDA}
\newclass{\RDPDA}{RDPDA}
\newclass{\PDA}{NPDA}
\newclass{\MAJ}{MAJ}
\newclass{\DCMNE}{DCM_{NE}}
\newclass{\TwoDCM}{2DCM}
\newclass{\NCM}{NCM}
\newclass{\NCPM}{NCPM}
\newclass{\eNCM}{eNCM}
\newclass{\eNQA}{eNQA}
\newclass{\eNSA}{eNSA}
\newclass{\eNPCM}{eNPCM}
\newclass{\eNQCM}{eNQCM}
\newclass{\eNSCM}{eNSCM}
\newclass{\DPCM}{DPCM}
\newclass{\NPCM}{NPCM}
\newclass{\NQCM}{NQCM}
\newclass{\NSCM}{NSCM}
\newclass{\NPDA}{NPDA}
\newclass{\TRE}{TRE}
\newclass{\NFA}{NFA}
\newclass{\DFA}{DFA}
\newclass{\NCA}{NCA}
\newclass{\DCA}{DCA}
\newclass{\DTM}{DTM}
\newclass{\NTM}{NTM}
\newclass{\DLOG}{DLOG}
\newclass{\CFG}{CFG}
\newclass{\ETOL}{ET0L}
\newclass{\EDTOL}{EDT0L}
\newclass{\CFP}{CFP}
\newclass{\ORDER}{O}
\newclass{\MATRIX}{M}
\newclass{\BD}{BD}
\newclass{\LB}{LB}
\newclass{\ALL}{ALL}
\newclass{\decLBD}{decLBD}
\newclass{\StLB}{StLB}
\newclass{\SBD}{SBD}
\newclass{\TCA}{TCA}
\newclass{\RNCSA}{RNCSA}
\newclass{\RDCSA}{RDCSA}
\newclass{\ct}{COUNT}
\newclass{\DCSA}{DCSA}
\newclass{\NCSA}{NCSA}
\newclass{\DCSACM}{DCSACM}
\newclass{\NCSACM}{NCSACM}
\newclass{\NTMCM}{NTMCM}
\newclass{\code}{code}
\newclass{\UFIN}{\LL(IND_{UFIN})}
\newclass{\UFINONE}{\LL(IND_{{UFIN}_1})}
\newclass{\FIN}{\LL(IND_{FIN})}
\newclass{\ILIN}{\LL(IND_{LIN})}
\newclass{\ETOLfin}{\LL(ET0L_{FIN})}
\newsavebox{\spacebox}
\newcommand{\blank}{\usebox{\spacebox}}%
\newcommand{\LL}{{\cal L}}
\DeclareMathOperator{\alp}{alph}
\begin{document}

\begin{frontmatter}

\title{On counting functions and slenderness of languages\tnoteref{t1}\tnoteref{t2}}

\tnotetext[t1]{A portion of this paper (in preliminary form) where the main result
(Theorem \ref{main2})  was only shown for unambiguous reversal-bounded
pushdown automata, and with most proofs missing has appeared in the Proceedings of DLT 2018.}

\tnotetext[t2]{\textcopyright 2018. This manuscript version is made available under the CC-BY-NC-ND 4.0 license \url{http://creativecommons.org/licenses/by-nc-nd/4.0/}}

\author[label1]{Oscar H. Ibarra\fnref{fn1}}
\address[label1]{Department of Computer Science\\ University of California, Santa Barbara, CA 93106, USA}
\ead[label1]{ibarra@cs.ucsb.edu}
\fntext[fn1]{Supported, in part, by
NSF Grant CCF-1117708 (Oscar H. Ibarra).}

\author[label2]{Ian McQuillan\fnref{fn2}}
\address[label2]{Department of Computer Science, University of Saskatchewan\\
Saskatoon, SK S7N 5A9, Canada}
\ead[label2]{mcquillan@cs.usask.ca}
\fntext[fn2]{Supported, in part, by Natural Sciences and Engineering Research Council of Canada Grant 2016-06172 (Ian McQuillan).}

\author[label3]{Bala Ravikumar}
\address[label3]{Department of Computer Science\\
	Sonoma State University, Rohnert Park, CA 94928 USA}
	\ead[label3]{ravikuma@sonoma.edu}

\begin{abstract} 
We study counting-regular languages --- these are languages $L$ for which there is a regular language $L'$ such that the number of strings of length $n$ in $L$ and $L'$ are the same for all $n$.
We show that the languages accepted by unambiguous nondeterministic Turing machines with a one-way read-only input tape and a reversal-bounded worktape are counting-regular. Many one-way acceptors are a special case of this model, such as reversal-bounded deterministic pushdown automata, reversal-bounded deterministic queue automata, and many others, and therefore all languages accepted by these models are counting-regular. 
This result is the best possible in the sense that the claim does not hold for either $2$-ambiguous $\PDA$'s, unambiguous $\PDA$'s with no reversal-bound, and other models. 

We also study closure properties of counting-regular languages, and we study decidability problems in regards to counting-regularity. For example, it is shown that the counting-regularity of even some restricted subclasses of $\PDA$'s is undecidable. Lastly, $k$-slender languages --- where there are at most $k$ words of any length ---
are also studied. Amongst other results, it is shown that it is decidable whether a language in any semilinear full trio is $k$-slender.
\end{abstract}

\begin{keyword}
counting functions \sep finite automata \sep full trios \sep context-free languages
\end{keyword}

\end{frontmatter}

\begin{abstract} 

\end{abstract}

\section {Introduction}

This work is concerned with the counting function $f_L(n)$ of a language $L$, equal to the number of strings of length $n$ in $L$.
We say that a language $L$ is {\it counting-regular} if there is a regular language $L'$ such that $f_L(n)$ = $f_{L'}(n)$.  If we can effectively find a deterministic finite automaton ($\DFA$) for $L'$ from a representation of $L$ using standard techniques, then we can efficiently compute $f_L(n)$ even if membership in $L$ can be difficult to answer. We say that $L$ is  {\it strongly counting-regular} if for any regular language $L_1$, $L \cap L_1$ is counting-regular. Being strongly counting-regular will allow us to calculate not only the number of strings of length $n$ in the language, but also the number of strings of length $n$ in $L$ that satisfy additional conditions, such as strings in which some symbol occurs an odd number of times, or strings that avoid some substring etc.
A language family $\LL$ is said to be (strongly) counting-regular if all $L\in \LL$ are (strongly) counting regular.


As a simple motivating example, consider the language $L$ = $\{ x \# y\ |\ x, y \in \{ a, b\}^*, |x|_a = |y|_b \}$. The number of strings of length $n$ in $L$ is $2^{n-1}$ (for $n \geq 1$) which follows after a little bit of work from the series sum $\sum_{i=0}^{n-1} \sum_{k=0}^{i} {i \choose k} {n-i-1 \choose k}$ for $f_L(n)$. (This result can also be established by exhibiting a bijective mapping between the set of strings of length $n$ in  $L$ and the set of binary strings of length $n-1$.)
Therefore this language is a counting-regular language while a simpler looking, and closely related language $L_{eq}$ = $\{ x\ |\ x \in \{ a, b\}^*, |x_a| = |x_b|\}$ is not counting-regular. Is there an `explanation' for the disparate behaviors of these two languages? 
In \cite{R}, it was determined that all languages accepted by one-way deterministic counter machines where the counter is reversal-bounded, are counting-regular. Indeed, $L$ above is in this family. Of
interest is the problem of determining other language families that are (strongly) counting-regular.
It can be observed that there are arbitrarily complex languages (e.g., languages that are not recursively enumerable) that are counting-regular. Therefore, to get some insight into counting-regularity, we should restrict the class of languages, such as to the class of context-free languages $\CFL$'s and some subclasses of $\CFL$'s.


Slender languages (those that are $k$-slender, for some $k$, i.e.\ $f_L(n) \leq k$ for all $n$) were studied in \cite{Salomaa,Ilie} in order to better understand the structure of slender context-free languages ($\CFL$'s) as well as to present decision algorithms for problems such as equivalence and containment when $\CFL$'s are restricted to slender languages. Similar results were shown for the more general class of languages generated by matrix grammars \cite{matrix}. It is natural to determine larger families of slender languages for which equivalence, containment etc. are decidable. 

We briefly state some of the motivations for the study presented in this paper. The counting functions of regular languages are well understood \cite{Berstel2} and there are very efficient algorithms to compute $f_L( n)$ when $L$ is regular. But for other classes, such algorithms are not known
although there are a few exceptions such as linearly constrained languages \cite{Massazza}. Hence, as outlined above, a strongly counting-regular language admits an efficient algorithm for counting the number of strings of length $n$ in it. 
In some applications, we are not necessarily interested in computing exactly the number of strings of length $n$ for a specific $n$, but in asymptotic growth rate of $f_L(n)$ as $n \rightarrow \infty$. In the works \cite{Eisman1,Cordy,Eisman2,R} etc., the issue of how well a non-regular language $L$ can be approximated by a $\DFA$ $M$ asymptotically was studied. The asymptotic approximation bound is the density of the language $L(M) \oplus L$, and so if $L$ is strongly counting-regular, we can compute the approximation bound. Another area of application is the static analysis of programs that involves computing the execution information rate \cite{Cui}. The information rate can be measured by modeling a control-flow of a program using a $\DFA$ or a $\DFA$ augmented with counters. Since the information rate of a regular language is efficiently computable, it is also efficiently computable for languages that are (effectively) counting-regular.

A main result shown here is that the languages
accepted by unambiguous nondeterministic Turing machines ($\NTM$s) with a one-way read-only input and a reversal-bounded worktape are strongly counting-regular. 
Many different families of languages accepted by one-way unambiguous and nondeterministic (and
one-way deterministic) machine models are contained in this family, and are therefore counting-regular.

Counting-regularity is a special case of the notion of a language being commutatively equivalent to a regular language, which requires not only a length preserving bijection to the regular language, but a bijection that preserves the Parikh map. It has been shown that every bounded semilinear language is commutatively equivalent to a regular language (proof split across three papers \cite{FlavioBoundedSemilinearpaper1,FlavioBoundedSemilinearpaper2,FlavioBoundedSemilinear}). Here, we
provide an alternate significantly shorter proof that they are all counting-regular. This is then used to show
that in every semilinear trio (a family closed under $\epsilon$-free homomorphism, inverse homomorphism, intersection with regular languages, and every language has the same Parikh map as some regular language), all bounded languages are counting-regular.
More generally, the counting functions of the bounded languages in every semilinear trio are exactly the same, no matter the family. There are many families for which this holds in the literature, such as Turing machines with a one-way read-only input tape plus a finite-crossing worktape \cite{Harju}, and one-way machines with $k$ pushdown stacks where the machines can write to any of the stacks but only pop from the first non-empty pushdown \cite{multipushdown}, uncontrolled finite-index indexed grammars \cite{LATA2017}, amongst many others discussed in \cite{CIAA2016,Harju}. 

The rest of the paper is organized as follows: in Section \ref{notation}, basic definitions, terminology and notation are presented. In Section \ref{sec:countingregular}, the languages accepted by the Turing machine model above are shown to be strongly counting-regular. We also present some specific language families that are not all counting-regular. 
Next, in Section \ref{bounded}, we consider bounded languages and show that all bounded languages in any semilinear
trio are counting-regular. 
In Section \ref{sec:closure}, closure properties of the class of counting-regular languages (and context-free counting-regular languages) are studied. Then, in Section \ref{sec:decidability}, we present some undecidability results regarding counting-regular languages. The main result is that it is undecidable, given a machine of any of the following types:
(i) $2$-ambiguous $\PDA$'s which make only one
reversal on the stack, (ii) nondeterministic
one-counter machines that make only one
reversal on the counter, and  (iii) $2$-ambiguous
nondeterministic one counter machines,
whether the language accepted is counting-regular.
Then, in Sections \ref{sec:slender1} and \ref{sec:slender2}, decidability properties are discussed for slender languages. It is shown that for any semilinear full trio (where certain closure properties are constructive) $\LL$, it is decidable whether a language $L \in \LL$ is $k$-slender, and containment and equality of two $k$-slender languages in $\LL$ are decidable. It is also shown that in every such family, every $k$-slender language in $\LL$ is counting-regular. 

\section{Basic Terminology and Notation}
\label{notation}

We assume an introductory knowledge in formal language and automata theory \cite{Hopcroft}, including
the definitions of finite automata, context-free languages, Turing machines etc. Next, some notations are given.
An {\em alphabet} is a finite set of symbols. A {\em word} $w$ over $\Sigma$ is any finite sequence of
symbols from $\Sigma$. Given an alphabet $\Sigma$, then $\Sigma^*$ is the
set of all words over $\Sigma$, including the empty word $\epsilon$, and $\Sigma^+$ is the set
of non-empty words over $\Sigma$. A {\em language} $L$
is any subset of $\Sigma^*$. The {\em complement} of a language $L \subseteq \Sigma^*$ with respect to $\Sigma$ is
$\overline{L} = \Sigma^* - L$. Given a word $w \in \Sigma^*$, $w^R$ is the reverse of $w$,
 $w[i]$ is the $i$'th character of $w$, and $|w|$ is the length of $w$.
Given an alphabet $\Sigma = \{a_1, \ldots, a_m\}$ and $a \in \Sigma$, $|w|_a$ is the number of $a$'s in $w$. The {\em Parikh map} of $w$ is
$\psi(w) = ( |w|_{a_1}, \ldots, |w|_{a_m})$, which is extended to the
Parikh map of a language $L$, $\psi(L) = \{\psi(w) \mid w \in L\}$.
Also, $\alp(w) = \{ a \in \Sigma \mid |w|_a>0\}$. 
Given languages $L_1,L_2$, the {\em left quotient} of $L_2$ by $L_1$, $L_1^{-1}L_2 = \{ y \mid xy \in L_2, x \in L_1\}$, and the {\em right quotient} of $L_1$ by $L_2$ is $L_1 L_2^{-1} = \{x \mid xy \in L_1, y \in L_2\}$.

For a language $L$, let $f_L(n)$ be the number of strings of length $n$ in $L$. 
A language $L$ is called {\it counting-regular} if there exists a regular language $L'$ such that for all integers $n \geq 0$, $f_L(n)$ = $f_{L'}(n)$. Furthermore, $L$ is called {\it strongly counting-regular} if, for any regular language $L_1$, $L \cap L_1$ is counting-regular.
Let $k \ge 1$.  
A language $L$ is $k$-slender if $ f_L(n) \le  k$ for all $n$, and $L$ is thin if is is $1$-slender. Furthermore, $L$ is slender if it is $k$-slender for some $k$.

A language $L\subseteq \Sigma^*$ is {\em bounded} if there exist (not necessarily distinct)
words $w_1, \ldots, w_k \in \Sigma^+$ such that $L \subseteq w_1^* \cdots w_k^*$;
it is also {\em letter-bounded} if each of $w_1, \ldots, w_k$ are letters. 

Let $\mathbb{N}$ be the set of positive integers and $\mathbb{N}_0 = \mathbb{N} \cup \{0\}$. 
A {\em linear set} is a set
$Q \subseteq \mathbb{N}_0^m$ if there exist $\vec{v_0},
\vec{v_1}, \ldots, \vec{v_n}$ such that
$Q = \{\vec{v_0} + i_1 \vec{v_1} + \cdots + i_n \vec{v_n} \mid
i_1, \ldots, i_n \in \mathbb{N}_0\}$. The vector
$\vec{v_0}$ is called the {\em constant}, and 
$\vec{v_1}, \ldots, \vec{v_n}$ are called the {\em periods}. We also say that $Q$ is the
linear set generated by constant $\vec{v_0}$ and periods $\vec{v_1}, \ldots, \vec{v_n}$.
A linear set is called {\em simple} if the periods form a basis.
A {\em semilinear set} is a finite union of linear sets.
And, a semilinear set is {\em semi-simple} if it is the finite disjoint union of simple sets
\cite{Flavio,Sakarovitch}.

A language $L \subseteq \Sigma^*$ is {\em semilinear} if $\psi(L)$ is a semilinear set.
Equivalently, a language $L$ is semilinear if and only if there is a regular language
$L'$ with the same Parikh map (they have the same commutative closure) \cite{harrison1978}.
The {\em length set} of a language $L$ is the set $\{n \mid w \in L, n = |w|\}$.
A language $L$ is {\em length-semilinear} if the {\em length set} is a semilinear set;
i.e.\ after mapping all letters of $L$ onto one letter, the Parikh map is semilinear, which is equivalent to it being regular.

A language $L \subseteq \Sigma^+$ is a {\em code} if $x_1 \cdots x_n = y_1 \cdots y_m, x_i, y_j \in L$,
implies $n=m$ and $x_i = y_j$, for $i$, $1 \leq i \leq n$. Also, $L$ is a prefix code if
$L \cap L \Sigma^+ = \emptyset$, and a suffix code if $L \cap \Sigma^+ L = \emptyset$. 
See \cite{CodesHandbook} for background on coding theory.

A language family $\LL$ is said to be {\em semilinear} if all $L \in \LL$ are semilinear. It is said that language family $\LL$ is a {\em trio}
if $\LL$ is closed under inverse homomorphism, $\epsilon$-free homomorphism, and intersection with regular languages. 
In addition, $\LL$ is a full trio if it is a trio closed under homomorphism; and a full AFL is a full trio closed
under union, concatenation, and Kleene-*.
Many well-known families form trios, such as each family of the Chomsky hierarchy \cite{Hopcroft}.
The theory of these types of families is explored in \cite{G75}.
When discussing a language family that has certain properties, such as a semilinear trio, we say that the family has {\em all properties effective} if all these properties provide effective constructions. For semilinearity, this means that 
there is an effective construction to construct the constant and periods from each linear set making up the semilinear set.

A pushdown automaton $M$ is $t$-reversal-bounded if $M$ makes at most
$t$ changes between non-decreasing and non-increasing the size of its pushdown on every input, and it is reversal-bounded if it is $t$-reversal-bounded for some $t$. A pushdown automaton $M$ is unambiguous if, for all $w\in \Sigma^*$, there is at most one accepting computation of $w$ by $M$. More generally, $M$ is $k$-ambiguous if there are at most $k$ accepting computations of $w$. 

Let $\DPDA$ ($\PDA$) denote the class of deterministic (nondeterministic) pushdown automata (and languages). We also use $\CFL = \PDA$, the family of context-free languages.

We make use of one particular family of languages, which we will only describe intuitively (see \cite{ibarra1978} for formal details). Consider a nondeterministic machine with a one-way input and $k$ pushdowns, where each pushdown only has a single symbol plus a bottom-of-stack marker. Essentially, each pushdown operates like a counter, where each counter contains some non-negative integer, and machines can add or subtract one, and test for emptiness or non-emptiness of each counter. 
When $k = 1$, we call these nondeterministic (and deterministic) one counter machines. 
Although such a machine with two counters has the same power as a Turing machine \cite{Hopcroft}, if the counters are restricted, then the machine can have positive decidability properties. Let $\NCM(k,t)$ be the family of $k$ counter machines where the counters are $t$-reversal-bounded counters, and let $\DCM(k,t)$ be the deterministic subset of these machines. Also, let $\NCM$ be $\bigcup_{k,t\geq 1} \NCM(k,t)$ and $\DCM$ be $\bigcup_{k,t \geq 1} \DCM(k,t)$.  The class of $\PDA$'s or $\DPDA$'s augmented with reversal-bounded counter machines is denoted by $\NPCM$ or $\DPCM$ \cite{ibarra1978}.
It is known that $\NCM$ has a decidable emptiness problem, and $\DCM$ also has a decidable containment problem, with both being closed under intersection \cite{ibarra1978}. Furthermore, $\NCM$ 
and $\NPCM$ are semilinear trios.

\section{Counting-Regular Languages}
\label{sec:countingregular}

Obviously every regular language is counting-regular, and so are many non-regular languages. For example, $L_{Sq}$ = $\{ w w \mid w \in \{a, b\}^*\}$ is counting-regular since $L'$ = $(a(a+b))^*$ has the same number of strings of length $n$ as $L_{Sq}$ for all $n$. It is a simple exercise to show that $L_{Sq}$ is actually strongly counting-regular.
It is also easy to exhibit languages that are not counting-regular, e.g., $L_{bal}$ = $\{ w \mid w$ is a balanced parentheses string$\}$ over alphabet $\{ [, ]\}$. The reason is that there is no regular language $L$ such that the number of strings of length $2n$ in $L$ is the $n$'th Catalan number, as will be seen from the characterization theorem stated below.

Our goal in this section is to explore general families of languages that are counting-regular. 
We briefly note the following:
\begin{theorem}
Any family $\LL$ that contains some non-length-semilinear language $L$ is not counting regular.
\end{theorem}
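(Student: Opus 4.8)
The plan is to reduce the family-level statement to a single language and prove the contrapositive there: I would show that \emph{every} counting-regular language is length-semilinear. Since the hypothesis provides a language $L \in \LL$ that is not length-semilinear, this forces $L$ to fail to be counting-regular, and hence $\LL$ is not counting-regular by the definition that a family is counting-regular only if all its members are.

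The central observation I would make is that the length set of a language is recoverable from its counting function: for any language $L$, the length set $\{n \mid w \in L, n = |w|\}$ is exactly $\{n \mid f_L(n) > 0\}$. Consequently, if $L$ is counting-regular, witnessed by a regular language $L'$ with $f_L(n) = f_{L'}(n)$ for all $n \geq 0$, then $L$ and $L'$ have identical length sets, because $f_L(n) > 0$ if and only if $f_{L'}(n) > 0$.

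Next I would invoke the standard fact that the length set of any regular language is semilinear. Concretely, applying the homomorphism that maps every letter to a single fixed letter to $L'$ yields a regular unary language, and unary regular languages are precisely the ultimately periodic (equivalently semilinear) subsets of $\natzero$; this is exactly the equivalence already recorded in the definition of length-semilinear. Combining the two steps, the length set of $L$ equals the (semilinear) length set of $L'$, so $L$ is length-semilinear, contradicting the choice of $L$.

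There is essentially no deep technical obstacle here; the content of the argument is simply to notice that counting-regularity preserves not merely the values of $f_L$ but in particular the support $\{n \mid f_L(n) > 0\}$, which coincides with the length set. The only points requiring care are to state correctly that equal counting functions force equal length sets (the ``$>0$'' equivalence) and to appeal to closure of the regular languages under the letter-identifying homomorphism so that their length sets are guaranteed to be semilinear.
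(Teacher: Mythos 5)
Your argument is correct and is essentially the paper's own proof: both identify the length set with the support $\{n \mid f_L(n) > 0\}$ of the counting function, note that equal counting functions force equal supports, and use that every regular language is length-semilinear. You have simply spelled out the details more explicitly than the paper does.
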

\begin{proof}
Given such an $L$, then examine the set of all $n$ with $f_L(n) >0$. But, as every regular language $R$
is length-semilinear, the set of all $n$ with $f_R(n) >0$ must be different.
\qed \end{proof}
Thus, it is immediate that e.g.\ any family that contains some non-semilinear unary language
must not be counting-regular. This includes such families as those accepted by checking stack
automata \cite{CheckingStack}, and many others. Of interest are exactly what families of length-semilinear languages
are counting-regular. We will investigate these questions here.

The following result is due to Berstel \cite{Berstel2}.

\begin{theorem}
\label{regularcharacterization}
Let $L$ be a regular language and let $f_L(n)$ denote the number of strings of length $n$. Then, one of the following holds:
\begin{enumerate}
\item[(i)] $f_L(n)$ is bounded by a constant $c$.

\item[(ii)] There is an integer $k > 0$ and a rational $c > 0$ such that $limsup_{n \rightarrow \infty} {{f_L(n)} \over {n^k}} = c$.

\item[(iii)] There exists an integer $k \geq 0$ and an algebraic number $\alpha$ such that \\
$limsup_{n \rightarrow \infty}  {{f_L(n)} \over {\alpha^n n^k}} =  c$ (where $c \neq 0$ is rational).
\end{enumerate}
\end{theorem}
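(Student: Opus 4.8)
The plan is to reduce the problem to the asymptotics of a linear recurrence and then analyse it spectrally. First I would fix a deterministic finite automaton $M = (Q,\Sigma,\delta,q_0,F)$ for $L$ and form the nonnegative integer matrix $A$ of size $|Q| \times |Q|$ whose $(p,q)$ entry counts the letters $a \in \Sigma$ with $\delta(p,a) = q$. Then $f_L(n) = \sum_{q \in F} (A^n)_{q_0,q}$, so the generating function $\sum_{n\ge 0} f_L(n)\,x^n$ is a rational function with integer coefficients, and $f_L$ satisfies a linear recurrence with integer coefficients whose characteristic roots are among the eigenvalues of $A$.

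From the theory of such ($C$-finite) sequences, $f_L(n) = \sum_j p_j(n)\,\lambda_j^{\,n}$, where the $\lambda_j$ are the distinct nonzero eigenvalues of $A$ --- hence algebraic numbers, being roots of an integer polynomial --- and $\deg p_j + 1$ equals the size of the largest Jordan block of $A$ at $\lambda_j$. Since $A$ is nonnegative, Perron--Frobenius guarantees that its spectral radius $\alpha$ is itself a nonnegative real eigenvalue, hence a nonnegative algebraic number; and because every directed cycle in $M$ carries weight at least one, $\alpha \in \{0\} \cup [1,\infty)$. Let $k$ be the largest degree of $p_j$ among those $j$ with $|\lambda_j| = \alpha$. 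Dividing through, $f_L(n)/(\alpha^n n^k)$ equals a function $g(n)$ plus a term tending to $0$, where $g$ is assembled from the phase factors $(\lambda_j/\alpha)^n$ coming from the eigenvalues of maximal modulus.

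The three cases then fall out according to $\alpha$: when $\alpha = 0$, or $\alpha = 1$ with $k=0$, the sequence is eventually periodic and bounded, giving (i); when $\alpha = 1$ and $k \ge 1$ it grows like $n^k$, giving (ii); and when $\alpha > 1$ it grows like $\alpha^n n^k$, giving (iii). The main obstacle is controlling the peripheral spectrum, i.e.\ the eigenvalues with $|\lambda_j| = \alpha$ other than $\alpha$ itself; these are exactly what prevent the ordinary limit from existing and force the use of $\limsup$. The key fact is that each such $\lambda_j/\alpha$ is a root of unity: for $\alpha = 1$ this is Kronecker's theorem (an algebraic integer all of whose conjugates have modulus at most $1$ is zero or a root of unity), and in general it follows from the cyclic structure that Perron--Frobenius imposes on the maximal-modulus eigenvalues of a nonnegative matrix. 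Consequently $g$ is periodic, say of period $p$, and $c = \limsup_{n\to\infty} f_L(n)/(\alpha^n n^k)$ is the maximum of $g$ over one period. The final and most delicate step is to pin down the arithmetic nature of $c$: restricting $n$ to a fixed residue class modulo $p$ freezes the phases, so in the polynomial case (ii) $f_L$ agrees on each class with an integer-valued polynomial of degree $k$, whose leading coefficient --- and hence the maximum $c$ over classes --- is rational, while the exponential case (iii) is handled by the corresponding residue computation at the dominant pole $x = 1/\alpha$ of the generating function.
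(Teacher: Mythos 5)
First, a point of comparison: the paper does not prove this statement at all --- it is quoted as a known theorem of Berstel with a citation in place of a proof --- so your proposal can only be judged on its own terms. The architecture you chose (transfer matrix of a DFA, exponential--polynomial expansion of a $C$-finite sequence, Perron--Frobenius for the peripheral spectrum, Kronecker's theorem when $\alpha=1$) is the standard route and is essentially sound for cases (i) and (ii). Even there two repairs are needed: you must restrict $A$ to the accessible and co-accessible states before identifying $\alpha$ with the spectral radius, since a dead sink state with self-loops on every letter inflates $\rho(A)$ to $|\Sigma|$ without affecting $f_L$, after which your $\limsup$ would be $0$ rather than a nonzero $c$; and $\deg p_j + 1$ is only bounded above by the largest Jordan block size at $\lambda_j$ (the vectors $u,v$ can kill the top of a block), so $k$ must be defined from the terms that actually survive in $uA^nv$. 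For reducible $A$ you should also say explicitly that the peripheral-spectrum claim follows from the Frobenius normal form (every eigenvalue of modulus $\rho(A)$ lies in an irreducible diagonal block of maximal spectral radius), or else invoke Berstel's theorem on the dominant poles of $\mathbb{N}$-rational series directly.

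The genuine gap is your last sentence, on case (iii). Saying that the rationality of $c$ ``is handled by the corresponding residue computation at the dominant pole'' is not an argument, and in fact the claim cannot be completed for the statement as literally written: take $L$ to be the set of binary words avoiding the factor $11$. Then $f_L(n)=F_{n+2}$, the growth rate is $\alpha=\varphi=(1+\sqrt5)/2$, $k=0$, and $\lim_{n\to\infty} f_L(n)/\varphi^n=\varphi^2/\sqrt5=(5+3\sqrt5)/10$, which is irrational. The constant in case (iii) is in general a positive \emph{algebraic} number. The residue-class argument that yields rationality in case (ii) works only because there Kronecker's theorem forces every contributing eigenvalue to be a root of unity, so that on each arithmetic progression $f_L$ agrees with an integer-valued (hence rational-coefficient) polynomial; once $\alpha>1$ the leading coefficients live in $\mathbb{Q}(\alpha,\zeta)$ and do not descend to $\mathbb{Q}$. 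You should have flagged this rather than asserted it. (For the only way the theorem is used later in the paper --- ruling out densities such as $2^n/\sqrt n$ because the exponent of $n$ is not an integer --- your proof does establish what is needed.)
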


We also need the following theorem due to Soittola \cite{Berstel}. We begin with the following definitions. 
A sequence $s$ = $\{s_n\}$, $n \geq 0$, is said to be the {\it merge} of the sequences $\{s^{(0)}\}, \ldots , \{s^{(p-1)}\}$, where $p$ is a positive integer, if 
$s_n^{(i)}$ = $s_{i+np}$ for $0 \leq i \leq p-1$. A sequence $\{s_n\}$ is said to be {\it regular} if there exists a regular language $L$ such that $f_L(n)$ = $s_n$ for all $n$.

Next we define a $\mathbb{Z}$-rational sequence as follows: A sequence $\{s_n\}$, $n \geq 0$, is $\mathbb{Z}$-rational if there is a matrix $M$ of order $d \times d$, a row vector $u$ of order $1 \times d$, and a column vector $v$ of order $d \times 1$ such that $s_n$ = $u \ M^n v$. All the entries in $M$, $u$ and $v$ are over $\mathbb{Z}$.
A $\mathbb{Z}$-rational sequence is said to have a {\it dominating pole} if its generating function $s(z)$ = $\sum_{n=0}^{\infty} s_n z^n$ 
can be written as a rational function $s(z)$ = $p(z)/q(z)$ where $p$ and $q$ are relatively prime polynomials, and $q$ has a simple root $r$ such that 
$r' > r$ for any other root $r'$.

Soittola's theorem \cite{Berstel} can be stated as follows.

\begin{theorem}
\label{Soittola}
A $\mathbb{Z}$-rational sequence  with non-negative terms is regular if and only if it is the merge of $\mathbb{Z}$-rational sequences with a 
dominating pole.
\end{theorem}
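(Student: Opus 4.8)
The plan is to prove both implications of the equivalence, and the first observation to record is how the \emph{merge} operation interacts with the location of poles. Since $\{s_n\}$ has non-negative terms, a Pringsheim-type argument applies to its generating function $s(z) = \sum_{n\geq 0} s_n z^n$: the singularity of smallest modulus must lie on the positive real axis. However, a $\mathbb{Z}$-rational sequence with non-negative terms need not \emph{itself} have a dominating pole, because cyclic (periodic) structure can place several poles of equal minimal modulus at $r$ times roots of unity. The role of the merge is precisely to remove this obstruction: if $\{s_n\}$ has period $p$ in the relevant sense, then each section $s^{(i)}_n = s_{i+np}$ is again $\mathbb{Z}$-rational (sectioning preserves rationality, as one sees by substituting the $p$-th roots of unity into $s(z)$), and the sectioned sequences become aperiodic, so that each acquires a unique simple pole of minimal modulus. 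Thus the entire theorem should be organized around reducing the periodic case, via merge, to the aperiodic single-dominating-pole case.

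For the forward direction (regular implies a merge of dominating-pole sequences), I would start from a deterministic finite automaton for $L$ with transition matrix $M$, so that $s_n = f_L(n) = u M^n v$ for suitable non-negative $u,v$, and analyze the spectrum of $M$ by Perron--Frobenius theory applied to its strongly connected components. In each primitive (aperiodic strongly connected) component the spectral radius is a simple positive real eigenvalue strictly dominating all others in modulus; the only way equal-modulus eigenvalues arise is through the cyclic structure of an imprimitive component, and these eigenvalues are $\lambda$ times $p$-th roots of unity where $p$ is the component's period. Taking $p$ to be the least common multiple of the component periods and sectioning $\{s_n\}$ modulo $p$ then collapses each such family to a single dominant eigenvalue, so every section is $\mathbb{Z}$-rational with a dominating pole, exhibiting $\{s_n\}$ as the required merge.

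For the converse (a merge of dominating-pole $\mathbb{Z}$-rational sequences is regular), two facts must be assembled. First, regular sequences are closed under merge: given regular languages for the sections one interleaves their automata with a mod-$p$ counter in the state, so it suffices to treat a single $\mathbb{Z}$-rational sequence with non-negative terms possessing a dominating simple pole $r$. For that case the goal is to construct an $\mathbb{N}$-rational representation, i.e.\ a weighted automaton over $\mathbb{N}$ (equivalently a non-negative integer matrix) realizing $\{s_n\}$; this is where the coefficients' growth $s_n \sim c\,\lambda^n n^{k}$ governed by the simple dominant pole, together with the strictly smaller modulus of all subdominant poles, is exploited. The construction proceeds through the closure of $\mathbb{N}$-rational series under sum, Cauchy product, and the star of proper series, writing $s(z)$ in a normal form built from blocks of the type $(1-\alpha z)^{-1}$ so that the dominant exponential term absorbs the oscillating subdominant contributions while keeping every coefficient non-negative.

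The main obstacle is exactly this last construction: translating the purely analytic hypothesis ``unique simple pole of minimal modulus'' into a combinatorial non-negative realization. One must show that the oscillating, possibly complex, subdominant poles can be dominated termwise by the leading $c\,\lambda^n n^k$ behavior and repackaged into $\mathbb{N}$-rational building blocks without ever producing a negative coefficient; controlling these estimates uniformly, and handling the polynomial factor $n^k$ coming from a higher-order growth regime, is the delicate technical heart of Soittola's argument and the step I expect to require the most care.
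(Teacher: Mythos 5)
First, note that the paper does not prove this statement at all: it is quoted verbatim as Soittola's theorem with a citation to Berstel(--Reutenauer), so there is no in-paper argument to compare yours against. Your proposal must therefore stand on its own, and as it stands it is a roadmap rather than a proof. The decisive gap is in the converse direction. After correctly reducing, via closure of regular sequences under merge, to a single $\mathbb{Z}$-rational sequence with non-negative terms and a dominating pole, you state that one must ``translate the purely analytic hypothesis into a combinatorial non-negative realization'' and then defer exactly that step as ``the delicate technical heart.'' But that step \emph{is} Soittola's theorem; everything surrounding it (Pringsheim, sectioning at roots of unity, interleaving automata with a mod-$p$ counter) is routine. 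A complete argument has to actually produce the $\mathbb{N}$-rational normal form --- in Soittola's proof this is done by an induction on the number of poles, peeling off the dominant term and showing that a suitable ``star height one'' expression $\frac{P(z)}{1-\lambda z - Q(z)}$ with non-negative $P,Q$ realizes the series, using quantitative domination of the subdominant terms by $c\lambda^n n^k$. None of these estimates or constructions appear in your proposal, so the hard implication is asserted, not proved. You also silently use the equivalence between $\mathbb{N}$-rational sequences and counting sequences of regular languages (passing from a non-negative integer representation $(u,M,v)$ to an actual regular language requires relabelling the edges of the associated multigraph over a larger alphabet); this deserves at least a sentence.

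There is also a correctness issue in your forward direction as it interacts with the paper's definition. The paper defines a dominating pole as a \emph{simple} root $r$ of $q$ with all other roots of strictly larger modulus. Your Perron--Frobenius argument collapses the equal-modulus eigenvalues of each imprimitive component by sectioning modulo the lcm of the periods, but it does not address algebraic multiplicity arising from \emph{chains} of strongly connected components sharing the same spectral radius. For example, $L=(a+b)^*c(a+b)^*$ has $f_L(n)=n2^{n-1}$ with generating function $z/(1-2z)^2$; every section of this sequence still has a double pole at its minimal-modulus singularity, so no merge decomposition produces sections with a \emph{simple} dominating root. Under the paper's literal definition your forward implication would fail on this example; the standard statement of Soittola's theorem requires only a unique positive pole of minimal modulus (not simplicity), and your proof should either work with that definition or explicitly handle the polynomial factor $n^k$ in the forward direction as well. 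As written, the proposal identifies the right architecture but omits the one argument that makes the theorem true and glosses over the multiplicity issue that makes the stated definition delicate.
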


We will also need the following theorem due to B\'{e}al and Perrin \cite{Beal}.
\begin{theorem}
\label{Beal}
A sequence  $s$ is the generating sequence of a regular language over a $k$-letter alphabet if and only if both the sequences $s$ = $\{s_n\}$, $n \geq 0$ 
and $t$ = $\{(k^n - s_n)\}$, $n \geq 0$ are regular.
\end{theorem}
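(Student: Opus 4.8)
The plan is to prove both directions separately, treating necessity as routine and concentrating the effort on sufficiency. For necessity, suppose $s = \{s_n\}$ is realized by a regular language $L \subseteq \Sigma^*$ with $|\Sigma| = k$, so that $s_n = f_L(n)$; then $s$ is regular by definition. Taking the complement $\overline{L} = \Sigma^* \setminus L$, which is regular because the regular languages are closed under complement, one obtains $f_{\overline{L}}(n) = |\Sigma^n| - |L \cap \Sigma^n| = k^n - s_n = t_n$, so $t$ is also regular. This direction needs nothing beyond closure under complement.

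For sufficiency, suppose both $s$ and $t = \{k^n - s_n\}$ are regular, so that $s_n + t_n = k^n$ for all $n$; the goal is to realize $s$ as $f_L$ for a regular $L$ over an alphabet of size exactly $k$. The natural target is a representation $s_n = u M^n v$ in which $M$ is the transition-counting matrix of a complete deterministic automaton on $k$ letters, i.e.\ a non-negative integer matrix whose rows each sum to exactly $k$ and which decomposes as $M = P_1 + \cdots + P_k$ with each $P_i$ a $0/1$ matrix having exactly one $1$ per row (the transition map of one letter), together with $u$ a start-state indicator and $v$ a final-state indicator. Once such a representation exists, the language $L$ it accepts satisfies $f_L = s$, and automatically $f_{\overline{L}}(n) = u M^n \mathbf{1} - u M^n v = k^n - s_n$, since $M\mathbf{1} = k\mathbf{1}$ and $u\mathbf{1} = 1$. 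Thus the entire problem reduces to producing one balanced, deterministic, out-degree-$k$ representation that simultaneously fits $s$ and $t$.

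To build this I would invoke the given characterizations. Since $s$ and $t$ are regular, each is a non-negative $\mathbb{Z}$-rational sequence, so by Theorem~\ref{Soittola} each is a merge of $\mathbb{Z}$-rational sequences with a dominating pole, while Theorem~\ref{regularcharacterization} pins down the admissible asymptotic growth; note that the dominating pole of $k^n = s_n + t_n$ is $1/k$. I would use the merge decomposition to pass to the residue classes $n \equiv i \pmod p$, handle each arithmetic-progression subsequence where a single dominating pole governs growth, and assemble automata class by class. The crux is then a state-splitting and balancing step: starting from arbitrary non-negative integer representations of $s$ and $t$, use the identity $s_n + t_n = k^n$ to refine the state set so that every state acquires out-degree exactly $k$, partitioning each state's $k$ outgoing transitions between the part of the automaton continuing toward accepted ($s$-counted) words and the part continuing toward rejected ($t$-counted) words, and finally regrouping these transitions into $k$ functional maps to obtain determinism.

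The main obstacle is exactly this balancing-and-determinization step. An arbitrary realization of $s$ need not have constant row sums, let alone row sums equal to $k$ or a decomposition into $k$ functional matrices, and it is here that the hypothesis on $t$ (rather than merely $0 \le s_n \le k^n$) is indispensable: it is what guarantees that the complementary transitions needed to fill each state up to out-degree $k$ are themselves realizable by a finite automaton. Controlling this uniformly across the residue classes produced by the merge, while preserving non-negativity and integrality of all matrix entries, is where the bulk of the technical work will lie; the dominating-pole and Perron--Frobenius information extracted from Theorems~\ref{regularcharacterization} and~\ref{Soittola} is what keeps the construction finite.
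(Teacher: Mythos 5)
Your necessity direction is complete and correct: realizing $s$ by a regular $L$ over a $k$-letter alphabet and passing to the complement over that \emph{same} alphabet immediately gives the regularity of $t$. Note, however, that the paper itself offers no proof of this statement at all --- it is imported from B\'eal and Perrin \cite{Beal} and used as a black box --- so the only meaningful comparison is against what a complete proof actually requires.

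For sufficiency there is a genuine gap: the entire content of the theorem is the step you defer. Everything before the final paragraph of your sketch is either a restatement of the goal (find a representation $s_n = u M^n v$ with $M$ a non-negative integer matrix of constant row sum $k$ splitting into $k$ functional $0/1$ matrices) or an observation that follows \emph{once that representation exists}. What is missing is the construction itself: how the purely numerical identity $s_n + t_n = k^n$, holding for two a priori unrelated automata over possibly much larger alphabets, is converted into a single complete deterministic $k$-ary automaton; how non-negativity and integrality of the matrix entries survive the balancing; and why the residue-class decomposition supplied by Theorem \ref{Soittola} can be reassembled without enlarging the alphabet beyond $k$. You correctly name this as the crux and correctly observe that the hypothesis on $t$ (rather than the weaker bound $0 \le s_n \le k^n$) is what must power it, but naming an obstacle is not overcoming it: this state-splitting and merging argument is the several-page technical core of B\'eal and Perrin's work, and ``the bulk of the technical work will lie here'' leaves the theorem unproved. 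As it stands the proposal establishes only the easy direction.
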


We now show the main result of this section. This result can be viewed as a strengthening
 of a result of Baron and Kuich \cite{Kuich} that $L(M)$ has a rational generating function if $M$ is an unambiguous finite-turn $\PDA$.

Here, an $\NTM$ is considered to have a one-way read-only input tape, plus a two-way read/write worktape that uses blank symbol $\blank$. Such a machine is said to be {\em reversal-bounded} if there is a bound on the number of changes in direction between moving left and right and vice versa on the worktape.
An $\NTM$ is said to be in {\em normal form} if,
whenever the worktape head moves left or right (or at the beginning of a computation) to
a cell $c$, then the next transition can change the worktape letter in cell $c$, but then the cell does not change again until after the worktape head moves.
This essentially means that if there is a sequence of `stay' transitions (that do not
move the read/write head) followed by a transition that moves, then only the first such transition can change the tape contents.
\begin{lemma}
Given an unambiguous reversal-bounded $\NTM$ $M$, there exists an unambiguous reversal-bounded
$\NTM$ $M'$ in normal form such that $L(M) = L(M')$.
\end{lemma}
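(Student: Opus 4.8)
The plan is to establish the normal form by transforming an arbitrary unambiguous reversal-bounded $\NTM$ $M$ into one where, after each head move, at most the very first transition may alter the current worktape cell. The key observation is that the concern is with sequences of ``stay'' transitions occurring at a fixed worktape cell $c$ between two consecutive head movements. During such a block, $M$ may overwrite $c$ repeatedly, passing through several intermediate symbols, but the only relevant information for the rest of the computation is the final symbol written in $c$ before the head departs. So I would have $M'$ guess, at the moment the head arrives at $c$ (nondeterministically and in the finite control), the symbol $g$ that $M$ will ultimately leave in $c$ when it next moves off. $M'$ writes $g$ immediately, then simulates the block of stay transitions of $M$ purely in its finite control, tracking the genuine contents $M$ believes is in $c$ as part of the state, without touching the tape again, and finally executes the head move exactly as $M$ does, verifying that the symbol $M$ intended to leave behind indeed equals the guessed $g$.

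The careful points are the bookkeeping and the preservation of both unambiguity and the reversal bound. First I would argue that the reversal bound is preserved: $M'$ makes a head move exactly when $M$ does, so the sequence of directional changes on the worktape is identical, and hence $M'$ is reversal-bounded with the same bound. Second, and this is where the real content lies, I would argue unambiguity is preserved. The forward direction is routine: any accepting computation of $M$ on an input $w$ induces exactly one accepting computation of $M'$, because the guessed symbol $g$ for each block is forced to be the unique final symbol $M$ leaves in $c$, and the in-state simulation of the stay block is deterministic given $M$'s moves. The subtle direction is that distinct accepting computations of $M'$ cannot collapse onto the same computation of $M$, nor can a single $M$-computation lift to two $M'$-computations: because the guess $g$ is verified against $M$'s actual behavior at the end of each block, any computation of $M'$ that reaches acceptance corresponds to a genuine accepting computation of $M$, and the guess is uniquely determined by that computation. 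Thus accepting computations of $M'$ are in bijection with those of $M$, so $M'$ inherits unambiguity from $M$.

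I expect the main obstacle to be the formal handling of the block of stay transitions in the finite control while simultaneously having already committed to the overwrite. The issue is that $M'$ must simulate $M$'s stay transitions \emph{as if} the cell still contained the sequence of intermediate values $M$ writes, even though physically $M'$ has already replaced the cell contents with the final guess $g$; so the state of $M'$ must carry the ``virtual'' current symbol at $c$ alongside the control state of $M$, and every stay transition of $M$ that depends on or rewrites the cell symbol must be matched by a transition of $M'$ that updates this virtual symbol but leaves the physical tape fixed. One must also confirm that the guessed $g$ is consistent with the initial behavior: at the instant the head arrives, the physical cell already holds $g$, yet $M$'s first stay transition must be interpreted relative to the symbol $M$ actually reads upon arrival (which could differ from $g$), so the virtual-symbol component must be initialized to $M$'s true arrival symbol rather than to $g$. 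Once this virtual-symbol invariant is set up correctly, the verification at block's end — checking that the virtual symbol $M'$ has tracked equals the physical $g$ it wrote — becomes a single consistency test, and the construction goes through with $L(M')=L(M)$.
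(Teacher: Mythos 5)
Your proposal is correct and follows essentially the same approach as the paper: guess the final cell value at the start of each block of stay transitions, write it with the first transition of the block, simulate the remaining stay transitions in the finite control while tracking the virtual cell contents, and verify the guess when the head finally moves, with unambiguity preserved because the guess is forced by the unique accepting computation of $M$. Your additional remarks on initializing the virtual symbol to the true arrival symbol are a sensible elaboration of the same construction rather than a different route.
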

\begin{proof}
Given $M$, an $\NTM$ $M'$ is constructed as follows:
After a transition that moves the read/write head (or at the first move of the computation),
instead of simulating a `stay' transition directly, $M'$ guesses the final value to be
written on the cell before the head moves, and writes it during the first `stay' transition.
$M'$ then continues to simulate the sequence of stay transitions without changing the
value in the cell but remembering it in the finite control. Then, during the last transition of this sequence
that moves the tape head, $M'$ verifies that it guessed the final value correctly.
Certainly, $M'$ is reversal-bounded if and only if $M$ is reversal-bounded. Further,
as $M$ is unambiguous, there is only one computation that is accepting on every word in $L(M)$. And, in $M'$ therefore, there can only be one value guessed on each sequence of `stay' transitions that leads to acceptance. Thus, $M'$ is unambiguous. 
\qed\end{proof}

It will be shown that every such $\NTM$ is counting regular.
\begin{theorem}
\label{main2}
Let $M$ be an unambiguous reversal-bounded $\NTM$ over a $k$ letter alphabet. Then $L(M)$ is strongly counting regular, where the regular language is over a $k+1$ letter alphabet. 
\end{theorem}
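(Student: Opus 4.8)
The plan is to generalize the pushdown construction (preserved in the commented-out proof of the earlier version of this theorem) to the reversal-bounded worktape setting, and then apply the alphabet-reduction argument via Soittola's and B\'eal--Perrin's theorems. First I would use the normal form from the preceding lemma, so that on each worktape cell the contents are written exactly once, during the first transition after the head arrives at the cell. The key structural observation is that reversal-boundedness forces the worktape head to sweep across a bounded number of contiguous intervals of cells: between two consecutive reversals, the head moves monotonically (all-left or all-right) across some segment of the tape. Because there are at most $t$ reversals, I would partition the worktape into a bounded number of \emph{passes}, and pair up the time intervals during which each cell is visited. Analogous to the $\code(w)$ construction for the $\PDA$, I would define a length-preserving encoding $\code(w)$ of the unique accepting computation on $w$: for each cell, I align the input symbols read during its successive visits (with a new symbol $e$ standing in for $\epsilon$-moves that do not advance the input head).

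The heart of the argument is building an $\NFA$ $M'$ that accepts $L' = \{\code(w) \mid w \in L(M)\}$ and checking that the map $w \mapsto \code(w)$ is a length-preserving bijection onto $L'$. As in the $\PDA$ case, $M'$ simulates $M$ on several ``tracks'' in parallel --- one track per active pass over the current region of the tape --- and carries a vector of $M$-states, one component per pass, so that it can verify globally that the symbol written on each cell during one pass matches the symbol read from that cell on the next pass. The reversal bound $t$ is exactly what guarantees the number of tracks (hence the number of state-components of $M'$) is bounded, so $M'$ is genuinely finite-state. Bijectivity in the forward direction follows because $M$ is unambiguous, so the accepting computation --- and hence $\code(w)$ --- is uniquely determined by $w$; in the reverse direction, from any accepting computation of $M'$ one reconstructs the unique input string by projecting each track onto its input-reading transitions and reassembling them according to the pass structure. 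This yields a regular language $L'$ with $f_{L'}(n) = f_{L(M)}(n)$ for all $n$, over an alphabet whose size is a fixed multiple of $k$.

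To obtain the sharp alphabet bound of $k+1$ I would invoke the three cited theorems exactly as in the commented-out $\PDA$ proof. Since $L'$ is regular, its counting sequence $a_n = f_{L(M)}(n)$ is a regular (hence $\mathbb{Z}$-rational) sequence; by Theorem~\ref{Soittola} its generating function $a(z)=p(z)/q(z)$ is a merge of $\mathbb{Z}$-rational sequences with a dominating pole. Setting $b_n = (k+1)^n - a_n$, the generating function of $(b_n)$ is $p(z)/((1-(k+1)z)\,q(z))$, which has the dominating pole $1/(k+1)$ and so is again regular by Theorem~\ref{Soittola}. Because both $(a_n)$ and $((k+1)^n - a_n)$ are regular sequences, Theorem~\ref{Beal} produces a regular language $L_1$ over a $(k+1)$-letter alphabet with $f_{L_1}(n) = a_n = f_{L(M)}(n)$ for all $n$. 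Finally, the ``strongly'' part comes for free: for any $\DFA$ $M_1$, intersecting $M$ with $M_1$ by the standard product construction preserves both reversal-boundedness and unambiguity, so the same argument applies to $L(M)\cap L(M_1)$.

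The main obstacle I anticipate is the bookkeeping for the worktape version of the pass-alignment: unlike a pushdown, where pushes and the matching pops occur in a clean last-in-first-out order, a two-way worktape head can revisit an interval of cells several times and the intervals covered by different passes overlap in complicated ways. Making precise which time intervals to pair for each cell --- so that the ``write once, read on the next visit'' invariant of the normal form lets $M'$ locally verify consistency --- and confirming that a bounded number of state-components genuinely suffices, is the delicate part. The $\PDA$ figure with segments $(w_1,w_6^R),(w_2,w_3^R),(w_4,w_5^R)$ illustrates the intended structure, but for the worktape I would need to argue carefully that each cell is touched by at most a bounded number of passes and that the induction showing $M'$ reaches the paired states after reading each block of $\code(w)$ goes through for every cell of the bounded-size visited region.
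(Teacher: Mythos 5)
Your plan matches the paper's proof in every essential respect: cell-wise alignment of the $t+1$ sweeps enabled by the normal form, a bounded number of tracks justified by the reversal bound, bijectivity of the encoding from unambiguity, closure under intersection with regular languages for the ``strongly'' part, and the identical Soittola/B\'{e}al--Perrin argument for the $(k+1)$-letter alphabet. The one bookkeeping difficulty you flag at the end is resolved in the paper by using a \emph{two-way} NFA over a word having one position per worktape cell with $t+1$ tracks (one per sweep, plus extra symbols for stay-transitions), which simply makes $t+1$ physical passes to check cross-sweep consistency of what is written and later read in each cell, followed by a padding homomorphism with a fresh symbol $\$$ to make the encoding length-preserving; since $2\NFA$s accept only regular languages, regularity of the encoded language still follows.
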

\begin{proof}
First we note that it is enough to show that $L(M)$ is counting regular, as the class of languages accepted by unambiguous reversal-bounded $\NTM$s are closed under
intersection with regular languages.

Let $M= (Q,\Sigma,\Gamma,\delta,q_0,F)$ be an unambiguous $\NTM$ that is $t$-reversal-bounded such that
$M$ is in normal form. Also, assume without loss of generality that $t$ is odd.

Intuitively, the construction resembles the construction that the store languages (the language
of all contents of the worktape that can appear in an accepting computation) of every such
$\NTM$ is a regular language \cite{StoreLanguages}. 
A $2\NFA \ M'$ is constructed that has $t+1$ ``tracks'', and it uses the first track for simulating $M$
before the first reversal, the second track for simulation of $M$ between the first and the second reversal, etc. Thus the input to $M'$ is the set of strings in which the first track contains an input string $x$ of $M$, and the other tracks are annotated with the contents of the read-write tape during moves of $M$ between successive reversals. Formal details are given below.

A $2\NFA$ $M'= (Q',\Sigma',\delta',q_0',F')$ is constructed as follows:
Let $C = [(Q \times (\Sigma \cup \{\epsilon\})\times Q \times \Gamma) \cup \{\blank\}]^{t+1}$
(the $t+1$ tracks; each track is either a blank, or some tuple in $Q \times (\Sigma \cup \{\epsilon\}) \times Q \times \Gamma$). Also, let $C_i$ have $t+1$ tracks where the $i$'th track, for $1 \leq i \leq t+1$, contains an element from $ Q \times (\Sigma \cup \{\epsilon\}) \times Q \times \Gamma$, and
all other tracks contain new symbol $\#$.
Let $\Sigma' = C \cup C_1 \cup \cdots \cup C_{t+1}$.
To simulate moves between the $(i-1)$st reversal and the $i$'th reversal, $M'$ will
examine track $i$ of a letter of $C$ to simulate the first transition after the tape head
moves to a different cell (or at the first step of a computation), and track $i$ of letters
of $C_i$ to simulate any stay transitions that occur before the tape head moves again, followed by the transition that moves.

Let $X = C_{t+1}^* \cdots C_4^* C_2^* C C_1^* C_3^* \cdots C_t^*$. Let $h_i$ be a homomorphism that maps each string in $(\Sigma')^*$ to the $i$'th track for symbols in $C \cup C_i$, and erases all symbols of $C_j, j \neq i$. Also, $\bar{h_i}$ is a homomorphism that maps each string in $(\Sigma')^*$
to the $i$'th track if it is not $\blank$, and $\epsilon$ otherwise, for symbols in $C \cup C_i$, and erases all symbols of $C_j$, $j \neq i$.
Then $M'$ does the following:
\begin{enumerate}
\item $M'$ verifies that the input $w$ is in $X^*$, and that no letter of  $[\blank]^{t+1}$ is used in $w$.
\item $M'$ verifies that for each $i$, $h_i(w) \in \blank^* (Q \times (\Sigma \cup \{\epsilon\}) \times Q \times \Gamma)^* \blank^*$, so blanks can only occur at the ends.
\item $M'$ verifies that $w$ represents an accepting computation of $M$ as follows: $M'$ goes to the first symbol of $C$ with a non-blank in the first track. Say 
$\bar{h_1}(w) = (p_1,a_1,p_1',d_1) \cdots (p_m,a_m,p_m',d_m), m \geq 1$. For $j$ from $1$ to $m$,
we say that $j$ is from $C$ if $(p_j,a_j,p_j',d_j)$ is from a symbol of $C$ and not $C_1$, and we say $j$ is from $C_1$ otherwise. It verifies from 
left-to-right on $w$ that for each $j$ from $1$ to $m$, there is a transition of $M$ that switches from $p_j$ to $p_j'$ while reading $a_j \in \Sigma \cup \{\epsilon\}$ as input on worktape letter $\blank$ if $j$ is in $C$, and $d_{j-1}$ otherwise, replacing it with $d_j$, that:
\begin{itemize}
\item moves right on the worktape, if $j<m$ and $j+1$ is from $C$,
\item `stay's on the worktape, if $j<m$ and $j+1$ is from $C_1$,
\item moves left on the worktape, if $j = m$.
\end{itemize}
$M'$ also verifies that $p_1 = q_0$, and that for each $1 \leq j <m$, $p_j' = p_{j+1}$, 
At the symbol of $w$ where $(p_m,a_m,p_m',d_m)$ occurs (at a point of reversal), $M'$ verifies
that all symbols of $C$ to the right in track 1 and track 2 are blanks as well as there being no symbols from
$C_1 \cup C_2$. Then $M'$ returns to the point of reversal, remembers $p_m'$ in the finite control,
and then returns to the rightmost symbol of $C$ in $h_1(w)$. From this symbol to the right, all
values in $C$ have blanks in the second track. It is verified that the second track has a non-blank
and $p_m'$ for the first state. Then on $\bar{h_2}(w)$ from right-to-left (using symbols of $C_2$ instead of
$C_1$ that occur to the left, read after a symbol from $C$), $M'$ continues the simulation in a similar
fashion until the second reversal, but $M'$ instead verifies that for all symbols of $C$ read, it represents a transition that rewrites the worktape symbol on the first track with the worktape symbol on the second track. 
Since $M$ only changes values in tape cells the first move after arriving at a cell (corresponding to letters
of $C$), after each reversal, $M'$ can ``lookup the value in the cell'' by using the previous track.
$M'$ continues this process until up to after reversal $t$.
Then $M'$ accepts if this process results in a final state.
\end{enumerate}

It is evident that $w$ describes  both the tape contents and also encodes the input for every accepting computation. Since each $x \in L(M)$ has exactly one accepting computation, and by the ordering of $X$,
there must be a unique $w \in L(M')$ with $x$ as the input word. Call this unique word $w$, $\code(x)$.  Furthermore, consider homomorphisms
$g_1,\ldots, g_{t+1}$ such that $g_i$ maps each symbol of $C \cup C_i$ to the letter of $\Sigma \cup \{\epsilon\}$
in track $i$, and erases all other symbols.
Given every word $w \in L(M')$, the word $g_1(w) g_2(w)^R \cdots g_t(w) g_{t+1}(w)^R$ gives
this word $x$. Hence, there is a bijection between
$x \in L(M)$ and $\code(x) \in L(M')$.

However, the length of $\code(x)$ can be different than that of $x$. Every letter in $C_1 \cup \cdots  \cup C_{t+1}$ encodes either one or zero letters of $\Sigma$ (depending on whether it contains a letter from $\Sigma$ or $\epsilon$).
For every letter of $C$, it encodes between $0$ and $t+1$ letters of $\Sigma$ (depending on the 
number of tracks encoding letters or $\epsilon$). Let $h$ be a homomorphism from
$(\Sigma')^*$ to $(\Sigma' \cup \{\$\})^*$ (where $\$$ is a new symbol) that acts as follows: $h$ fixes
all letters of $C_1 \cup \cdots \cup C_{t+1}$ that have letters of $\Sigma$, and erases those in $\epsilon$; it erases all symbols of $C$ where every non-blank track encodes a transition on $\epsilon$ (call these letters $C_{\epsilon}$), and it maps all other letters of $C$ (call the set of these $\bar{C}$) $c \in \bar{C}$ to $c \$^{l-1}$, where $c$ has $l$
tracks encoding transitions on $\Sigma$ ($l \geq 1$, otherwise $c$ would be in $C_{\epsilon}$).
Define $\overline{\code}(x) = h(\code(x))$.

To see that there is a bijection from $\code(x)$ to $\overline{\code}(x)$, first notice that $h$ is a function. In the other
direction, the number of $\$$'s is determined by the preceding letter of $C$; and even though letters
of $\Sigma'$ are erased, as $\overline{\code}(x)$ still encodes all letters of
$\Sigma$ read (i.e.\ $x = g_1(\overline{\code}(x)) g_2(\overline{\code}(x))^R \cdots g_{t+1}(\overline{\code}(x))^R$, where each $g_i$ is
extended to erase $\$$), and therefore this coding is still unique.

In addition, let $R= \{\overline{\code}(x) \mid x \in L(M)\}$ which is regular since languages accepted by 
$2\NFA$s are regular, and regular languages are closed under homomorphism.
Lastly, for each $x \in L(M)$, $|\overline{\code}(x)| = |x|$ by the reasoning above.

Surprisingly, we can reduce the size of the alphabet to $s+1$. More precisely, let $M$ be an unambiguous $\NTM$ as above over $\Sigma$ with $s = |\Sigma|$. We will show that there is a regular language $L_1$ over an alphabet of size $s+1$ such that $f_{L(M)}(n)$ = $f_{L_1}(n)$ for all $n$. As we have shown above, there is a regular language $L(M_1)$ such that $f_{L(M_1)}(n)$ = $f_{L(M)}(n)$ for all $n$. Thus, by Theorem \ref{Soittola},  the generating function $a(z)$ associated with the sequence $(f_{L(M)}(n))$, $n \geq 0$, is a rational function $p(z)/q(z)$ that satisfies the conditions of Theorem \ref{Soittola}. Let $b_n$ be defined as $b_n$ = $(k+1)^n - a_n$. The generating function $b(z)$ for the sequence $(b_n)$, $n \geq 0$, is ${p(z)} \over {(1-(k+1)z)\ q(z)}$. This is a rational function with dominating pole $1 \over{k+1}$ and hence it satisfies Theorem \ref{Soittola}. Since both $(a_n)$ and $(b_n)$ = $((k+1)^n - a_n)$ are regular sequences, by Theorem \ref{Beal}, there is a regular language $L_1$ over a $k+1$ letter alphabet such that $f_{L_1}(n)$ = $f_{L(M)}(n)$ for all $n$.
\qed
\end{proof}

We conjecture that the result also holds for $\NTM$s with finite-crossing (where there is a bound on the number of times the boundary between any two cells are crossed) worktapes.

Many different machine models can be simulated by unambiguous $\NTM$s with a one-way read-only
input tape and a reversal-bounded worktape. These include unambiguous reversal-bounded queue automata where
the store is a queue with a bound on the number of switches between enqueueing an dequeueing,
and also unambiguous reversal-bounded $k$-flip $\NPDA$s, which are like $\NPDA$s with the additional ability to flip the
stores up to $k$ times. As deterministic models are all unambiguous, this applies to deterministic models as well. We leave the details of the simulation to the reader.

\begin{corollary}
\label{cortomain}
Let $M$ be a machine accepted by any of the machine models below, with a one-way input:
\begin{itemize}
\item unambiguous nondeterministic reversal-bounded $\NPDA$s,
\item reversal-bounded $\DPDA$s,
\item unambiguous nondeterministic reversal-bounded queue automata,
\item deterministic reversal-bounded queue automata,
\item unambiguous nondeterministic reversal-bounded $k$-flip $\NPDA$s,
\item reversal-bounded $k$-flip $\DPDA$s,
\item $\DTM$s with a reversal-bounded worktape.
\end{itemize}
Then the language $L(M)$ is strongly counting-regular. 
\end{corollary}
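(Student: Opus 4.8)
The plan is to derive Corollary~\ref{cortomain} as a direct consequence of Theorem~\ref{main2}, so the only real work is to verify that each listed machine model can be simulated by an unambiguous reversal-bounded $\NTM$ with a one-way read-only input and a two-way read/write worktape, preserving the accepted language exactly. Once such a simulation is exhibited for each model, Theorem~\ref{main2} immediately yields strong counting-regularity (and, as a bonus, the regular language can be taken over a $(k+1)$-letter alphabet when the original alphabet has $k$ letters). First I would observe the general principle that makes the deterministic cases free: any deterministic machine is a fortiori unambiguous, since it has at most one computation (accepting or otherwise) on each input. Hence the deterministic entries (reversal-bounded $\DPDA$s, deterministic reversal-bounded queue automata, reversal-bounded $k$-flip $\DPDA$s, and $\DTM$s with a reversal-bounded worktape) reduce to their unambiguous nondeterministic counterparts, and the $\DTM$ case is already literally an $\NTM$ of the required form.

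Next I would handle each nondeterministic model by describing how its store is laid out on the two-way worktape in a way that turns store-reversals into worktape-head reversals. For a reversal-bounded $\NPDA$, the stack contents sit on the worktape with the head at the top-of-stack cell; a push writes a symbol and moves the head one cell away from the bottom, a pop erases (or marks blank) and moves the head back toward the bottom, so a switch between pushing and popping is exactly a reversal of the worktape head. Thus a $t$-reversal-bounded stack becomes a $t$-reversal-bounded worktape, and unambiguity of the $\NPDA$ transfers to unambiguity of the $\NTM$ because the simulation is step-faithful (each computation of the $\NTM$ corresponds to exactly one computation of the original machine). The queue-automaton case is the same idea with the head oscillating between the enqueue end and the dequeue end of the queue laid out on the tape, and a bound on enqueue/dequeue switches again becomes a worktape reversal bound. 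For the $k$-flip $\NPDA$s, I would note that each stack-flip can be simulated by a bounded pass of the worktape head that rewrites the stack contents in reverse order; since there are at most $k$ flips and the underlying $\NPDA$ is reversal-bounded, the total number of worktape-head reversals stays bounded.

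The bijection-preservation point that keeps unambiguity intact is the step I expect to require the most care, though it is conceptually routine: I would make the simulating $\NTM$ mimic the chosen model move-for-move, so that accepting computations of the two machines are in one-to-one correspondence, guaranteeing that if the original machine has at most one accepting computation per input then so does the $\NTM$. The main obstacle is not any single case but ensuring the reversal bound is genuinely preserved in the $k$-flip case, where a naive flip implementation could introduce many extra head reversals; the fix is to perform each flip as a single monotone sweep (reading from one end while writing a reversed copy toward the other, then discarding the original), keeping the per-flip reversal count constant and the total bounded by a function of $t$ and $k$. With these simulations in hand, each bullet reduces to an unambiguous reversal-bounded $\NTM$, and Theorem~\ref{main2} gives strong counting-regularity for every listed model, completing the proof.
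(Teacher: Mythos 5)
Your overall strategy is exactly the paper's: the paper proves this corollary by the single observation that each listed model embeds into the class of unambiguous $\NTM$s with a one-way read-only input and a reversal-bounded worktape, notes that deterministic machines are unambiguous, and explicitly ``leaves the details of the simulation to the reader'' before invoking Theorem~\ref{main2}. Your reduction of the deterministic bullets to the unambiguous ones, your stack-on-the-worktape simulation of reversal-bounded $\NPDA$s (store reversals become head reversals), your queue layout with the head shuttling between the enqueue and dequeue ends, and your step-faithfulness argument for preserving unambiguity all match what the authors intend and are correct.

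The one step that would fail as written is your implementation of a stack flip. You propose to ``perform each flip as a single monotone sweep (reading from one end while writing a reversed copy toward the other),'' but a machine with a single worktape head cannot read at cell $i$ and write at a distant cell $j$ within one monotone pass; physically producing a reversed copy forces the head to shuttle between the source and target regions, costing a number of head reversals proportional to the current stack height, which destroys the reversal bound. The standard repair is to not reverse the tape contents at all: keep the stack segment in place, record in the finite control which end is currently the top (together with a marker for the current bottom boundary), and simulate a flip by moving the head to the opposite end of the segment and inverting the push/pop directions. This costs only $O(1)$ head reversals per flip, hence $O(k)$ in total on top of the $\NPDA$'s own reversal bound, and it remains step-faithful, so unambiguity is preserved and Theorem~\ref{main2} applies as you intend. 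With that substitution your argument is complete and coincides with the paper's.
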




\vskip 0.25cm
The next theorem shows that some natural extensions of the model in the above theorem accept some non-counting-regular languages. In the proofs below, we use the well-known fact  \cite{Salomaa} that if $L$ is regular, then $f_L( n)$ is rational.

\begin{theorem}
\label{counter-examples}
The following families of languages are not counting-regular:
\begin{enumerate}

\item languages accepted by deterministic $1$-counter machines (no reversal-bound), 

\item $\DCM(2,1)$ (deterministic $1$-reversal-bounded $2$-counter languages),

\item $\NCM(1,1)$ (nondeterministic $1$-reversal-bounded $1$-counter languages),

\item $2$-ambiguous $\NCM(1,1)$ languages.

\end{enumerate}

\end{theorem}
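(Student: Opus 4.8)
The plan is to reduce all four items to a single observation: if $L$ is counting-regular, witnessed by a regular $L'$ with $f_L = f_{L'}$, then the generating function $\sum_{n\ge 0} f_L(n)\, z^n = \sum_{n\ge 0} f_{L'}(n)\, z^n$ is rational, since (by the fact recalled above that regular languages have rational counting functions) the right-hand side is rational. So for each item it suffices to exhibit one language in the stated family whose counting generating function is provably \emph{not} rational. In every case I will obtain non-rationality from the appearance of $\sqrt{1-4z^2}$, which is irrational because $1-4z^2$ has simple roots at $\pm\tfrac12$ and hence is not the square of a rational function, and because adding or multiplying a rational function by a non-rational algebraic one keeps it non-rational.

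For item (1) I would take the balanced-parenthesis language $L_{bal}$ of the introduction, accepted by a deterministic one-counter machine whose counter tracks nesting depth and therefore reverses unboundedly. Here $f_{L_{bal}}(2m)$ is the $m$-th Catalan number while $f_{L_{bal}}(2m+1)=0$, so the generating function equals $C(z^2)=(1-\sqrt{1-4z^2})/(2z^2)$, which is non-rational. For item (2) the witness is $L_2 = \{x\# y \mid x,y\in\{a,b\}^*,\ \psi(x)=\psi(y)\}$. This lies in $\DCM(2,1)$: one counter holds $|x|_a-|y|_a$ and the other $|x|_b-|y|_b$, each incremented while reading $x$ and decremented while reading $y$ (one reversal each), with acceptance exactly when both reach zero; there is no nondeterministic choice, so the machine is deterministic. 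For fixed $|x|=|y|=m$ the number of valid pairs is $\sum_k \binom{m}{k}^2=\binom{2m}{m}$, so $f_{L_2}(2m+1)=\binom{2m}{m}$ and $f_{L_2}$ vanishes on even lengths, giving generating function $z/\sqrt{1-4z^2}$, which is non-rational.

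For items (3) and (4) I would use a single language $L_3 = \{x\# y \mid x,y\in\{a,b\}^*,\ |x|_a=|y|_a \text{ or } |x|_b=|y|_b\}$. A machine for $L_3$ nondeterministically guesses which of the two equalities to verify and checks it with one $1$-reversal counter, so $L_3\in\NCM(1,1)$; moreover each guess yields a deterministic subcomputation, so any input admits at most one accepting computation per branch, hence at most two in total, and $L_3$ is in fact accepted by a $2$-ambiguous $\NCM(1,1)$, settling both items simultaneously. Writing $L_a=\{x\# y\mid |x|_a=|y|_a\}$ and $L_b=\{x\# y\mid |x|_b=|y|_b\}$, a Vandermonde computation gives $f_{L_a}(n)=f_{L_b}(n)=2^{n-1}$ for $n\ge 1$, while $L_a\cap L_b=L_2$. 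Inclusion--exclusion then yields $f_{L_3}(n)=2^{n}-f_{L_2}(n)$ for $n\ge 1$, whose generating function is a rational function minus $z/\sqrt{1-4z^2}$, again non-rational.

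The main obstacle I expect is not the non-rationality arguments but the careful bookkeeping of the machine models: verifying that each witness is accepted within \emph{exactly} the claimed resources (number of counters, a single reversal, determinism versus controlled ambiguity), including the routine but necessary handling of zero-tests and of a counter attempting to go negative. The most delicate point is the $2$-ambiguity bound for item (4), namely that the two deterministic verification branches of $L_3$ each contribute at most one accepting computation, so that no input has more than two. I would also double-check the combinatorial identities ($\sum_k\binom{m}{k}^2=\binom{2m}{m}$, the Vandermonde sum $\sum_k\binom{p}{k}\binom{q}{k}=\binom{p+q}{p}$, and the Catalan generating function) together with the even/odd length behavior, so that the stated generating functions are exactly correct.
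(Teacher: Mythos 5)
Your proof is correct, and it follows the same master strategy as the paper: exhibit, in each family, a witness language whose counting generating function cannot be that of a regular language. The differences are in the witnesses and in which non-regularity criterion is invoked. For items (1) and (2) the paper uses $L_{Eq}$ (equal numbers of $a$'s and $b$'s) for both, and rules out counting-regularity via Berstel's asymptotic characterization (the exponent $-\tfrac12$ in $\binom{2n}{n}\sim c\,4^n/\sqrt{n}$ is not a positive integer); you instead use $L_{bal}$ and $\{x\#y \mid \psi(x)=\psi(y)\}$ and argue directly that $\sqrt{1-4z^2}$ makes the generating function irrational, which is equally valid given the paper's stated fact that regular languages have rational counting functions (indeed the paper itself already asserts $L_{bal}$ is not counting-regular in Section 3). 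For item (3) the paper uses Flajolet's language $S=\{a^nbv_1a^nv_2\}$, whose generating function has infinitely many poles and is therefore not even algebraic; you instead observe that any $2$-ambiguous $\NCM(1,1)$ witness for item (4) automatically settles item (3), which is a legitimate simplification that avoids citing Flajolet's analysis (the paper keeps the two witnesses separate only to illustrate a contrast between the two failure modes). Your item (4) argument is structurally identical to the paper's: a union of two counting-regular one-reversal one-counter languages, inclusion--exclusion, and the non-rational central-binomial series for the intersection; the paper's witness $\{x1^n \mid |x|_a=n\}\cup\{x1^n\mid |x|_b=n\}$ yields $1/\sqrt{1-4z^3}$ where yours yields $z/\sqrt{1-4z^2}$. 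All of your machine constructions, reversal counts, the $2$-ambiguity bound, and the combinatorial identities (Vandermonde, $\sum_k\binom{m}{k}^2=\binom{2m}{m}$, the Catalan series) check out.
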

\begin{proof}
Consider the language $L_{Eq}$ over $\{a, b\}$ that accepts strings with an equal number of $a$'s and $b$'s. Clearly $L_{Eq}$ can be accepted by a deterministic $1$-counter machine (without restriction on the number of reversals). We will now show that there is no regular language $L$ such that $f_L(n)$ = $f_{L_{Eq}}(n)$ for all $n$. It is easy to see that $f_{L_{Eq}}(n)$ is 0 if $n$ is odd, and and is ${2n} \choose n$ if $n$ is even. But ${{2n} \choose n} \approx {{c 2^n} \over {\sqrt n}}$ for a constant $c$. Since the exponent of $n$ in the asymptotic expression for $f_{L_{Eq}}(n)$ is  $-{1 \over 2}$ which is not a positive integer, the conclusion follows from Theorem \ref{regularcharacterization}. 

The same language is a witness for the second class: it is easy to construct a $\DCM(2,1)$ machine that accepts $L_{Eq}$: increment counter one (two) for each $a$ ($b$) read and verify the counter values are the same at the end by decrementing simultaneously and accept if the counters reach the value 0 at the same time.

To show 3, we use a language from \cite{Flajolet}: Let $S$ = $\{ w \ |\ w = a^n b v_1 a^n v_2$ for some $v_1$ and $v_2$ in $\{a, b\}^*\}$. It is easy to see that $S$ can be accepted by an $\NCM(1,1)$. It is shown in 
\cite{Flajolet} that the generating function $S(z)$ for $S$ is $S(z)$ = ${z(1-z)} \over {1-2z}$ $\sum_{n \geq 1} {z^{2n} \over {1-2z+z^{n+1}}}$. Since $S(z)$ has countably many poles (one for each $n$ in the infinite sum), it follows that $S(z)$ is not algebraic. Since the generating function for any regular language is algebraic, it follows that $S$ is not counting-regular. 

 Finally, we will show 4. Note that the claim of 4 is stronger than 3, but the counter-examples we offer for 3 and 4 exhibit an interesting contrast. The former example has the property that its generating function has an infinite number of poles. The example we present now has only a finite number of poles but is not algebraic. Consider the languages $L_3$ = $\{ x 1^n \ | $ $x \in \{a, b\}^*, \ |x|_a = n \}$ and $L_4$ = $\{ x 1^n \ | $ $x \in \{a, b\}^*, \ |x|_b = n \}$. Clearly $L_3$ and $L_4$ can be accepted by a $\DCM(1,1)$ and hence are strongly counting-regular by Corollary \ref{cortomain}. (For $L_3$, the counter machine pushes a 1 on the counter on each $a$, skipping over $b$'s. When a first 1 is reached, it starts popping and makes sure that no $a$ or $b$ is seen again, and accepts when the counter reaches 0.) We will now show that $L_5$ = $L_3 \cup L_4$ is not counting-regular, by showing that its generating function is not rational. 

Let $L_6$ = $L_3 \cap L_4$. It is easy to see that $L_6$ =  $\{ x 1^n \ | $ $x \in \{a, b\}^*, \ |x|_a = |x|_b = n \}$. It can be checked that
$$f_{L_6}(n)  = \begin{cases}
{2n \choose n} & \mbox{~if $n \equiv 0$ (mod 3)}\\
0 & \mbox{else}.
\end{cases}
$$

So the generating function of $L_6$ is $f_{L_6}(z)$ = $1 \over{\sqrt {1-4z^3}}$. Clearly $f_{L_6}(z)$ is not rational. Suppose $L_5$ is counting-regular. Then, $f_{L_5}(z)$ is rational, and $f_{L_6}(z)$ = $f_{L_3}(z) + f_{L_4}(z) - f_{L_5}(z)$ is rational, a contradiction. 

Since $L_5$ can be accepted by a $2$-ambiguous $\NCM(1,1)$, the claim follows.
\qed \end{proof} 
 
Are there counting-regular languages that are not strongly counting-regular? Specifically,  since we showed that all unambiguous reversal-bounded $\PDA$'s are strongly counting-regular, is there an unambiguous $\PDA$ $L$ (that is not reversal-bounded) such that $L$ is counting-regular, but not strongly counting-regular? We discuss this issue next. 
In fact, we show that there is a deterministic $1$-counter language (also a $\DCM(2,1)$ language) that is counting-regular, but not strongly counting-regular, namely:
 $L_{\MAJ}$ = $\{ x \in (0+1)^* \ | \ $ $x$ has more 1's than 0's or has equal number of 0's and 1's and starts with 1$\}$. 

\begin{theorem}
$L_{\MAJ}$ is counting-regular but is not strongly counting-regular.
\label{MAJ}
\end{theorem}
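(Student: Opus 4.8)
The plan is to treat the two assertions separately. For counting-regularity I would first compute $f_{L_{\MAJ}}(n)$ exactly. Writing a length-$n$ word with $k$ ones, it belongs to $L_{\MAJ}$ precisely when $k > n/2$, or when $n$ is even, $k = n/2$, and the word begins with $1$. For odd $n$ no tie is possible and exactly half of the $2^n$ words have a strict majority of $1$'s, giving $2^{n-1}$. For even $n$ the strict-majority count is $2^{n-1} - \tfrac12\binom{n}{n/2}$, while the number of tie words beginning with $1$ is $\binom{n-1}{n/2-1} = \tfrac12\binom{n}{n/2}$; these sum to $2^{n-1}$ as well. Hence $f_{L_{\MAJ}}(n) = 2^{n-1}$ for every $n \geq 1$ and $f_{L_{\MAJ}}(0) = 0$. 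Since the regular language $L' = 1(0+1)^*$ satisfies exactly $f_{L'}(n) = 2^{n-1}$ for $n\ge 1$ and $f_{L'}(0)=0$, this shows $L_{\MAJ}$ is counting-regular.

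For the failure of strong counting-regularity I would exhibit one regular language $L_1$ for which $L_{\MAJ} \cap L_1$ is not counting-regular, namely $L_1 = 0(0+1)^*$ (the words beginning with $0$). Because every tie word of $L_{\MAJ}$ begins with $1$, intersecting removes all ties, so $L_{\MAJ} \cap L_1$ is exactly the set of words beginning with $0$ that have a strict majority of $1$'s. A direct count, splitting on the parity of $n$ and summing the appropriate tail of binomial coefficients over the final $n-1$ positions, yields for $n\ge 1$
\[
f_{L_{\MAJ} \cap L_1}(n) \;=\; 2^{n-2} - \tfrac12\binom{2\lfloor n/2\rfloor}{\lfloor n/2\rfloor}.
\]

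Finally I would derive a contradiction from the assumption that $L_{\MAJ}\cap L_1$ is counting-regular. Using the fact recalled in the paper that every regular language has a rational counting generating function, counting-regularity of $L_{\MAJ}\cap L_1$ would force $\sum_{n} f_{L_{\MAJ}\cap L_1}(n)\,z^n$ to be rational. As $\sum_n 2^{n-2}z^n$ is rational, the difference $-\tfrac12\sum_{n\ge1}\binom{2\lfloor n/2\rfloor}{\lfloor n/2\rfloor}z^n = \tfrac12 - \frac{1+z}{2\sqrt{1-4z^2}}$ would be rational too. But $\frac{1}{\sqrt{1-4z^2}}$ has a genuine branch singularity at $z=\pm\tfrac12$ (equivalently, the central binomial coefficients carry a $1/\sqrt n$ factor, $\binom{2m}{m}\sim 4^m/\sqrt{\pi m}$, which no sequence with a rational generating function can exhibit), so this difference is non-rational. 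The contradiction shows $L_{\MAJ}\cap L_1$ is not counting-regular, whence $L_{\MAJ}$ is not strongly counting-regular.

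The step I expect to require the most care is the last one, and it is precisely where one must avoid a natural pitfall: the defect term $\tfrac12\binom{2\lfloor n/2\rfloor}{\lfloor n/2\rfloor}$ is asymptotically negligible against the dominant $2^{n-2}$, so $f_{L_{\MAJ}\cap L_1}(n)\sim 2^{n-2}$ is fully consistent with case (iii) of Berstel's trichotomy (Theorem \ref{regularcharacterization}, with $\alpha=2$). Thus the asymptotic characterization alone cannot rule out counting-regularity; one genuinely needs the finer rationality obstruction supplied by the square-root (central-binomial) term. The crux is therefore to isolate that term cleanly, which is exactly what the choice $L_1 = 0(0+1)^*$ accomplishes by deleting all tie words, together with verifying the exact count displayed above.
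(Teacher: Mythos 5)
Your proof is correct and follows essentially the same route as the paper: establish $f_{L_{\MAJ}}(n)=2^{n-1}$ for counting-regularity, then intersect with a small regular language, compute the counting function exactly, and show the resulting generating function is irrational because of a central-binomial term contributing a $1/\sqrt{1-4z^2}$ factor. The only difference is your choice of witness ($0(0+1)^*$, which removes the tie words, versus the paper's two-state $\DFA$ $M_2$, which keeps them and adds a binomial term instead of subtracting one), and your computations and the rationality argument check out.
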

\begin{proof}
Since the number of strings of length $n$ in $L_{\MAJ}$ is exactly $2^{n-1}$ for all $n\geq 1$, it is counting-regular.

Since $L_{\MAJ}$ is counting-regular, the smallest $\DFA$ $M$ such that $L_{\MAJ} \cap L(M)$ does not have a regular counting function must have at least two states. Somewhat surprisingly, such a $2$-state $\DFA$ exists. Consider the $\DFA$ $M_2$ in Figure \ref{fig3}.
\begin{figure}
\begin{center}
\includegraphics[width=2.2in]{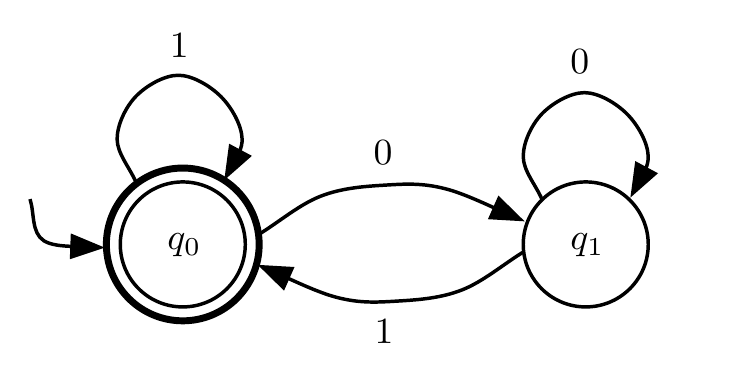}
\caption{$\DFA$ $M_2$}
\label{fig3}
\end{center}
\end{figure}
Let $L_2$ = $L_{\MAJ} \cap L(M_2)$. To show our claim, we need to establish that there is no regular language $L$ such that $f_{L_2}( n)$ = $f_L(n)$.  

We will obtain a closed-form expression for $f_{L_2}(n)$ and from it, we will obtain the generating function for the sequence $\{ f_{L_2}(n) \}$, $n$ = $0,\ 1, \ldots $  as follows. Note that the $\DFA$ $M$ has the property that all the transitions into a state are labeled by the same input. This means that there is a bijective mapping between an input string $w$ and the sequence of states visited on input $w$. Specifically, if we omit the first state (which is $q_0$, the start state), every input string of length $n$ can be bijectively mapped to the sequence of length $n$ of states visited. Thus, when $n$ is odd, the number of strings of length $n$ in $L_2$ is equal to the number of sequence of states of length $n$ that have more $q_0$ than $q_1$ and ending with $q_0$. (The requirement of ending with $q_0$ ensures that the string is in $L(M_2)$.) When $n$ is even, the number of strings of length $n$ in $L_2$ is equal to the number of sequences of states of length $n$ that have more $q_0$ than $q_1$ and ending with $q_0$, plus the number of sequences of states of length $n$ with equal number of $q_0$ and $q_1$ that begins and ends with $q_0$. Thus:
$$f_{L_2}(n)  = \begin{cases}
2^{n-2}+ {n-1 \choose {n-1 \over 2}} & \mbox{~if $n$ is odd and~} n \geq 3,\\
2^{n-2}+ {{n-2} \choose {(n-2) \over 2}} & \mbox{~if $n$ is even (and $n \geq 2$)}.
\end{cases}
$$
From this, we can explicitly obtain the generating function of $L_2$ as:
$1 + {z^2 \over {1-2z} } + {{z^2+z}  \over {\sqrt {1-4z^2}}}$. This function is not a rational function and the conclusion follows.
\qed \end{proof}

Finally, we will show an interesting language that is both context-free and strongly counting-regular. This language was presented in \cite{Casti} as a member of the class $\RCM$ --- a class defined as follows: $\RCM$ is the class of the languages given by $< R,C, \mu>$ where $R$ is a regular language, $C$ a system of linear constraints and $\mu$, a length-preserving morphism that is injective on $R \cap [C]$. The specific language we denote by $L_{\RCM}$
 is defined as follows:
$$L_{\RCM} = \{w \in \{a,b\}^* \ |\ w[|w|_a] = b\}.$$
(From the definition, it is clear that $a^n$ is not in $L_{\RCM}$.  But the definition does not specify the status of the string $b^n$ since there is no position 0 in the string. We resolve this by explicitly declaring the string $b^n$ to be in $L_{\RCM}$ for all $n$.) 

We first observe that $L_{\RCM}$ is in $\NCM(1,1)$ and hence is context-free: We informally describe a nondeterministic 1-reversal-bounded 1-counter machine $M$ for $L_{\RCM}$. Let $w$ = $w_1 \cdots w_n \in L_{\RCM}$ and suppose $|w|_a$ = $i$ and
$|w_1 w_2 \cdots w_{i-1}|_a$ = $j$. Then, it is clear that $w_i$ = $b$, $j < i$ and $|w_{i+1} \cdots w_n|_a$ = $i-j$. Now we describe the operation of $M$ on input string $w$ (not necessarily in $L_{\RCM}$) as follows. $M$ increments the counter for every $b$ until it reaches position $i$. $M$ guesses that it has reached position $i$, and checks that the symbol currently scanned is $b$, then switches to decrementing phase in which it decrements the counter for each $a$ seen. When the input head falls off the input tape, if the counter becomes 0 it accepts the string. If $w \in L_{\RCM}$, it is clear that when $M$ switches from incrementing to decrementing phase, the counter value is $i-j$ and hence when it finishes reading the input, the counter will become 0. The converse is also true. Thus it is clear that $L(M)$ = $L_{\RCM}$.

An interesting aspect of the next result is that it applies the simulation technique of Theorem \ref{main2} twice --- by first bijectively (and in length-preserving way) mapping $L_{\RCM}$ to a language accepted by $\DCM(1,1)$, then applying Theorem \ref{main2} to map it to a regular language. 

\begin{theorem}
$L_{\RCM}$ is strongly counting-regular.
\end{theorem}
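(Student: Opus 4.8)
The plan is to prove that $L_{\RCM}$ is strongly counting-regular by exhibiting an explicit length-preserving bijection from $L_{\RCM}$ onto a language accepted by a $\DCM(1,1)$, and then invoking Theorem~\ref{main2} (via Corollary~\ref{cortomain}) on the latter. The hint in the paragraph immediately preceding the statement makes this explicit: the intended route is to apply the simulation technique ``twice'', first reducing $L_{\RCM}$ bijectively to a deterministic reversal-bounded counter language, then letting the machinery of Theorem~\ref{main2} convert that to a regular counting function.

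First I would set up the bijection. The difficulty with $L_{\RCM}$ is that it is defined by the \emph{nondeterministic} guess of the position $i = |w|_a$, so the naive counter machine is genuinely ambiguous and Theorem~\ref{main2} does not apply directly. The key observation is that for a fixed word $w$ the condition ``$w[\,|w|_a\,]=b$'' is a single deterministic property: the position $|w|_a$ is determined by $w$, so membership in $L_{\RCM}$ is decidable deterministically even though the cheap counter machine is not deterministic. I would therefore define a length-preserving encoding $\phi$ that rewrites each $w\in\{a,b\}^*$ into a word over a slightly larger alphabet (or the same alphabet with one track) that \emph{marks} the critical position $|w|_a$ in a way a deterministic one-reversal counter machine can verify. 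Concretely, one can map $w$ to a word where the symbol at position $|w|_a$ carries a tag, and have a $\DCM(1,1)$ count the number of $a$'s while scanning, then check when the counter value equals the current position that the tagged symbol is indeed a $b$. The map $\phi$ must be a length-preserving bijection from $L_{\RCM}$ onto $\phi(L_{\RCM})$, so that $f_{L_{\RCM}}(n)=f_{\phi(L_{\RCM})}(n)$ for all $n$, and $\phi(L_{\RCM})$ must lie in $\DCM(1,1)$.

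Next I would verify the two properties of $\phi$. For the \textbf{bijection}, since $\phi$ simply annotates a deterministically-determined position, it is injective and length-preserving, and its image is exactly the set of properly-annotated words, which is what the target machine accepts; the boundary cases $a^n\notin L_{\RCM}$ and $b^n\in L_{\RCM}$ (resolved by the convention in the statement) must be checked to ensure the encoding is consistent there. For \textbf{$\DCM(1,1)$ acceptance} of $\phi(L_{\RCM})$, the machine makes a single reversal: it increments on $a$'s (or counts position) up to the marked cell, verifies the mark sits on a $b$, and decrements deterministically to confirm the counter balances, rejecting any input whose annotation is inconsistent. Determinism is crucial and comes precisely from the fact that the encoding pre-commits to the position $|w|_a$ rather than guessing it. Then $\phi(L_{\RCM})\in\DCM(1,1)$, and Corollary~\ref{cortomain} gives that $\phi(L_{\RCM})$ is strongly counting-regular, hence there is a regular $L'$ with $f_{L'}=f_{\phi(L_{\RCM})}=f_{L_{\RCM}}$; the strong version follows because intersecting $L_{\RCM}$ with a regular language corresponds, through $\phi$, to intersecting $\phi(L_{\RCM})$ with a regular language, which again lands in $\DCM(1,1)$.

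The main obstacle I anticipate is constructing $\phi$ so that it is genuinely a length-preserving \emph{bijection} while keeping the target language deterministic. The tension is that marking the position $|w|_a$ without changing the length, and without collisions across different words $w$, requires care: the annotation must be recoverable (so $\phi$ is injective) yet must be checkable by a deterministic one-reversal counter (so the image is in $\DCM(1,1)$). I expect this to be handled by choosing the encoding alphabet so that the tag is carried ``for free'' on the symbol already present at position $|w|_a$ (e.g.\ replacing the $b$ there by a marked copy), which preserves length and invertibility. Once the encoding is pinned down, the determinism argument and the appeal to Theorem~\ref{main2} are routine.
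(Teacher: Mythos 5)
Your overall route is the same as the paper's: both arguments build a length-preserving bijection from $L_{\RCM}$ onto a language in $\DCM(1,1)$ by making the position $|w|_a$ explicit in the encoding (the paper annotates every position with a second track equal to $c^{k}d^{n-k}$ where $k=|w|_a$, you tag only the symbol at position $|w|_a$ --- a cosmetic difference), then invoke Corollary~\ref{cortomain}, and obtain the strong version by carrying a regular intersection through the encoding. So the architecture is right and matches the paper.

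There is, however, one concrete flaw in your verification step: the counter discipline you describe does not work as stated. You first propose to ``count the number of $a$'s while scanning, then check when the counter value equals the current position,'' which a counter machine cannot do, and later to ``increment on $a$'s (or count position) up to the marked cell \dots and decrement to confirm the counter balances.'' Neither variant yields a valid one-reversal check: if the counter accumulates $|w_1\cdots w_p|_a$ (or $p$) over the prefix, the quantity it must be compared against still depends on the $a$'s of that same prefix, which the one-way head has already consumed. The correct deterministic one-reversal check --- the one the paper's machine performs --- is to increment on each $b$ occurring at or before the marked position $p$ and decrement on each $a$ occurring after it, accepting iff the counter returns to zero; this works because $|w|_a=p$ is equivalent to $|w_1\cdots w_p|_b=|w_{p+1}\cdots w_n|_a$. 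With that substitution (and your noted convention for the words $b^n$), your argument goes through and coincides with the paper's proof.
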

\begin{proof} Consider $L'$ = $\{ [a_1, b_1] [a_2, b_2] \cdots [a_n, b_n]\ |\ a_1 a_2 \cdots a_n \in L_{\RCM}$, and $b_i$ = $c$ for all $1 \leq i \leq k$, and  $b_i$ = $d$ for all $k+1 \leq i \leq n$ where $k$ is the number of $a$'s in $a_1 a_2 \cdots a_n\}$. Thus, $L'$ is defined over $\Sigma_1$ = $\{[a,c], [b,c], [a,d],[b,d]\}$. As an example, the string  $[a, c][b,c][a, d]$ is in $L'$ since $aba \in  L_{\RCM}$. It is easy to see that there is a bijective mapping between strings in $L_{\RCM}$ and strings in $L'$. Given a string $w \in L_{\RCM}$, there is a unique string $w'$ in $L'$ whose upper-track consists of $w$ and the lower-track consists of $c^k d^{n-k}$ where $|w|$ = $n$ and $|w|_a$ = $k$. Since $k$ and $n$ are uniquely specified for a given string $w$, the string $w'$ is uniquely defined for a given $w$. Conversely, the projection of a $w' \in L'$ to its upper-track string uniquely defines a string in $L_{\RCM}$. 

Let $R$ be a regular language over $\{a, b\}$. Define a language $R' = \{[a_1, b_1] [a_2, b_2] \cdots [a_n, b_n]\ |\ a_1 \cdots a_n \in L' \cap R\}$. It is clear that there is a 1-1 correspondence between strings in $R'$ and $L_{RCM} \cap R$. Finally, we note that there is a $\DCM(1,1)$ $M$ that accepts $R'$: $M$ simulates the $\DFA$ for $R$ on the upper-track input and at the same time, performs the following operation. For every input $[b,c]$, the counter is incremented, and for every $[a,d]$ the counter is decremented. It also remembers the previous symbol scanned, and checks that the previous symbol scanned before the first occurrence of $[b,d]$ is $[a,c]$. Finally, when the counter value becomes 0, the string is accepted. Note that since all $d$'s in the second track occur after the $c$'s, the counter reverses at most once.

Since all $\DCM(1,1)$'s are $1$-reversal-bounded $\DPDA$'s, then by Corollary \ref{cortomain}, $R'$ is counting-regular. Hence, $L_{\RCM} \cap R$ is counting-regular.
\qed
\end{proof}

In this section, we have shown that the languages accepted by unambiguous nondeterministic Turing machines with a one-way read-only input tape and a reversal-bounded worktape are strongly counting-regular. We also showed that some natural extensions of this class fail to be counting-regular. We presented some relationships between counting-regular languages and the class $\RCM$. However, our understanding of which languages in $\DCM$, $\NCM$, or $\CFL$ are (strongly) counting-regular is quite limited at this time.

\section{Bounded Semilinear Trio Languages are Counting-Regular}
\label{bounded}

In this section, we will show that all bounded languages in any semilinear
trio are counting-regular.

First, the following is known \cite{CIAA2016} (follows from results in \cite{ibarra1978}, and the fact that all bounded $\NCM$
languages are in $\DCM$ \cite{IbarraSeki}):

\begin{lemma}
Let $u_1, \ldots, u_k \in \Sigma^+$, and let $\phi$ be a function from $\mathbb{N}_0^k$ to $u_1^* \cdots u_k^*$ which associates
to every vector $(l_1, \ldots, l_k)$, the word $\phi((l_1, \ldots, l_k)) = u_1^{l_1} \cdots u_k^{l_k}$.
Then the following are true:
\begin{itemize}
\item given a semilinear set $Q$, then $\phi(Q) \in \DCM$,
\item given an $\NCM$ (or $\DCM$) language $L \subseteq u_1^* \cdots u_k^*$, then $IND(L) = \{(l_1, \ldots, l_k) \mid \phi((l_1, \ldots, l_k)) \in L\}$ is a semilinear set.
\end{itemize} Moreover, both are effective.
\label{phi}
\end{lemma}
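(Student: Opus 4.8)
The "final statement" here is Lemma~\ref{phi}, which is presented as a known result cited from the literature. I'll sketch how I would prove it directly.

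The plan is to establish the two directions of the lemma separately, using the machinery of reversal-bounded counter machines and their relationship with semilinear sets. Throughout, the key technical fact I would rely on is that $\DCM$ and $\NCM$ languages have semilinear Parikh images, and conversely that semilinear sets can be recognized by reversal-bounded counter machines, both constructively.

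For the first bullet, given a semilinear set $Q \subseteq \mathbb{N}_0^k$, I would build a $\DCM$ machine accepting $\phi(Q)$. Since $Q$ is a finite union of linear sets and $\DCM$ is closed under union, it suffices to handle a single linear set $Q = \{\vec{v_0} + i_1 \vec{v_1} + \cdots + i_n \vec{v_n}\}$. On input $w \in u_1^* \cdots u_k^*$, the machine first verifies deterministically that $w$ has the form $u_1^{l_1} \cdots u_k^{l_k}$ and, while reading, records the exponents $l_1, \ldots, l_k$. The hard part is that $\phi$ need not be injective — distinct exponent vectors can yield the same word when the $u_j$ overlap or share factors — so membership of $w$ must really be decided by whether \emph{some} valid factorization gives a vector in $Q$. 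I would handle this by observing that the set of exponent vectors $(l_1, \ldots, l_k)$ compatible with a given $w \in u_1^* \cdots u_k^*$ is itself semilinear and effectively computable, and that intersection of semilinear sets is effectively semilinear; membership of $w$ in $\phi(Q)$ is then equivalent to non-emptiness of this intersection. The deterministic counter machine checks the relevant linear-Diophantine conditions using its $k$ reversal-bounded counters (loading the $l_j$ as it scans, then testing the affine membership constraints by a bounded number of sweeps), and the determinism together with the boundedness of the input ensures only finitely many reversals are needed. The appeal to \cite{IbarraSeki} (bounded $\NCM$ languages are in $\DCM$) lets me first build an $\NCM$ that guesses the $i_j$ and then upgrade to $\DCM$.

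For the second bullet, I would start from an $\NCM$ (or $\DCM$) language $L \subseteq u_1^* \cdots u_k^*$ and show $IND(L)$ is effectively semilinear. The idea is to map the index space into $L$ via $\phi$ and then transfer semilinearity of $L$'s Parikh image back to index coordinates. Concretely, let $a_1, \ldots, a_k$ be fresh letters and consider the language $L'' = \{a_1^{l_1} \cdots a_k^{l_k} \mid u_1^{l_1} \cdots u_k^{l_k} \in L\}$; this is exactly $IND(L)$ read as a language of $a$-blocks, and its Parikh image is $IND(L)$ itself. I would obtain $L''$ from $L$ by an $\NCM$ construction (or via inverse homomorphism and intersection with the regular language $a_1^* \cdots a_k^*$, using that $\NCM$ is a trio), so $L''$ is an $\NCM$ language; by the semilinearity theorem for $\NCM$ from \cite{ibarra1978}, $\psi(L'') = IND(L)$ is effectively semilinear. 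The word ``Moreover, both are effective'' is discharged by noting every step — the factorization analysis, the Parikh-image computation, the semilinear constructions, and the $\NCM$-to-$\DCM$ conversion — is constructive as cited.

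The main obstacle is the non-injectivity of $\phi$ in the first bullet: when the words $u_j$ are not a code, a single input string admits many exponent factorizations, so one cannot simply ``read off'' a vector and test membership. Controlling this requires the observation that the fibre $\phi^{-1}(w)$ is uniformly semilinear across $w \in u_1^* \cdots u_k^*$ and that the relevant emptiness/membership test is decidable by a reversal-bounded deterministic machine — which is exactly where the power of the cited results about $\DCM$ and semilinear sets is essential.
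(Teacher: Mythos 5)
Your proposal is correct and follows essentially the same route as the paper, which in fact states Lemma~\ref{phi} without proof as a known result assembled from exactly the ingredients you use: semilinearity of $\NCM$ and its effective trio closure properties from \cite{ibarra1978} for the second bullet (via $h^{-1}(L)\cap a_1^*\cdots a_k^*$ with $h(a_i)=u_i$ and the Parikh map), and the conversion of bounded $\NCM$ languages to $\DCM$ from \cite{IbarraSeki} for the first. The only caveat is that your opening description of a \emph{deterministic} machine that ``records the exponents'' while scanning cannot work on its own precisely because of the non-injectivity you later identify; the argument you should keep is the one in your final sentence of that paragraph --- build an $\NCM$ that nondeterministically guesses the block boundaries and the coefficients $i_1,\ldots,i_n$, and then invoke the bounded-$\NCM$-to-$\DCM$ theorem --- which is the standard (and the intended) proof.
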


Recall that two languages $L_1$ and $L_2$ are called commutatively equivalent if there is a Parikh-map preserving bijection between them. Therefore, a language $L$ being commutatively equivalent to some regular language is stronger than saying it is counting-regular.
Split across three papers in \cite{FlavioBoundedSemilinearpaper1,FlavioBoundedSemilinearpaper2,FlavioBoundedSemilinear}, it was shown that all bounded semilinear languages ---
which are all bounded languages where $IND(L)$ is a semilinear set --- are commutatively
equivalent to some regular language, and are therefore counting-regular. Recently, it was shown that
all bounded languages from any semilinear trio are in $\DCM$ \cite{CIAA2016} (and are therefore
bounded semilinear by Lemma \ref{phi}). This enables us to conclude that all bounded languages
in any semilinear trio are commutatively equivalent to some regular language, and thus counting-regular.
However, as the proof that all bounded  semilinear languages are commutatively equivalent to some regular language is quite lengthy, we provide a simple alternate proof that all bounded languages in any semilinear trio (all bounded semilinear languages) are counting-regular. The class $\DCM$ plays a key role in this proof.

\begin{lemma}
Let $u_1, \ldots, u_k \in \Sigma^+$, $L\subseteq u_1^* \cdots u_k^*$ be a bounded $\DCM$ language, and let $\phi$ be a function from Lemma \ref{phi}.
There exists a semilinear set $B$ such that $\phi(B) = L$ and $\phi$ is injective on $B$. Also, the construction of $B$ is effective.
\label{injective}
\end{lemma}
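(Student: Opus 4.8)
The plan is to start from $A = IND(L)$, which is a semilinear set by Lemma~\ref{phi} (the second bullet, applied to the $\DCM$ language $L$) and which already satisfies $\phi(A) = L$: we have $\phi(A) \subseteq L$ directly from the definition of $IND$, while conversely every $w \in L \subseteq u_1^* \cdots u_k^*$ can be written $w = u_1^{l_1}\cdots u_k^{l_k} = \phi(v)$ for some $v$, so that $v \in A$ and $w \in \phi(A)$. The only defect is that $\phi$ need not be injective on $A$ (for instance with $u_1 = a$, $u_2 = aa$ the vectors $(2,0)$ and $(0,1)$ collide). The idea is therefore to retain, from each fiber of $\phi$ that meets $A$, a single canonical representative --- the lexicographically least one --- and then to argue that the set $B$ of these representatives is again semilinear.

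The key step, and the one I expect to be the main obstacle, is to show that the collision relation
$$E = \{(v,v') \in \mathbb{N}_0^{k} \times \mathbb{N}_0^{k} \mid \phi(v) = \phi(v')\}$$
is an effectively constructible semilinear subset of $\mathbb{N}_0^{2k}$. I would prove this by giving a Presburger formula defining $E$ and then invoking the classical equivalence between semilinear sets and Presburger-definable sets (and the effective translation between the two). Writing $v = (l_1,\dots,l_k)$, the length of $\phi(v)$ is $\sum_i |u_i|\, l_i$, and for a position $p$ the block of $\phi(v)$ containing $p$ is determined by the partial sums $S_i = \sum_{j \le i} |u_j|\, l_j$, which are linear in $v$; inside block $i$ the symbol at position $p$ is the letter of $u_i$ at offset $(p - S_{i-1}) \bmod |u_i|$. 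Since each $|u_i|$ is a fixed constant, the predicate ``the letter of $\phi(v)$ at position $p$ equals $a$'' (for each fixed $a \in \Sigma$) is a Boolean combination of linear inequalities and fixed-modulus congruences in $(p,v)$, hence Presburger. Then $E$ is defined by requiring equal lengths together with the assertion that for every $p$ with $0 \le p < \sum_i |u_i| l_i$, the letters of $\phi(v)$ and $\phi(v')$ at position $p$ agree --- a Presburger formula with a single bounded universal quantifier. Closure of Presburger arithmetic under quantification and Boolean operations yields semilinearity of $E$, effectively. This is the only place where the combinatorics of the concatenation $u_1^{l_1}\cdots u_k^{l_k}$, as opposed to the mere length contributions, must be handled.

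With $E$ in hand I would define
$$B = \{\, v \in A \mid v \text{ is lexicographically least in } \{v' \in A : (v,v') \in E\} \,\}.$$
Since $A$, $E$, and the lexicographic order on $\mathbb{N}_0^k$ are all semilinear, and semilinear sets are effectively closed under intersection, projection (existential quantification), and complement, $B$ is effectively semilinear. It remains to check the two required properties. For injectivity, if $v_1, v_2 \in B$ and $\phi(v_1) = \phi(v_2)$, then $(v_1,v_2) \in E$, so $v_1$ and $v_2$ determine the \emph{same} fiber intersected with $A$; each is by construction the lexicographically least element of that common set, and uniqueness of the least element forces $v_1 = v_2$. For the image, $B \subseteq A$ gives $\phi(B) \subseteq \phi(A) = L$, while conversely each $w \in L$ has a nonempty fiber in $A$ whose lex-least element lies in $B$ and maps to $w$; hence $\phi(B) = L$. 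Effectiveness is inherited throughout from Lemma~\ref{phi}, from the effective passage between Presburger formulas and semilinear representations, and from the effective closure properties of semilinear sets.
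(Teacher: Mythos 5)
Your proof is correct, but it takes a genuinely different route from the paper's. The paper invokes Eilenberg's Cross-Section Theorem to obtain a regular language $R \subseteq a_1^* \cdots a_k^*$ on which the substitution homomorphism $h : a_i \mapsto u_i$ is a bijection onto $u_1^* \cdots u_k^*$; it then forms $L' = h^{-1}(L) \cap R$, which stays in $\DCM$ by closure under inverse homomorphism and intersection with regular languages, and takes $B = IND(L')$, with injectivity of $\phi$ on $B$ inherited from the injectivity of $h$ on $R$. You instead work entirely at the level of semilinear sets: you show the collision relation $E = \{(v,v') \mid \phi(v) = \phi(v')\}$ is Presburger-definable by encoding position-by-position letter equality with linear partial sums and fixed-modulus congruences, and then carve out the lexicographically least representative of each fiber, relying on the effective Ginsburg--Spanier equivalence between Presburger-definable and semilinear sets. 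Both arguments are sound and effective. The paper's version is shorter given the Cross-Section Theorem and reuses the $\DCM$ closure machinery already present in the paper; yours avoids both the Cross-Section Theorem and any closure property of $\DCM$ beyond the semilinearity of $IND(L)$ itself, so it applies verbatim to any bounded language whose index set is effectively semilinear, at the cost of a more technical (though routine) Presburger encoding of word equality under the map $\phi$. One small point worth making explicit in your write-up: the lex-least element of each fiber exists because each fiber $\{v' \in A : \phi(v') = \phi(v)\}$ is finite (every $u_i$ is nonempty, so all coordinates are bounded by $|\phi(v)|$), or alternatively because lexicographic order well-orders $\mathbb{N}_0^k$; either justification suffices.
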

\begin{proof}
Let $A = \{a_1, \ldots, a_k\}$, and consider the homomorphism $h$ that maps $a_i$ to $u_i$, for $1 \leq i \leq k$. It is known that
there exists a regular subset $R$ of $a_1^* \cdots a_k^*$ that  $h$ maps bijectively from $R$ onto $u_1^* \cdots u_k^*$ 
(the Cross-Section Theorem of Eilenberg \cite{Flavio}).
Let $L' = h^{-1}(L) \cap R$. Then $L'$ is in $\DCM$ since $\DCM$ is closed under inverse homomorphism and intersection with regular languages \cite{ibarra1978}. Hence, there is a semilinear set $B = IND(L')$ from Lemma \ref{phi}(2).
Then $\phi(B) = L$ since, given $(l_1, \ldots, l_k) \in B$, then
$u_1^{l_1} \cdots u_k^{l_k} \in L$, and given $w \in L$, by the bijection $h$,
there exists a string $a_1^{l_1} \cdots a_k^{l_k}$ of $R$ such that $a_1^{l_1} \cdots a_k^{l_k} = h^{-1}(w)$,
and so $w \in \phi(B)$. Also, $\phi$ is injective on $B$, as given two distinct elements
$(l_1, \ldots, l_k)$ and $(j_1, \ldots, j_k)$ in $B$, then both
$a_1^{l_1} \cdots a_k^{l_k}$ and $a_1^{j_1} \cdots a_k^{j_k}$ are in $R$, which means that $h$ maps them onto
different words in $u_1^* \cdots u_k^*$ since $h$ is a bijection.
\qed \end{proof}


The proof of the next result uses similar techniques as the proof that all bounded context-free languages are counting-regular from \cite{Flavio}. But because there are key
differences, we include a full proof for completeness.

\begin{lemma}
Let $L \subseteq u_1^* \cdots u_k^*$ be a bounded $\DCM$ language
for given words $u_1, \ldots, u_k$. Then there exists an effectively constructible bounded
regular language $L'$ such that, for every $n \geq 0$, $f_L(n) = f_{L'}(n)$.
\end{lemma}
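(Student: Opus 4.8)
The plan is to reduce the counting of words of $L$ to the counting of integer points of a semilinear set, and then to realize that count by a bounded regular language assembled from a disjoint collection of ``counting blocks.'' First I would invoke Lemma \ref{injective} to obtain an effectively constructible semilinear set $B$ with $\phi(B) = L$ and $\phi$ injective on $B$. Writing $\ell(x_1,\dots,x_k) = \sum_{i=1}^k x_i |u_i|$ for the length of the word $\phi(x_1,\dots,x_k) = u_1^{x_1}\cdots u_k^{x_k}$, injectivity of $\phi$ on $B$ gives at once that $f_L(n) = |\{x \in B : \ell(x) = n\}|$. So it suffices to produce a bounded regular language whose length-$n$ count equals the number of (weighted) length-$n$ points of $B$.

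Next I would pass to a semi-simple form: by \cite{Flavio,Sakarovitch} every semilinear set can be written (effectively) as a finite disjoint union $B = B_1 \uplus \cdots \uplus B_p$ of simple linear sets, where each $B_j$ has constant $\vec{v}_0^{(j)}$ and periods $\vec{v}_1^{(j)},\dots,\vec{v}_{n_j}^{(j)}$ that are linearly independent. Disjointness gives $f_L(n) = \sum_{j=1}^p |\{x \in B_j : \ell(x) = n\}|$. The crucial point is that linear independence of the periods makes the representation $x = \vec{v}_0^{(j)} + \sum_s i_s \vec{v}_s^{(j)}$ unique, so that, setting $c_j = \ell(\vec{v}_0^{(j)})$ and $d_{j,s} = \ell(\vec{v}_s^{(j)})$ (each $d_{j,s} \ge 1$ since the periods are nonzero and every $|u_i| \ge 1$), the length-$n$ points of $B_j$ are in bijection with the tuples $(i_1,\dots,i_{n_j}) \in \mathbb{N}_0^{n_j}$ satisfying $c_j + \sum_s i_s d_{j,s} = n$.

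I would then realize each of these tuple-counts by a bounded regular block over its own fresh, pairwise-disjoint alphabet $\{a_j, b_{j,1},\dots,b_{j,n_j}\}$, namely $R_j = a_j^{c_j} (b_{j,1}^{d_{j,1}})^* \cdots (b_{j,n_j}^{d_{j,n_j}})^*$. Distinct tuples yield distinct words of $R_j$, and the word $a_j^{c_j} b_{j,1}^{d_{j,1} i_1} \cdots b_{j,n_j}^{d_{j,n_j} i_{n_j}}$ has length exactly $c_j + \sum_s i_s d_{j,s}$, so $f_{R_j}(n) = |\{x \in B_j : \ell(x) = n\}|$. Finally I would set $L' = \bigcup_{j=1}^p R_j$, which is regular and bounded as a finite union of bounded regular languages, with the whole construction effective.

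Because the alphabets are pairwise disjoint, every nonempty word of $L'$ lies in exactly one $R_j$, giving $f_{L'}(n) = \sum_j f_{R_j}(n) = f_L(n)$ for all $n \ge 1$. The only point needing separate care is the empty word: $\epsilon \in R_j$ iff $c_j = 0$, which (since every $|u_i| \ge 1$) forces $\vec{v}_0^{(j)} = \vec{0}$, i.e.\ $\vec{0} \in B_j$; disjointness of the $B_j$ then ensures at most one such $j$, and $\epsilon \in L'$ iff $\vec{0} \in B$ iff $\epsilon \in L$, so $f_{L'}(0) = f_L(0)$ as well. The hard part, I expect, is exactly this matching of counts: a block of the above shape overcounts unless each integer point of the corresponding semilinear piece has a \emph{unique} coordinate representation, and this is precisely what the semi-simple (linearly independent periods) decomposition buys us. Working directly with the linear sets produced by Lemma \ref{injective}, whose periods need not be independent, would destroy the bijection between tuples and points and break the argument.
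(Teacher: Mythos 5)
Your proof is correct, and it follows the same overall strategy as the paper's: parametrize $L$ injectively by a semilinear set via Lemma \ref{injective}, pass to a semi-simple (disjoint, simple) decomposition so that every point has a unique coordinate representation, and then realize each simple linear set by a bounded regular block whose length-counting function matches. You do, however, streamline two steps. First, you invoke Lemma \ref{injective} once on $L$ itself; the paper instead first splits $L$ into pairwise disjoint $\DCM$ languages $L_1,\ldots,L_m$ (using closure of $\DCM$ under complement and intersection), applies Lemma \ref{injective} to each, and then re-derives injectivity of $\phi$ on the union from disjointness of the images. Since Lemma \ref{injective} already delivers a single $B$ with $\phi(B)=L$ and $\phi$ injective on $B$, and semi-simplification changes only the presentation of $B$ and not the set, your shortcut is legitimate and avoids that closure argument entirely. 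Second, you build the blocks over fresh pairwise-disjoint alphabets with letters repeated $d_{j,s}=\ell(\vec{v}_s^{(j)})$ times, so that cross-block injectivity is automatic and lengths come out right by construction; the paper instead works over the fixed alphabet $\{a_1,\ldots,a_k\}$, proves that the Parikh map is injective on the union of the blocks (using disjointness of the semi-simple pieces and simplicity within each), and then applies the injective coding homomorphism $a_i\mapsto a_i^{|u_i|}$ to fix the lengths. Your version trades a larger alphabet for $L'$ (which the lemma does not constrain) for a shorter injectivity argument; you also handle the one genuine corner case (the empty word possibly lying in several blocks) correctly. Both routes are valid and effective.
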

\begin{proof}

Let $\phi$ be a function from $\mathbb{N}_0^k$ to $u_1^* \cdots u_k^*$ such that 
$\phi((l_1, \ldots, l_k)) = u_1^{l_1} \cdots u_k^{l_k}$. By Lemma \ref{phi}, there exists a semilinear
set $B$ of $\mathbb{N}_0^k$ such that $\phi(B) = L$ \cite{ibarra1978}. Let
$B = B_1\cup \cdots \cup B_m$, where $B_i$, $1 \leq i \leq m$ are linear sets. 
Let $L_1 = \phi(B_1) \in \DCM$ (by Lemma \ref{phi}), 
$L_2 = \phi(B_2) - L_1 \in \DCM$ (by Lemma \ref{phi} and since $\DCM$ is closed under intersection
and complement \cite{ibarra1978}), etc.\ until $L_m = \phi(B_m) - (L_1 \cup \cdots \cup L_{m-1}) \in \DCM$ (inductively, by Lemma \ref{phi}, by closure of $\DCM$ under intersection, complement, and union). Then $L_1 \cup \cdots \cup L_m = L$, and also $L_1, \ldots, L_m$
are pairwise disjoint, and therefore
by Lemma \ref{injective}, there is a semilinear
set $B_i'$ such that $\phi(B_i') =L_i$, and $\phi$ is injective on $B_i'$.
It is known that, given any set of constants and periods generating a semilinear set $Q$, 
there is a procedure to effectively construct another set of constants and periods that forms a semi-simple set,
also generating $Q$ \cite{Flavio,Sakarovitch} (this is phrased more generally in both works, to say that the rational sets of a commutative monoid
are semi-simple; but in our special case it amounts to constructing a new set of constants and periods generating the same
semilinear set such that the linear sets are disjoint, and the periods generating each linear set form a basis). 
Hence, each $B_i'$ must also be semi-simple as well (generated by a possibly different set of constants
and periods). Let $B' = B_1' \cup \cdots \cup B_m'$. Since each word in $L$ is only in exactly one language of $L_1, \ldots, L_m$,
it follows that for each $(l_1, \ldots, l_k)$, $\phi((l_1, \ldots, l_k))$ is in at most one language of $L_1, \ldots, L_m$.
And, since $\phi$ is injective on each $B_i'$, it 
therefore follows that $\phi$ is injective on $B'$. Also, $\phi(B') = L = \phi(B)$.

The rest of the proof then continues just as Theorem 10 of \cite{Flavio} (starting at the second paragraph) which we describe. That is, define an alphabet $A = \{a_1, \ldots, a_k\}$, and let $\psi$ be the Parikh map of $A^*$ to $\mathbb{N}_0^k$.
For every linear set $B''$ making up any of the semilinear sets of some $B_i'$, let
$B''$ have constant $b_0$ and 
periods $b_1, \ldots, b_t$. Define the regular language $R_{B''} = v_0 v_1^* \cdots v_t^*$
where $v_0, \ldots, v_t$ are any fixed words of $A^*$ such that, for every $i$,
$\psi(v_i) = b_i$. Thus, $\psi(R_{B''}) = B''$, for each $B''$. Let $R$ be the union of all $C_{B''}$ over all the linear
sets making up $B_1', \ldots, B_m'$. Certainly $R$ is a regular language.

It is required to show that $\psi$ is injective on $R$. Indeed, consider
$x, y$ be two distinct elements in $R$. If $x,y$ are constructed from two different linear sets
from distinct semilinear sets $B_i', B_j', i \neq j$, then $\psi(x) \neq \psi(y)$ since the semilinear
sets $B_i'$ and $B_j'$ are disjoint. If $x,y$ are constructed from two different linear sets
making up the same semilinear set $B_i'$, then since $B_i'$ is semi-simple,
the linear sets must be disjoint, and hence $\psi(x) \neq \psi(y)$.
If $x,y$ are in the same linear set, then $\psi(x)\neq \psi(y)$ since the linear
set must be simple, its periods form a basis, and therefore, there is only one
linear combination giving each. Hence, $\psi$ is injective on $R$.

Consider the map such that, for every $i$, $1 \leq i \leq k$, $a_i$ maps to
$a_i^{|u_i|}$, and extend this to a homomorphism $\chi$ from $A^*$ to $A^*$.
Since $\chi(A)$ is a code, $\chi$ is an injective homomorphism of $A^*$ to
itself. Let $L' = \chi(R)$. Then $L'$ is a regular language.

Next, it will be shown that $f_L(n) = f_{L'}(n)$. Consider the relation
$\zeta = \phi^{-1} \psi^{-1} \chi$. Then, when restricting $\zeta$ to 
$L$, this is a bijection between $L$ and $L'$, since $\phi$ is a bijection
from $B'$ to $L$, $\psi$ is a bijection of $R$ to $B'$, and $\chi$ is a bijection
of $R$ to $L'$. It only remains to show that, for each $u \in L$,
\begin{equation}
|u| = |\zeta(u)|,
\label{conc}
\end{equation}
which therefore would imply $f_L(n) = f_{L'}(n)$.
For each $u \in L$, then
$u = u_1^{l_1} \cdots u_k^{l_k} = \phi((l_1, \ldots, l_k)) = \phi(\psi(x))$,
where $x$ is in $R$ and $\psi^{-1}((l_1, \ldots, l_k))$.
Since $|x| = \sum_{1 \leq i \leq k}|x|_{a_i} = \sum_{1 \leq i \leq k} l_i$,
then $$|\chi(x)| = \sum_{1 \leq i \leq k} |x|_{a_i} |\chi(a_i)| = \sum_{1 \leq i \leq k} l_i|\chi(a_i)| = \sum_{1 \leq i \leq k} l_i|u_i| = |u|.$$ Thus, \ref{conc} is true, and
the theorem follows.
\qed \end{proof}
We should note that if the bounded language $L \subseteq a_1^* \cdots a_k^*$, where
$a_1, \ldots, a_k$ are distinct symbols, then there is a simpler proof of the theorem
above as follows: Let $L \subseteq a_1^* \cdots a_k^*$ where the symbols are distinct.
Then the Parikh map of $L$ is semilinear, and therefore a regular language $L'$
can be built with the same Parikh map, and in this language $f_L(n) = f_{L'}(n)$. 
But when the bounded language $L$ is not of this form, this simpler proof does not work. 

Our next result is a generalization of the previous result.

\begin{theorem} 
Let $L \subseteq u_1^* \cdots u_k^*$, 
for words $u_1, \ldots, u_k$ where $L$ is in any  
semilinear trio $\LL$. 
There exists a bounded
regular language $L'$ such that, for every $n \geq 0$, $f_L(n) = f_{L'}(n)$.
Moreover, $L$ is strongly counting-regular. Furthermore, if $u_1, \ldots, u_k$ are given,
and all closure properties are effective in $\LL$, then $L'$ is effectively constructible.
\label{cor9}
\end{theorem}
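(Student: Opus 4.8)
The plan is to reduce this theorem to the already-settled case of $\DCM$ languages, exploiting the two facts immediately available: the cited result \cite{CIAA2016} that every bounded language in a semilinear trio belongs to $\DCM$, and the preceding lemma, which handles bounded $\DCM$ languages directly. Essentially all the genuine work has been done in those two ingredients, so the main task is to assemble them and to check that the argument is stable under the operations defining strong counting-regularity.

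First I would establish the basic claim. Since $L \subseteq u_1^* \cdots u_k^*$, the language $L$ is bounded, and by hypothesis it lies in the semilinear trio $\LL$. By \cite{CIAA2016}, every bounded language in any semilinear trio is in $\DCM$, so $L \in \DCM$. Applying the preceding lemma to $L$ (with the same words $u_1, \ldots, u_k$) then yields a bounded regular language $L'$ with $f_L(n) = f_{L'}(n)$ for all $n \geq 0$, which shows $L$ is counting-regular and proves the first assertion.

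For strong counting-regularity, I would observe that the whole chain of reasoning is preserved under intersection with an arbitrary regular language $L_1$. Because $\LL$ is a trio it is closed under intersection with regular languages, hence $L \cap L_1 \in \LL$; and since $L \cap L_1 \subseteq L \subseteq u_1^* \cdots u_k^*$, it is again a bounded language in a semilinear trio. Repeating the previous step—invoke \cite{CIAA2016} to place $L \cap L_1$ in $\DCM$, then apply the preceding lemma—gives a bounded regular language with the same counting function as $L \cap L_1$. As $L_1$ was arbitrary, this is exactly strong counting-regularity.

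The effectiveness clause I would handle by tracing the two constructions. Assuming $u_1, \ldots, u_k$ are given and all closure properties of $\LL$ are effective, the passage from $L$ to a $\DCM$ machine via \cite{CIAA2016} is effective, and the construction of $L'$ in the preceding lemma was already shown to be effective; composing them gives an effective construction of $L'$. The only point demanding care—and the closest thing to an obstacle here—is confirming that the effectiveness hypotheses placed on $\LL$ are precisely those consumed by the $\DCM$-membership procedure of \cite{CIAA2016}, so that no non-effective step is silently introduced. Once that alignment is checked, the theorem follows by chaining the two effective procedures.
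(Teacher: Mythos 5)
Your proposal is correct and matches the paper's own argument exactly: the paper likewise invokes the result of \cite{CIAA2016} that every bounded language in a semilinear trio is (effectively) a $\DCM$ language, applies the preceding lemma, and obtains strong counting-regularity from the observation that intersecting with a regular language preserves both boundedness and membership in the trio. No substantive differences.
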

Again, this follows from \cite{CIAA2016} since it is known that every
bounded language from any such semilinear trio where the closure properties are effective
can be effectively converted
into a $\DCM$ language. Strong counting-regularity follows since intersecting
a bounded language in a trio with a regular language produces another bounded language
that is in $\LL$, since trios are closed under intersection with regular languages.

Also, since the family of regular languages is the smallest semilinear trio \cite{G75}, it follows that the counting functions for the bounded languages in every semilinear trio are identical.
\begin{corollary}
Let $\LL$ be any semilinear trio. The counting functions for the bounded
languages in $\LL$ are identical to the counting functions for the bounded regular languages.
\end{corollary}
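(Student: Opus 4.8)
The plan is to derive the corollary directly from Theorem~\ref{cor9} together with the observation that the regular languages form a semilinear trio, which is in fact the smallest such family. Concretely, I would argue as follows. Fix an arbitrary semilinear trio $\LL$ and let $L \subseteq u_1^* \cdots u_k^*$ be any bounded language in $\LL$. By Theorem~\ref{cor9}, there is a bounded regular language $L'$ with $f_L(n) = f_{L'}(n)$ for all $n \geq 0$. This already shows that every counting function arising from a bounded language in $\LL$ is also the counting function of some bounded regular language; that is, the counting functions realized by bounded $\LL$-languages form a subset of those realized by bounded regular languages.

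For the reverse inclusion, I would invoke the fact, recalled in the excerpt, that the family of regular languages is itself a semilinear trio (indeed the smallest one, by \cite{G75}). Hence every bounded regular language is a bounded language lying in the semilinear trio $\LL$, since the regular languages are contained in $\LL$: any trio contains all regular languages because it is closed under intersection with regular languages (intersecting $\Sigma^*$, obtainable from the closure properties, with an arbitrary regular language yields that regular language), and the regular languages are trivially semilinear. Therefore the counting functions of bounded regular languages are a subset of the counting functions of bounded $\LL$-languages as well.

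Combining the two inclusions gives that the set of counting functions of bounded languages in $\LL$ coincides exactly with the set of counting functions of bounded regular languages, which is the statement of the corollary. The only subtlety worth stating carefully is the claim that every regular language belongs to an arbitrary trio $\LL$; once one records that a trio is closed under intersection with regular languages and contains at least one nonempty language (so that $\Sigma^*$ is available via homomorphism or by the standard trio generators), this is immediate. I do not expect any genuine obstacle here: the corollary is essentially a bookkeeping consequence of Theorem~\ref{cor9} applied in both directions, the nontrivial direction being entirely subsumed by that theorem and the trivial direction following from the minimality of the regular languages among semilinear trios.
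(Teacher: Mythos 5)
Your proposal is correct and follows essentially the same route as the paper: the nontrivial inclusion is exactly Theorem~\ref{cor9}, and the converse inclusion is the observation (which the paper also invokes, citing \cite{G75}) that the regular languages form the smallest semilinear trio, so every bounded regular language already lies in $\LL$. The paper states this in a single sentence preceding the corollary; you have merely spelled out the two inclusions explicitly.
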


This works for many semilinear full trios. We will briefly discuss some in the next example.
\begin{example}
\label{semilinearfulltrioexamples}
The families accepted/generated from the following grammar/machine models form semilinear full trios (the closure properties and semilinearity are effective):
\begin{enumerate}
\item the context-free languages, $\CFL$s, 
\item one-way nondeterministic reversal-bounded multicounter machines, $\NCM$s, \cite{ibarra1978},
\item finite-index $\ETOL$ systems ($\ETOL$ systems where the number of non-active
symbols in each derivation is bounded by a constant) \cite{RozenbergFiniteIndexETOL}.
\item $k$-flip $\NPDA$s ($\NPDA$s with the ability to ``flip'' their pushdown up to $k$ times) \cite{Holzer2003}. 
\item one-way reversal-bounded queue automata (queue automata with a bound on the number of switches between enqueueing and dequeueing) \cite{Harju}.
\item $\NTM$s with a one-way read-only input tape and a finite-crossing worktape \cite{Harju},
\item uncontrolled finite-index indexed grammars (a restricted version of indexed grammars, where
every accepting derivation has a bounded number of nonterminals),  \cite{LATA2017}.
\item multi-push-down machines (a machine with multiple pushdowns where the machine can simultaneously push to all pushdowns, but can only
pop from the first non-empty pushdown) \cite{multipushdown}.
\end{enumerate}
Moreover, all of these machine models can be augmented by reversal-bounded counters and
the resulting machines are semilinear full trios \cite{Harju,fullaflcounters}.
\end{example}

\begin{corollary}
Let $L \subseteq u_1^* \cdots u_k^*$, be a bounded language
for given words $u_1, \ldots, u_k$, such that $L$ is from any of the 
families listed in Example \ref{semilinearfulltrioexamples}.
Then there exists an effectively constructible bounded
regular language $L'$ such that, for every $n \geq 0$, $f_L(n) = f_{L'}(n)$.
\end{corollary}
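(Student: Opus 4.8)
The plan is to obtain this corollary as an immediate specialization of Theorem~\ref{cor9}, so that no new argument beyond a bookkeeping check is required. The key observation is already packaged in the statement of Example~\ref{semilinearfulltrioexamples}: every one of the eight listed families---and, by the closing remark of that example, each of their augmentations by reversal-bounded counters---is a \emph{semilinear full trio} in which the closure properties and semilinearity are \emph{effective}. Since a full trio is in particular a trio, each such family is a semilinear trio possessing, constructively, all the closure operations demanded by Theorem~\ref{cor9}: inverse homomorphism, $\epsilon$-free homomorphism, and intersection with regular languages.

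Concretely, I would fix one family $\LL$ from Example~\ref{semilinearfulltrioexamples}, an arbitrary bounded language $L \subseteq u_1^* \cdots u_k^*$ in $\LL$, and the witnessing words $u_1, \ldots, u_k$ (which are given by hypothesis). Because $\LL$ is a semilinear trio whose closure properties and whose semilinearity witnesses are all effective, the hypotheses of the ``furthermore'' clause of Theorem~\ref{cor9} hold verbatim. Applying that theorem then produces an effectively constructible bounded regular language $L'$ satisfying $f_L(n) = f_{L'}(n)$ for every $n \geq 0$, which is exactly the assertion of the corollary. As the family $\LL$ was arbitrary among those catalogued, the conclusion holds uniformly across the example, with no family requiring separate treatment.

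The only point that warrants attention---and the step I expect to be the main, if minor, obstacle---is the \emph{effectivity} bookkeeping threaded through Theorem~\ref{cor9}. That theorem chains together Lemma~\ref{phi} (computing $\phi(Q)$ and $IND(L)$), Lemma~\ref{injective} (producing a semilinear set on which $\phi$ is injective), and the semi-simple decomposition of the resulting semilinear sets, and each link must be executed constructively. The resolution is that all of these reduce to (i) applying the trio closure operations, (ii) extracting semilinear witnesses, and (iii) purely arithmetic manipulation of the constants and periods of semilinear sets; the first two are precisely the operations that Example~\ref{semilinearfulltrioexamples} certifies to be effective for each of these families, while the third is unconditional. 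Thus the corollary is genuinely a repackaging of Theorem~\ref{cor9} against the catalogue of effective semilinear full trios, and the proof amounts to verifying that each listed family supplies the effective trio structure the theorem consumes.
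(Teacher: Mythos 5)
Your proposal is correct and matches the paper's treatment: the corollary is stated there without a separate proof precisely because it is the direct specialization of Theorem~\ref{cor9} to the families of Example~\ref{semilinearfulltrioexamples}, each of which is certified as an effective semilinear full trio (hence an effective semilinear trio). Your additional bookkeeping about the effectivity chain through Lemma~\ref{phi}, Lemma~\ref{injective}, and the semi-simple decomposition is consistent with, and slightly more explicit than, what the paper leaves implicit.
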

Note that it is not assumed for these models that the machines are unambiguous, like in
Theorem \ref{main2}.

The results in this section assumed that the words $u_1, \ldots, u_k$ 
such that $L \subseteq u_1^* \cdots u_k^*$ are given. However, it is an open problem whether, given a language $L$ in an arbitrary semilinear trio $\LL$, it is possible to determine whether
$L \subseteq u_1^* \cdots u_k^*$ for some words $u_1, \ldots, u_k$.

\section{Closure Properties for Counting-Regular Languages}
\label{sec:closure}

In this section, we will address the closure properties of counting-regular languages, and also
counting-regular $\CFL$'s. 

First, it is immediate that counting-regular languages are closed under reversal (and since the $\CFL$s
are closed under reversal, so are the counting-regular $\CFL$s). Next Kleene-* will be addressed.
\begin{theorem}
\label{code}
If $L$ is counting-regular and $L$ is a code, then $L^*$ is counting-regular.
\end{theorem}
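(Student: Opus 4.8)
The plan is to reduce the statement to Soittola's theorem (Theorem \ref{Soittola}) by analysing the length generating function of $L^*$. Let $L'$ be a regular language with $f_L(n) = f_{L'}(n)$ for all $n$, and set $A(z) = \sum_{n \ge 0} f_L(n) z^n$. Since $A$ is the generating function of the regular language $L'$, it is a rational function with non-negative integer coefficients, and because $L$ is a code we have $L \subseteq \Sigma^+$, so $f_L(0) = 0$ and hence $A(0) = 0$. (If $L = \emptyset$ then $L^* = \{\epsilon\}$ is regular and there is nothing to prove, so assume $A \not\equiv 0$.) The first step is to express $f_{L^*}$ through $A$: since $L$ is a code, every word of $L^*$ factors uniquely as a concatenation of words of $L$, so the words of length $n$ in $L^*$ are in bijection with the finite sequences $(x_1, \dots, x_m)$, $m \ge 0$, $x_i \in L$, with $|x_1| + \cdots + |x_m| = n$. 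Counting these sequences by convolution of the $f_L$ values gives
\[
f_{L^*}(n) = [z^n] \sum_{m \ge 0} A(z)^m = [z^n] \frac{1}{1 - A(z)},
\]
where $\frac{1}{1-A(z)}$ is well defined as a formal power series because $A(0) = 0$.

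Write $F(z) = \frac{1}{1-A(z)}$. As $A$ is rational with integer coefficients and $A(0)=0$, $F$ is again rational with integer coefficients, so the sequence $(f_{L^*}(n))$ is $\mathbb{Z}$-rational, and it is non-negative since it counts words. By Theorem \ref{Soittola} it therefore suffices to show that $(f_{L^*}(n))$ is the merge of $\mathbb{Z}$-rational sequences each having a dominating pole. The poles of $F$ are exactly the zeros of $1 - A(z)$, so the analysis focuses on where $A(z) = 1$.

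The key observation is that $A$ is rational with non-negative coefficients, so its only singularities are poles, and by Pringsheim's theorem its dominant singularity lies on the positive real axis, say at $z = r$ (with $r = \infty$ when $A$ is a polynomial, i.e.\ $L$ finite). On $[0,r)$ the function $A$ is continuous and strictly increasing from $A(0) = 0$, and since $r$ is a pole and the coefficients are non-negative, $A(z) \to +\infty$ as $z \to r^-$. Hence there is a unique $\rho \in (0,r)$ with $A(\rho) = 1$; moreover $A'(\rho) > 0$, so $\rho$ is a simple zero of $1 - A$ and a simple pole of $F$. For $|z| \le \rho$ one has $|A(z)| \le A(|z|) \le A(\rho) = 1$, so $\rho$ is a dominant pole. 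When $A$ is aperiodic (i.e.\ $\gcd\{n : f_L(n) > 0\} = 1$) the inequality is strict for $|z| = \rho$, $z \ne \rho$, so $\rho$ is the only dominant pole and $F$ has a dominating pole; then $(f_{L^*}(n))$ is a $\mathbb{Z}$-rational sequence with a dominating pole and Theorem \ref{Soittola} applies directly. When $\gcd\{n : f_L(n)>0\} = d > 1$, write $A(z) = \tilde A(z^d)$ with $\tilde A$ aperiodic; then $f_{L^*}(n) = 0$ unless $d \mid n$, and $(f_{L^*}(dm))_m = [w^m]\tfrac{1}{1-\tilde A(w)}$ has a dominating pole by the previous case, so $(f_{L^*}(n))$ is the merge of this sequence with $d-1$ zero sequences. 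In either case Theorem \ref{Soittola} shows that the sequence is regular, i.e.\ $L^*$ is counting-regular.

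I expect the pole analysis of the third paragraph to be the main obstacle: one must justify carefully that $\rho$ is a simple dominant pole of $F$ and, in the periodic case, that the dominant poles are exactly the rotations $\rho\zeta$ by $d$-th roots of unity $\zeta$, so that the decomposition into a merge is legitimate. This is the analogue of the aperiodic/periodic dichotomy for non-negative rational generating functions, and it is precisely the structure that Theorem \ref{Soittola} is designed to detect.
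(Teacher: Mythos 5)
Your proof is correct in its essentials, but it takes a genuinely different and much heavier route than the paper's. Both arguments begin from the same key fact: since $L$ is a code, unique factorization gives $f_{L^*}(n)=[z^n]\,\frac{1}{1-A(z)}$, so $f_{L^*}$ depends only on the counting function of $L$. The paper then avoids analysis entirely: it takes a regular $R$ with $f_R=f_L$, tags the first letter of every word of $R$ to obtain a regular \emph{code} $R'$ with the same counting function, and applies the same convolution identity to $R'$ to conclude $f_{(R')^*}=f_{L^*}$; since $(R')^*$ is regular, the proof ends there, in a few lines and fully effectively. You instead re-derive from scratch that $\frac{1}{1-A(z)}$ is the counting series of some regular language via Soittola's theorem, which forces the Pringsheim/dominant-pole analysis and the aperiodic versus periodic dichotomy. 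That analysis is sound, but it is precisely the work the tagging trick makes unnecessary: once $A$ is realized as the generating series of a regular code $R'$, the series $\frac{1}{1-A}=\sum_{m\ge 0}A^m$ is automatically the counting series of the regular language $(R')^*$. Two small wrinkles remain in your version: (i) in the periodic case, the $d-1$ zero components of the merge do not literally ``have a dominating pole'' under the paper's statement of Theorem \ref{Soittola}, so you need the fuller form of Soittola's theorem (which also admits zero or polynomial components) or a direct observation that merging with zero sequences preserves regularity; (ii) the strict monotonicity of $A$ on $[0,r)$ should be justified by noting that $L\neq\emptyset$ and $L\subseteq\Sigma^{+}$ force a positive coefficient at some $n\ge 1$. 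Neither is a genuine gap, and both issues simply do not arise in the paper's elementary argument, which is also constructive and thus better suited to the effectiveness claims made elsewhere in the paper.
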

\begin{proof}
Since $L$ is a code, for each word $w \in L^*$, there is a unique decomposition of
$w = u_1 \cdots u_k$, where each $u_i \in L$. Since $L$ is counting-regular, there is
some regular language $R$ with the same counting function. From $R$, make
$R'$ where the first letter of each word is tagged with a prime, and all other letters are unmarked.
Now, $R'$ is a code because of the tagged letters, and $R'$ has the same counting function
as $R$.

Moreover, $(R')^*$ has the same counting function as $L^*$. Indeed, let $n \geq 0$. Consider all sequences $u_1, \ldots, u_k$ such that $n = |u_1| + \cdots + |u_k|$. Then for each
$u_i$, $L$ has the same number of words of length $|u_i|$ as does $R'$. Since $R'$ is a code,
it follows that there are the same number of such sequences using elements from $R'$.
\qed \end{proof}

A similar relationship to codes exists for concatenation.
\begin{theorem}
\label{prefixcode}
If $L_1, L_2$ are counting-regular and either $L_1$ is a prefix code or $L_2$ is a suffix code, then $L_1 L_2$ is counting-regular.
\end{theorem}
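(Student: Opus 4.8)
The plan is to mirror the proof of Theorem \ref{code}, exploiting the prefix-code (or suffix-code) property to guarantee unique factorizations of length-$n$ words in $L_1 L_2$. By the reversal-closure of counting-regular languages noted just before Theorem \ref{code}, the suffix-code case reduces to the prefix-code case: if $L_2$ is a suffix code, then $L_2^R$ is a prefix code, $L_1^R, L_2^R$ are counting-regular, and $(L_1 L_2)^R = L_2^R L_1^R$, so proving the prefix-code case suffices. Hence I would assume without loss of generality that $L_1$ is a prefix code.

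First I would record the key combinatorial consequence: since $L_1$ is a prefix code, for any word $w \in L_1 L_2$ and any factorization $w = u v$ with $u \in L_1$, $v \in L_2$, the factor $u$ is uniquely determined. Indeed, if $w = u v = u' v'$ with $u, u' \in L_1$, then one of $u, u'$ is a prefix of the other, and the prefix-code property forces $u = u'$ (and hence $v = v'$). This means the map $(u,v) \mapsto uv$ from $L_1 \times L_2$ restricted to pairs producing a fixed $w$ is injective, so $f_{L_1 L_2}(n) = \sum_{i+j = n}(\text{number of distinct products of a length-}i\text{ word of }L_1\text{ with a length-}j\text{ word of }L_2)$, and by uniqueness this count equals $f_{L_1}(i) \cdot f_{L_2}(j)$.

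Next I would build the regular witness. Let $R_1, R_2$ be regular languages with $f_{R_1} = f_{L_1}$ and $f_{R_2} = f_{L_2}$, which exist by counting-regularity. As in Theorem \ref{code}, I would tag the letters so that the concatenation becomes unambiguous at the syntactic level: form $R_1'$ from $R_1$ by marking, say, the first letter of each nonempty word with a prime (handling $\epsilon \in R_1$ as a separate case), so that $R_1'$ is a prefix code with $f_{R_1'} = f_{R_1} = f_{L_1}$, and leave $R_2$ with a disjoint, untagged alphabet so that in any product $x y$ with $x \in R_1'$, $y \in R_2$, the boundary between $x$ and $y$ is recoverable. Then $R_1' R_2$ is regular and, by the same unique-factorization argument applied to the tagged alphabet, $f_{R_1' R_2}(n) = \sum_{i+j=n} f_{R_1'}(i) f_{R_2}(j) = \sum_{i+j=n} f_{L_1}(i) f_{L_2}(j) = f_{L_1 L_2}(n)$, completing the proof.

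The main obstacle I anticipate is the tagging/boundary-recovery bookkeeping: I must ensure that $R_1' R_2$ has the same counting function as $L_1 L_2$ despite $R_1'$ and $R_2$ not themselves being codes in a way that controls their concatenation. The clean fix is to make the alphabets of $R_1'$ and $R_2$ disjoint (or to prime-tag the last letter of $R_1'$ words) so that every word of $R_1' R_2$ has a uniquely identifiable split point, matching the genuine uniqueness coming from the prefix-code hypothesis on $L_1$; the empty-word edge cases (when $\epsilon \in L_1$ or $\epsilon \in L_2$) also need a brief separate check, but these are routine once the counting convolution identity is in hand.
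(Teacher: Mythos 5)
Your overall strategy is the same as the paper's: use the prefix-code hypothesis to get unique factorizations of words of $L_1L_2$ (so that $f_{L_1L_2}(n)=\sum_{i+j=n}f_{L_1}(i)f_{L_2}(j)$), take regular witnesses $R_1,R_2$, and tag $R_1$ so that $R_1'R_2$ is also uniquely factorizable and hence has the same convolution as its counting function. Your reduction of the suffix-code case to the prefix-code case via reversal is also fine (the paper just says that case is ``similar'').

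The one genuine problem is in your main construction: marking the \emph{first} letter of each word of $R_1$ does not make $R_1'$ a prefix code and does not make the split point of a word of $R_1'R_2$ recoverable. For instance, with $R_1=\{a,ab\}$ and $R_2=\{b,\epsilon\}$ (and $\epsilon$ may indeed lie in $R_2$), one gets $R_1'=\{a',a'b\}$, where $a'$ is a proper prefix of $a'b$, and the word $a'b$ factors both as $a'\cdot b$ and as $a'b\cdot\epsilon$; so $f_{R_1'R_2}$ can undercount the convolution. The fix you mention only parenthetically at the end --- tag the \emph{last} letter of each $R_1$-word, which is exactly what the paper does --- is not an optional alternative but the necessary choice: with the last letter primed, every proper prefix of a word of $R_1'$ ends in an unprimed letter, so $R_1'$ is a prefix code, and the unique primed letter in any word of $R_1'R_2$ identifies the split point. (Making the alphabets of $R_1'$ and $R_2$ entirely disjoint would also work.) Finally, note that by the paper's definition a prefix code is a subset of $\Sigma^+$, so $\epsilon\notin L_1$ and hence $f_{R_1}(0)=0$; the empty-word edge case you flag for $R_1$ cannot arise.
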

\begin{proof}
Assume first that $L_1$ is a prefix code, so that $L \cap L \Sigma^+ = \emptyset$. Let $w = uv, u \in L_1, v \in L_2$. Then this decomposition is unique since $L_1$ is a prefix code. Let $R_1, R_2$ be regular
languages with the same counting functions as $L_1, L_2$ respectively. Let $R_1'$ be obtained from $R_1$ by tagging the last letter with a prime. Then $R_1'$ is also a prefix code. Further,
the counting function for $R_1'R_2$ is equal to that of $L_1L_2$. 

The case is similar for $L_2$ being a suffix code.
\qed \end{proof}

\begin{corollary}
\label{corcode}
If $L_1, L_2 \subseteq \Sigma^*$ are counting-regular, and $\$,\#$ are new symbols, then $L^R,
L_1 \$L_2, \$L_1 \cup \#L_2$, and $(L\$)^*$ are counting-regular.
\end{corollary}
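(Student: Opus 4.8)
The plan is to reduce all four assertions to the two code-based closure results just proved (Theorems~\ref{code} and~\ref{prefixcode}), the already-noted closure of counting-regular languages under reversal, and one auxiliary observation about fresh markers. The observation is that appending or prepending a single symbol $c$ preserves counting-regularity: if $R$ is a regular witness for a counting-regular $L$ (so $f_R(n)=f_L(n)$ for all $n$), then $Rc$ and $cR$ are regular, and $f_{Rc}(n)=f_R(n-1)=f_L(n-1)=f_{Lc}(n)$ for every $n$ (both sides being $0$ at $n=0$), and symmetrically for $cL$. I would record this first, since then $\$L_1$, $L_1\$$, $\#L_2$, and $L\$$ are all immediately counting-regular, with regular witnesses obtained by tagging the witnesses for $L_1$, $L_2$, $L$.

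For the reversal $L^R$, nothing further is needed: it is counting-regular by the stated closure under reversal. For $L_1\$L_2$, I would use that $\$$ lies outside $\Sigma$, so every word of $L_1\$$ ends in a $\$$ that occurs nowhere else; hence no word of $L_1\$$ is a proper prefix of another, i.e.\ $L_1\$$ is a prefix code. Applying Theorem~\ref{prefixcode} to the factorization $(L_1\$)\cdot L_2$, with both factors counting-regular, yields counting-regularity of $L_1\$L_2$. For $(L\$)^*$, the same marker argument shows that the positions of $\$$ in any concatenation of words of $L\$$ determine the block boundaries uniquely, so $L\$$ is a code; since it is counting-regular, Theorem~\ref{code} gives that $(L\$)^*$ is counting-regular.

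The one case not covered by a code theorem is $\$L_1\cup\#L_2$, and this is where care is required, since counting-regular languages are not closed under union in general. Here the two distinct fresh markers force $\$L_1$ and $\#L_2$ to be disjoint at every length, and likewise their regular witnesses $\$R_1$ and $\#R_2$ (where $R_i$ is a regular witness for $L_i$); consequently the counting functions add, $f_{\$L_1\cup\#L_2}(n)=f_{\$L_1}(n)+f_{\#L_2}(n)=f_{\$R_1}(n)+f_{\#R_2}(n)=f_{\$R_1\cup\#R_2}(n)$, and $\$R_1\cup\#R_2$ is regular. This disjointness-forced additivity is the only genuinely delicate point in the argument; all the remaining cases are routine once the fresh-marker tagging remark is in place.
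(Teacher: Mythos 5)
Your proposal is correct and follows essentially the same route as the paper: reversal by the closure already noted, $L_1\$L_2$ via Theorem~\ref{prefixcode} using that $L_1\$$ is a prefix code, $(L\$)^*$ via Theorem~\ref{code} using that $L\$$ is a code, and the marked union by passing to $\$R_1\cup\#R_2$ for regular witnesses $R_1,R_2$. Your explicit preliminary remark that tagging with a fresh symbol preserves counting-regularity, and the spelled-out additivity of the counting functions for the disjoint marked union, are details the paper leaves implicit, but the underlying argument is identical.
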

\begin{proof}
The first was discussed above. The second follows from Theorem \ref{prefixcode} and since
$L_1 \$$ is a prefix code. The fourth
follows from Theorem \ref{code} and since $L\$$ is a code. For the third, since $L_1,L_2$ are
counting-regular, this implies there exist regular languages $R_1, R_2$ with the same
counting functions, and $\$L_1 \cup \#L_2$ has the same counting function as 
$\$R_1 \cup \#R_2$.
\qed
\end{proof}

This means that even though e.g.\ non-reversal-bounded $\DPDA$s can accept non-counting-regular
languages but reversal-bounded $\DPDA$s cannot, if a $\DPDA$ was reversal-bounded but
reading a $\$$ caused a ``reset'' where the pushdown emptied, and another reversal-bounded computation was then possible, then this model would only accept counting-regular languages.
This is also the case with say $\DTM$s where the worktape was reversal-bounded, but reading
a $\$$ caused a reset, where more reversal-bounded computations were again possible.
This is quite a general model for which this property holds.

The next questions addressed are whether these are true when removing the $\$$ and $\#$
(or removing or weakening the coding properties).

\begin{theorem}
The counting-regular languages (and the counting-regular $\CFL$'s) are not closed under union or intersection with regular languages.
\end{theorem}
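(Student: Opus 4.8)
The plan is to prove both non-closure claims at once by exhibiting concrete counterexamples, \emph{all of which happen to be context-free}, so that the assertions for counting-regular languages and for counting-regular $\CFL$'s follow simultaneously. The essential point is that no new hard work is needed: the individual building blocks are counting-regular by Corollary \ref{cortomain} and Theorem \ref{MAJ}, while the failures of counting-regularity for the composite languages have already been established in Theorems \ref{counter-examples} and \ref{MAJ}. The task here is just to assemble these facts and check that the composites remain context-free.

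For non-closure under union I would reuse the two languages
\[
L_3 = \{x 1^n \mid x \in \{a,b\}^*,\ |x|_a = n\}, \qquad
L_4 = \{x 1^n \mid x \in \{a,b\}^*,\ |x|_b = n\}
\]
from the proof of Theorem \ref{counter-examples}. Each is accepted by a $\DCM(1,1)$, hence is a reversal-bounded $\DPDA$ language and so is strongly counting-regular by Corollary \ref{cortomain}; in particular each is a counting-regular $\CFL$, and its counting function is rational. On the other hand, $L_3 \cup L_4$ is a context-free $2$-ambiguous $\NCM(1,1)$ language that was shown \emph{not} to be counting-regular in the proof of Theorem \ref{counter-examples}(4), via inclusion--exclusion on counting functions together with the non-rational generating function $1/\sqrt{1-4z^3}$ of $L_3 \cap L_4$. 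Since $L_3,L_4$ are counting-regular $\CFL$'s and $L_3 \cup L_4$ is a $\CFL$ that is not counting-regular, this witnesses non-closure under union for both families.

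For non-closure under intersection with a regular language I would reuse $L_{\MAJ}$ together with the $\DFA$ $M_2$ of Figure \ref{fig3}. The language $L_{\MAJ}$ is a deterministic one-counter language, hence a $\CFL$, and is counting-regular by Theorem \ref{MAJ}, while $L(M_2)$ is regular. By the proof of Theorem \ref{MAJ}, $L_{\MAJ} \cap L(M_2)$ has a non-rational generating function and so is not counting-regular; it nevertheless remains context-free because $\CFL$'s are closed under intersection with regular languages. Thus $L_{\MAJ}$ is a counting-regular $\CFL$, $L(M_2)$ is regular, and their intersection is a $\CFL$ that is not counting-regular, witnessing non-closure under intersection with regular languages for both families.

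The genuine content of this theorem is therefore entirely inherited from the earlier results; the only things to verify locally are that the building blocks $L_3,L_4,L_{\MAJ}$ are counting-regular $\CFL$'s (immediate from Corollary \ref{cortomain} and Theorem \ref{MAJ}) and that the composite counterexamples are still context-free (immediate from closure of $\CFL$ under union and under intersection with regular languages). Accordingly, the only real obstacle --- already overcome in the preceding proofs --- is showing that each composite language has a non-rational generating function, which relied on Berstel's characterization (Theorem \ref{regularcharacterization}) and the asymptotics of central binomial coefficients.
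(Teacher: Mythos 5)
Your intersection witness is exactly the paper's: $L_{\MAJ}$ together with the $\DFA$ $M_2$ of Figure \ref{fig3}, with non-counting-regularity of $L_{\MAJ}\cap L(M_2)$ imported from Theorem \ref{MAJ}. That half is fine.

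The union half has a real mismatch with the statement being proved. The theorem asserts non-closure under union \emph{with regular languages} (the paper's proof explicitly says ``for non-closure under union with regular sets'' and exhibits $L_{\MAJ}\cup L(M_2)$, where $L(M_2)$ is regular). Your witness is $L_3\cup L_4$, and neither $L_3$ nor $L_4$ is regular, so what you have shown is the weaker fact that counting-regular languages are not closed under union of two counting-regular $\CFL$'s. That is a correct and internally sound argument (the inclusion--exclusion step $f_{L_6}=f_{L_3}+f_{L_4}-f_{L_5}$ together with the non-rationality of $1/\sqrt{1-4z^3}$ is exactly Theorem \ref{counter-examples}(4)), but it does not establish the stated claim, since a family can fail to be closed under union of two of its members while still being closed under union with every regular language. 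To close the gap you would need a witness in which one operand is regular; the paper's choice works because $f_{L_{\MAJ}}$ and $f_{L(M_2)}$ are both rational (each equal to $(1-z)/(1-2z)$), while $f_{L_{\MAJ}\cap L(M_2)}$ is not, so inclusion--exclusion forces $f_{L_{\MAJ}\cup L(M_2)}$ to be non-rational. In short: keep your intersection argument, but replace the union witness with $L_{\MAJ}\cup L(M_2)$ (or any counting-regular $\CFL$ unioned with an actual regular language for which the same inclusion--exclusion trick applies).
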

\begin{proof} Recall the $\DFA$ $M_2$ presented in Figure \ref{fig3}. Since $L_{\MAJ}$ is a counting-regular $\CFL$, and $L_{MAJ} \cap L(M_2)$ is not, the non-closure under intersection with regular sets follows.

For non-closure under union with regular sets, we show that  $L_{\MAJ} \cup L(M_2)$ is not counting-regular by explicitly computing the generating function for this language using the fact that there is a 1-1 mapping between strings of $L_{\MAJ}$ and $\overline{L_{MAJ}}$ and between strings of $L(M_2)$ and $\overline{L(M_2)}$.  Thus the generating functions for both $f_{L_{MAJ}}(n)$ and $f_{L(M_2)}(n)$ are ${1-z} \over {1-2z}$, from which it follows that the generating function for $L_{MAJ} \cup L(M_2)$ is ${{1-2z-z^2} \over {1-2z}}-$ ${{z^2+z}  \over {\sqrt {1-4z^2}}}$. Since this is not a rational function, the claim follows.
\qed \end{proof}
Thus, Corollary \ref{corcode} cannot be weakened to remove the marking from the marked union.

It is an open question as to whether Theorem \ref{code} can be weakened to remove the code
assumption, but we conjecture that it cannot.
However, for concatenation, we are able to show the following:
\begin{theorem}
The counting-regular languages (and counting-regular $\CFL$'s) are not closed under concatenation with regular languages.
\end{theorem}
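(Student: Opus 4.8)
The plan is to exhibit a single context-free witness that settles both assertions at once. Take $L = L_{\MAJ}$, which is counting-regular by Theorem \ref{MAJ} and is context-free (it is accepted by a deterministic one-counter machine), and take the regular language $R = \{0,1\}^*$. Since the context-free languages are closed under concatenation with regular languages, $L_{\MAJ}R$ is itself a context-free language; hence it suffices to prove that $L_{\MAJ}\{0,1\}^*$ is \emph{not} counting-regular, and this simultaneously shows non-closure for the counting-regular languages and for the counting-regular $\CFL$'s.

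First I would compute $f_{L_{\MAJ}\{0,1\}^*}(n)$ by passing to the complement. A word $w$ lies in $L_{\MAJ}\{0,1\}^*$ exactly when some nonempty prefix of $w$ belongs to $L_{\MAJ}$. I would show that $w \in \overline{L_{\MAJ}\{0,1\}^*}$ iff every prefix $p$ of $w$ satisfies $|p|_1 \le |p|_0$. Indeed, if $w$ begins with $1$ then the prefix $1$ already lies in $L_{\MAJ}$, forcing $w$ (if it avoids $L_{\MAJ}$-prefixes) to begin with $0$; and once $w$ begins with $0$, a prefix with $|p|_1 = |p|_0$ begins with $0$ and so fails the ``equal number and starts with $1$'' clause, meaning an $L_{\MAJ}$-prefix occurs precisely when some prefix first attains $|p|_1 > |p|_0$. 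Encoding $0$ as an up-step $+1$ and $1$ as a down-step $-1$, the length-$n$ words of $\overline{L_{\MAJ}\{0,1\}^*}$ are then exactly the $\pm 1$ lattice paths of length $n$ that start at height $0$ and never descend below $0$, and it is classical that there are $\binom{n}{\lfloor n/2\rfloor}$ of these. Thus $f_{L_{\MAJ}\{0,1\}^*}(n) = 2^n - \binom{n}{\lfloor n/2\rfloor}$.

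Finally I would argue non-rationality. Suppose $L_{\MAJ}\{0,1\}^*$ were counting-regular, witnessed by a regular language $R'$. Since the generating function of a regular language is rational and $\sum_n 2^n z^n = \frac{1}{1-2z}$ is rational, the series $\sum_n \binom{n}{\lfloor n/2\rfloor} z^n = \frac{1}{1-2z} - \sum_n f_{R'}(n)z^n$ would also be rational. But the coefficients of a rational power series are asymptotically of the form $c\,\alpha^n n^k$ with $k$ a nonnegative integer, whereas by Stirling's formula $\binom{n}{\lfloor n/2\rfloor} \sim c\,2^n n^{-1/2}$, whose exponent of $n$ is $-\tfrac12$; exactly as in the proof of Theorem \ref{counter-examples}, this is incompatible with every case of Theorem \ref{regularcharacterization}, yielding the contradiction.

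The main obstacle I anticipate is not the lattice-path identity (which is standard) but applying the non-rationality correctly: the counting function $f_{L_{\MAJ}\{0,1\}^*}(n) = 2^n - \binom{n}{\lfloor n/2\rfloor}$ has leading asymptotics $\sim 2^n$, which on its own is perfectly consistent with case (iii) of Theorem \ref{regularcharacterization}, so the obstruction is hidden entirely in the subexponential $n^{-1/2}$ correction. The argument must therefore be carried out at the level of generating functions — isolating the $\binom{n}{\lfloor n/2\rfloor}$ term by subtracting the rational series $\frac{1}{1-2z}$ — rather than by a crude asymptotic comparison of $f_{L_{\MAJ}\{0,1\}^*}$ itself.
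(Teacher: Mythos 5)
Your proof is correct, but it uses a different witness than the paper. The paper takes $S_1 = \{a^n b v a^n \mid v \in \{a,b\}^*\}$, shows directly that $f_{S_1}(n) = f_{(aa)^*b(a+b)^*}(n)$, and observes that $S_1(a+b)^* = S$, the language already proved non-counting-regular in Theorem~\ref{counter-examples}(3) via Flajolet's result that its generating function has infinitely many poles and hence is not algebraic. You instead reuse $L_{\MAJ}$ from Theorem~\ref{MAJ} and concatenate with $\{0,1\}^*$; your prefix characterization of the complement is right (the ``equal counts and starts with $1$'' clause is subsumed, for the purpose of having \emph{some} prefix in $L_{\MAJ}$, by the ``more $1$'s'' clause applied to the length-one prefix), the ballot-path count $\binom{n}{\lfloor n/2\rfloor}$ is the classical one, and your insistence on subtracting the rational series $\frac{1}{1-2z}$ before testing asymptotics is exactly the right precaution --- it mirrors how the paper handles $L_5 = L_3 \cup L_4$ in Theorem~\ref{counter-examples}(4). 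What each approach buys: the paper's route is shorter because the hard part (non-counting-regularity of $S$) is outsourced to a cited result of Flajolet, whereas yours is self-contained modulo standard facts about central binomial coefficients; your witness also has the aesthetic advantage of recycling the paper's running example $L_{\MAJ}$. One small strengthening you might make: rather than invoking the asymptotic form of coefficients of rational series (which, stated carefully, needs a $\limsup$ non-vanishing argument for exponential polynomials), you could note that the even part of a rational function is rational, and the even part of $\sum_n \binom{n}{\lfloor n/2\rfloor} z^n$ is $\sum_n \binom{2n}{n} z^{2n} = \bigl(1-4z^2\bigr)^{-1/2}$, which is algebraic of degree $2$ and hence not rational --- the same style of argument the paper uses for $L_6$.
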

\begin{proof}
Let $S_1$ = $\{ w \ |\ w = a^n b v a^n$ for some $v \in \{a, b\}^*\}$ and let 
$S$ = $\{ w \ |\ w = a^n b v_1 a^n v_2$ for some $v_1, v_2 \in \{a, b\}^*\}$ (as in Theorem \ref{counter-examples}).
It is easy to see that $S$ = $S_1 (a+b)^*$. We already showed that $S$ is not counting-regular. It is easy to show that $S_1$ is a counting-regular $\CFL$. In fact, we can explicitly exhibit $f_{S_1}(n)$ as follows: $f_{S_1}(0)$ = 0 and, for $n \geq 1$, $f_{S_1}(n)$ = $\sum_{j=0}^{\lfloor (n-1)/2 \rfloor} 2^{n-2i-1}$. From this, it is easy to see that $f_{S_1}(n)$ = $f_L(n)$ for the regular language $L$ with regular expression $(aa)^*b(a+b)^*$. 
\qed \end{proof}
Hence, Corollary \ref{corcode} and Theorem \ref{prefixcode} cannot be weakened to remove the marking with marked concatenation or the coding properties. It is an open problem as to whether
Theorem \ref{prefixcode} is true when $L_1$ or $L_2$ are codes (the set of
suffix codes together with the set of prefix codes is a strict subset of the set of codes \cite{CodesHandbook}).

\begin{theorem}
\label{rightquotient}
The counting-regular languages (and counting-regular $\CFL$'s) are not closed under right quotient with a single symbol, and are not closed under left quotient with a symbol.
\end{theorem}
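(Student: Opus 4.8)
The plan is to realize a known non-counting-regular language as a one-symbol quotient of a counting-regular $\CFL$. The source of non-regularity will be the language $L_{Eq}$ over $\{a,b\}$ of all words with as many $a$'s as $b$'s, which by Theorem~\ref{counter-examples}(1) is not counting-regular. Crucially, $L_{Eq}$ is accepted by a deterministic one-counter machine, hence is a deterministic $\CFL$, so its complement $\overline{L_{Eq}}$ (with respect to $\{a,b\}^*$) is also a $\CFL$; this closure-under-complement is what will let the whole construction stay inside $\CFL$.

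For right quotient, I would take two fresh symbols $c,d$ and set
$$ L = L_{Eq}\,c \;\cup\; \overline{L_{Eq}}\,d . $$
The first step is to check that $L$ is counting-regular. The two parts of the union are disjoint, since their words end in $c$ and $d$ respectively, so $f_L(n) = f_{L_{Eq}}(n-1) + f_{\overline{L_{Eq}}}(n-1)$; as $L_{Eq}$ and $\overline{L_{Eq}}$ partition $\{a,b\}^*$, this sum equals $2^{n-1}$ for $n\ge 1$ and $0$ for $n=0$, which is exactly $f_{R}(n)$ for the regular language $R = c(a+b)^*$. The second step is the quotient itself: a word $xc$ belongs to $L$ only through the first component, because every word of the second component ends in $d\neq c$, so $\{x \mid xc\in L\} = L_{Eq}$, which is not counting-regular. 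The third step records that $L$ is a $\CFL$: both $L_{Eq}c$ and $\overline{L_{Eq}}d$ are $\CFL$s (the latter because $\overline{L_{Eq}}$ is a $\CFL$) and $\CFL$s are closed under union, so the single language $L$ witnesses non-closure for both the counting-regular languages and the counting-regular $\CFL$s at once.

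The left-quotient claim is handled symmetrically by placing the markers on the left: set $L = c\,L_{Eq} \cup d\,\overline{L_{Eq}}$, verify in the same way that $f_L(n)=2^{n-1}$ so that $L$ is again a counting-regular $\CFL$, and observe that $\{y \mid cy\in L\} = L_{Eq}$ is not counting-regular.

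I expect the only genuine subtlety --- and the reason the complement has to enter at all --- to be the tension between two simultaneous requirements on $L$: its counting function must collapse to the regular value $2^{n-1}$, yet $L$ must remain a $\CFL$. Collapsing the count forces me to pad the construction with the ``missing'' words $\overline{L_{Eq}}$ so that the per-length total matches that of all of $\{a,b\}^*$; keeping things context-free then forces that padding set to be a $\CFL$, which is precisely why $L_{Eq}$ is chosen to be a deterministic $\CFL$ rather than an arbitrary non-counting-regular $\CFL$. Everything else is a routine disjointness-and-length bookkeeping check.
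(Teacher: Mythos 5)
Your construction is correct, and it takes a genuinely different route from the paper. The paper's proof works with the single language $L_{\MAJ}$ (already shown counting-regular in Theorem \ref{MAJ}) and argues that $L_{\MAJ}\{0\}^{-1}$ has counting function $2^{n-1}-\binom{n}{(n-1)/2}$ (odd $n$), whose generating function is then shown to be non-rational by the same kind of computation as in Theorem \ref{MAJ}; the left-quotient case is obtained by reversal. Your proof instead manufactures the witness: you pad $L_{Eq}$ with its complement under distinct end-markers so that the two disjoint pieces sum to the full binomial count $2^{n-1}$, and then the quotient by the single symbol $c$ surgically extracts the non-counting-regular piece $L_{Eq}$, whose status is already settled by Theorem \ref{counter-examples}(1). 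All the steps check out: the disjointness of the two marked components gives $f_L(n)=2^{n-1}$ for $n\ge 1$ (matching $c(a+b)^*$) without needing either component to be counting-regular on its own; closure of deterministic one-counter languages (hence DCFLs) under complement keeps $L$ context-free; and since no word of $\overline{L_{Eq}}\,d$ ends in $c$, the quotient is exactly $L_{Eq}$. What your approach buys is that it avoids any new generating-function computation, reducing everything to an already-proved non-counting-regularity result; what the paper's approach buys is a more "natural" single witness ($L_{\MAJ}$, a deterministic one-counter language with no added markers), which the paper then reuses in the surrounding discussion contrasting reversal-bounded and unrestricted counter machines. Both are valid; yours is arguably the more elementary argument.
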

\begin{proof} (sketch) 
First, it will be shown for right quotient.
The language $L_{\MAJ}$ used in Theorem \ref{MAJ} is counting-regular. In fact, it has exactly $2^{n-1}$ strings of length $n$ for all $n \geq 1$. We will outline an argument that $L$ =  $L_{\MAJ}\{0\}^{-1}$ is not counting-regular. The number of strings of length $n$ in $L$ is 
$2^{n-1} - {n \choose {n-1 \over 2}}$ (for odd $n$), $2^{n-1} - {n-2 \choose {n \over 2}}$ (for even $n$). Using a technique similar to the proof of Theorem \ref{MAJ}, we can show that the generating function for $L$ is not rational.

Next, it will be shown for left quotient.
Since $L_{\MAJ}$ is counting regular, and the reversal of every counting
regular language has the same counting function, then $L_{\MAJ}^R$ is also counting-regular. 
But, as in in proof for right quotient, $\{0\}^{-1}L_{\MAJ}^R$ is not 
counting-regular. 
\qed \end{proof}
The language $L_{MAJ}$ used is a deterministic one counter language with no reversal bound.
The following however provides a contrast, as it follows from closure properties of deterministic
machines.
When the languages are accepted by reversal-bounded machines, we have: 
\begin{theorem} 
\begin{enumerate}
\item If $L$ is accepted by a reversal-bounded 
$\DTM$, and $R$ is a regular language, then 
$LR^{-1}$ is counting-regular. 
\item If $L$ is accepted by an unambiguous reversal-bounded $\NTM$, and 
$x$ is a string, then $L \{x\}^{-1}$ is also counting-regular.
\end{enumerate}
\end{theorem}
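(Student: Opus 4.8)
The plan is to reduce both statements to the counting-regularity results already established, namely Theorem~\ref{main2} and Corollary~\ref{cortomain}, by exhibiting in each case a machine of the appropriate type --- a reversal-bounded $\DTM$ for (1), and an unambiguous reversal-bounded $\NTM$ for (2) --- that accepts the quotient language. Since every $\DTM$ is a special (deterministic, hence unambiguous) case of an $\NTM$ with a reversal-bounded worktape, once such a machine is built the conclusion follows immediately.

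I would dispatch (2) first, as it is the easier direction. Given an unambiguous reversal-bounded $\NTM$ $M$ for $L$ and a fixed string $x = x_1 \cdots x_m$, I construct $M'$ for $L\{x\}^{-1} = \{w \mid wx \in L\}$ as follows: on input $w$, $M'$ simulates $M$ reading $w$ from its one-way input, and when $w$ is exhausted it continues the simulation of $M$, feeding the symbols $x_1, \ldots, x_m$ one at a time from a hard-coded position counter in the finite control. Thus $M'$ simulates $M$ on $wx$ exactly, so $M'$ is reversal-bounded because $M$ is, and the one-way input is never revisited; moreover the accepting computations of $M'$ on $w$ are in bijection with the accepting computations of $M$ on $wx$, and since $M$ is unambiguous there is at most one such. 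Hence $M'$ is an unambiguous reversal-bounded $\NTM$, and Theorem~\ref{main2} applies.

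The harder direction is (1), where the quotient is by an arbitrary regular language $R$. Here I would show that the class of languages accepted by reversal-bounded $\DTM$s is closed under right quotient with regular languages, with the resulting machine again a reversal-bounded $\DTM$; Corollary~\ref{cortomain} then yields counting-regularity. Let $M$ be a $t$-reversal-bounded $\DTM$ for $L$ and $A$ a $\DFA$ for $R$. Because $M$ is deterministic, reading a prefix $x$ drives $M$ to a unique configuration $C = (q,\alpha,i)$ (state, worktape contents, worktape head position), and $x \in LR^{-1}$ holds exactly when some $y \in R$ drives $M$, continued from $C$, to acceptance. The machine $M'$ simulates $M$ on $x$ to reach $C$ and must then decide this existential reachability predicate. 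The decisive point --- and the main obstacle --- is to argue that, for fixed state/head-position data and fixed $R$, the set of worktape contents $\alpha$ from which some $R$-continuation leads to acceptance is a \emph{regular} set of strings over the worktape alphabet. This is exactly the kind of statement underlying the regularity of store languages of such machines \cite{StoreLanguages}: the reversal-bound forces the worktape head to make only boundedly many sweeps over $\alpha$, so admissible future behaviour is captured by crossing sequences at cell boundaries, and the existence of a consistent assignment compatible with both $\alpha$ and some $y \in R$ (the latter tracked by folding in $A$'s state progression) is a local, hence regular, condition on $\alpha$. Granting this, $M'$ finishes by making a bounded number of additional sweeps over its worktape to evaluate membership of $\alpha$ in this regular set, adding only a constant to the reversal count; thus $M'$ is a reversal-bounded $\DTM$ accepting $LR^{-1}$, and Corollary~\ref{cortomain} completes the proof.

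I expect the crossing-sequence analysis in (1) to be where essentially all the work lies; the remainder is bookkeeping. In particular, care is needed to fix the correct notion of the configuration ``just after $x$ is consumed'' for a machine that may make $\epsilon$-input worktape moves, and to confirm that the predicate-evaluation sweeps can be carried out deterministically and within a bounded number of reversals regardless of how many of $M$'s original reversals were already spent while reading $x$.
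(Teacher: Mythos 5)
Your proposal is correct and follows essentially the same route as the paper: part (2) is handled by the same direct simulation of $M$ on $wx$ with $x$ hard-coded in the finite control, and part (1) is reduced to closure of reversal-bounded $\DTM$ languages under right quotient with regular languages, which the paper simply cites from \cite{StoreLanguages} rather than re-deriving. The crossing-sequence/store-language argument you sketch for that closure fact is exactly the content of the cited result, so your extra work there is a (reasonable) expansion of a step the paper takes for granted.
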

\begin{proof} Part 1 follows from the fact that the languages accepted by 
reversal-bounded $\DTM$'s are closed under 
right-quotient with regular languages \cite{StoreLanguages}. 

For Part 2, clearly, if  $L$ is accepted by an unambiguous reversal-bounded 
 $\NTM$ $M$, we can construct 
an unambiguous reversal-bounded 
$\NTM$ $M$ accepting $L\{x\}^{-1}$. 
\qed \end{proof}
This result also holds for all machine models in Corollary \ref{cortomain}
(that is, all deterministic models listed there work with right quotient with
regular languages \cite{StoreLanguages}, and the unambiguous nondeterministic models there work
with right quotient with a word). It is an open question as to whether
unambiguous nondeterministic $\NTM$s with a reversal-bounded worktape
are closed under right quotient with regular languages, which would allow
part 2 to be strengthened.

Part 1 of of the next  theorem contrasts Part 1 of the previous theorem. 
\begin{theorem}
\begin{enumerate}
\item There is a counting-regular language $L$ accepted by a $\DCM(1,1)$ and distinct symbols $\$$ and $\#$ such that $\{\$,\#\}^{-1}L$ is not counting-regular. 
\item If $L$ is accepted by a reversal-bounded $\DPDA$ (resp., reversal-bounded 
unambiguous $\NPDA$, reversal-bounded  unambiguous $\NTM$), 
and $x$ is a string, then $\{x\}^{-1} L$ is also counting-regular. 
\end{enumerate}
\end{theorem}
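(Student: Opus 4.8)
The plan for Part 1 is to exploit the marked-union construction of Corollary \ref{corcode} and then strip the markers with the quotient. Recall the languages $L_3 = \{x1^n \mid x \in \{a,b\}^*, |x|_a = n\}$ and $L_4 = \{x1^n \mid x\in\{a,b\}^*, |x|_b = n\}$ from Theorem \ref{counter-examples}, both of which are $\DCM(1,1)$ (hence counting-regular by Corollary \ref{cortomain}) while their union $L_5 = L_3 \cup L_4$ is not counting-regular. I would take $L = \$L_3 \cup \#L_4$. A single deterministic one-counter machine reads the first input symbol and branches deterministically into the $L_3$- or $L_4$-machine according to whether it is $\$$ or $\#$; this keeps the counter $1$-reversal-bounded, so $L \in \DCM(1,1)$, and $L$ is counting-regular by the third item of Corollary \ref{corcode}. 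Since $\$ \neq \#$ and neither symbol occurs in $L_3 \cup L_4$, we have $\{\$\}^{-1}L = L_3$ and $\{\#\}^{-1}L = L_4$, whence $\{\$,\#\}^{-1}L = L_3 \cup L_4 = L_5$, which is not counting-regular. This supplies the required witness and explains the contrast with Corollary \ref{corcode}: the marked union is harmless, but the quotient removes exactly the markers that made it so.

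For Part 2, the plan is to build, from a machine $M$ accepting $L$, a machine $M'$ of the \emph{same} type accepting $\{x\}^{-1}L = \{y \mid xy \in L\}$, and then invoke Corollary \ref{cortomain} (and Theorem \ref{main2} for the $\NTM$ case). Rather than hard-coding the configuration $M$ reaches after reading $x$ (whose store may be large), $M'$ simulates $M$ directly on $xy$: it keeps in its finite control a position counter in $\{0,\ldots,|x|\}$, and in phase one (counter $<|x|$) it performs $M$'s moves but feeds the letters of $x$ from the control whenever $M$ consumes an input symbol; once the counter reaches $|x|$ it enters phase two and reads the genuine input $y$. The store (stack or worktape) is manipulated exactly as in $M$, so the reversal bound $t$ is preserved. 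Determinism is preserved in the $\DPDA$ and $\DTM$ cases, since $M'$ merely mirrors $M$'s deterministic computation on a fixed prefix followed by free input.

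The main obstacle is preserving unambiguity in the nondeterministic cases (unambiguous reversal-bounded $\NPDA$ and $\NTM$). Every accepting computation of $M'$ on $y$ corresponds to an accepting computation of $M$ on $xy$ split at the $x$--$y$ boundary, so I must ensure this split is unique; the delicate point is the $\epsilon$-moves (and, for the $\NTM$, the stay-moves on the worktape) that occur after the $|x|$-th input symbol is consumed but before the first symbol of $y$ is read. I would fix the convention that phase one ends exactly when $M$ consumes the $|x|$-th input symbol, so that all such boundary moves belong to phase two; then each accepting computation of $M$ on $xy$ splits in exactly one way, and since $M$ is unambiguous there is at most one accepting computation of $M$ on $xy$ for each $y$, giving at most one accepting computation of $M'$ on $y$ (the degenerate cases $x=\epsilon$ and $y=\epsilon$ are handled by the same convention). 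Hence $M'$ is an unambiguous reversal-bounded machine of the appropriate type, and $\{x\}^{-1}L = L(M')$ is (strongly) counting-regular by Corollary \ref{cortomain} and Theorem \ref{main2}.
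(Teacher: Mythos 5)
Your proposal is correct and follows essentially the same route as the paper: Part 1 uses the same witness $\$L_3 \cup \#L_4$ (the paper calls these $L_1, L_2$) with the quotient recovering the non-counting-regular union from Theorem \ref{counter-examples}, and Part 2 is the standard prefix-simulation construction, which the paper dismisses as obvious but which you have correctly fleshed out, including the one genuinely delicate point (fixing the phase boundary so that distinct accepting computations of $M'$ map to distinct accepting computations of $M$, preserving unambiguity).
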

\begin{proof}
For Part 1,  let 
$L_1 = \{x 1 ^n  \mid  x \in (a+b)^+, n = |x|_a \}$ and
$L_2 = \{x 1 ^n  \mid  x \in (a+b)^+, n = |x|_b \}$. 
Then $L_1$ and $L_2$ can each be accepted by a $\DCM(1,1)$. Let $L = L_1 \cup L_2$, shown
in Theorem \ref{counter-examples} to not be counting-regular.
Let $L' = \$L_1 \cup \#L_2$, which can also be accepted by a $\DCM(1,1)$, hence it is counting-regular.  However, $\{\$,\#\}^{-1}L' = L$ is not counting-regular. 
Part 2 is obvious.
\qed \end{proof}

It may seem obvious that for any counting-regular language $L$, $\overline{L}$ is counting-regular because of the following putative reasoning: If there is a regular language $L'$ whose counting function equals that of the counting function of $L$, the complement of $L'$ (which is regular) has the same counting function as that of $\overline{L}$. The fallacy in this argument is as follows. Suppose that the size of the alphabet over which $L$ is defined is $k$. The size of alphabet $k_1$ over which $L'$ is defined may be larger, i.e., $k_1 > k$. Thus, the complement of $L'$ has a counting function ${k_1}^n - f_{L'}(n)$ which is not the same as the counting function $k^n-f_L(n)$ of $\overline{L}$. 


In fact, the following result shows that the exact opposite of the fallacy is actually true.
\begin{theorem}
There is a counting-regular language $L$ (that is in $\P$, i.e., $L$ is deterministic polynomial time computable) such that $\overline{L}$ is not counting-regular.
\end{theorem}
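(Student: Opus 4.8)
The plan is to exploit the alphabet-size phenomenon highlighted in the paragraph preceding the statement. If $L \subseteq \Sigma^*$ with $|\Sigma| = k$, then $f_{\overline{L}}(n) = k^n - f_L(n)$, so by Theorem \ref{Beal} the language $\overline{L}$ is counting-regular exactly when $f_L$ is realizable as the counting function of a regular language over a $k$-letter alphabet. Hence it suffices to produce a language $L$ over a $k$-letter alphabet, decidable in polynomial time, whose counting sequence $f_L$ is regular (so that $L$ is counting-regular, witnessed by a regular language over a \emph{larger} alphabet) yet for which $k^n - f_L(n)$ fails to be regular. I would take $k = 2$.

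Concretely, I would first design the sequence $t_n := f_{\overline{L}}(n) = 2^n - f_L(n)$ to be a non-negative, integer-valued $\mathbb{Z}$-rational sequence that is \emph{not} regular, while arranging that $2^n - t_n$ \emph{is} regular; the engine is Soittola's Theorem \ref{Soittola}, by which a sequence is regular iff it is a merge of $\mathbb{Z}$-rational sequences each having a dominating (simple, strictly smallest-modulus) pole. I would let $t_n$ satisfy an integer linear recurrence whose characteristic roots all have the same modulus $\rho$ with $1 < \rho < 2$, consisting of a real root together with a complex-conjugate pair $\rho e^{\pm i\theta}$ whose argument $\theta$ is an irrational multiple of $\pi$; a convenient witness is the characteristic polynomial $x^4 - x^3 + 2x - 4 = (x^2-2)(x^2-x+2)$, with roots $\pm\sqrt2$ and $\tfrac{1\pm i\sqrt7}{2}$, all of modulus $\sqrt2$, the complex argument satisfying $2\cos\theta = \tfrac{1}{\sqrt2}$. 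Since $2\cos(\text{rational}\cdot\pi)$ is always an algebraic integer while $\tfrac{1}{\sqrt2}$ is not, $\theta$ must be an irrational multiple of $\pi$. Choosing the four initial values so that the real component strictly dominates the oscillating one keeps $t_n \geq 0$, and since $t_n = O(\rho^n) = o(2^n)$ one gets $0 \le t_n \le 2^n$ for all large $n$ (the finitely many small lengths are handled by a finite adjustment, which preserves both regularity and non-regularity). Because each arithmetic-progression sub-sequence of $t_n$ retains a dominant complex pair $\rho^p e^{\pm i p\theta}$ that is non-real (as $p\theta \not\equiv 0 \pmod \pi$) and of the same modulus as the real contribution, no sub-sequence has a dominating pole, so $t_n$ is not regular by Theorem \ref{Soittola}. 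Dually, $f_L(n) = 2^n - t_n$ is $\mathbb{Z}$-rational and non-negative with a unique smallest-modulus pole at $z = 1/2$ (simple and real, since all poles of $t$ lie at modulus $1/\rho > 1/2$), hence it has a dominating pole and is regular; thus $L$ will be counting-regular and $\overline{L}$ will not be.

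It then remains to realize $f_L$ by an explicit polynomial-time language. I would fix the bijection between words in $\{a,b\}^n$ and the integers $0,\dots,2^n-1$ via binary value ($a \mapsto 0$, $b \mapsto 1$) and set $L = \{\, w \mid \mathrm{val}(w) < 2^{|w|} - t_{|w|}\,\}$, i.e.\ the lexicographically first $f_L(n) = 2^n - t_n$ words of each length $n$. Then $f_L(n) = 2^n - t_n$ exactly, and membership of a word $w$ of length $n$ is decided by computing $\mathrm{val}(w)$ and evaluating $t_n$ from the linear recurrence, all in time polynomial in $n = |w|$ (the integers involved have $O(n)$ bits), so $L \in \P$.

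I expect the main obstacle to be the bookkeeping in the middle step: guaranteeing simultaneously that $t_n$ is a non-negative integer for \emph{every} $n$ (not merely asymptotically) while keeping the complex-pole coefficient nonzero, so that the Soittola obstruction genuinely survives on every arithmetic progression. This is where the choice of initial conditions and the irrationality of $\theta/\pi$ must be pinned down. Once $t_n$ is in hand, the regularity of $2^n - t_n$, the non-regularity of $t_n$, and the polynomial-time realization of $L$ are all straightforward.
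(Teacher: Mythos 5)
Your proposal is correct and follows essentially the same route as the paper: both exploit the fact that complementation is sensitive to alphabet size, take $L$ to be a lexicographic rank-threshold language whose counting function is $k^n - t_n$ for a non-negative $\mathbb{Z}$-rational sequence $t_n$ that is not regular while $k^n - t_n$ is, and decide membership in polynomial time by evaluating $t_n$ on inputs of length $n$. The only real difference is that the paper imports the witness sequence $t_n = b^{2n}\cos^2(n\theta)$ ready-made from B\'eal and Perrin (and computes it via Chebyshev polynomials for the $\P$ bound), whereas you rebuild an equivalent sequence --- a real dominant root paired with a complex-conjugate pair of equal modulus and irrational angle --- and verify non-regularity directly from Soittola's theorem; the bookkeeping you defer (non-negativity for all $n$ and non-vanishing of the complex-pole coefficient on every arithmetic progression, the latter following from the irrationality of $\theta/\pi$) is routine and does go through.
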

\begin{proof} The proof relies on a result presented in B\'{e}al and Perrin \cite{Beal}. B\'{e}al and Perrin \cite{Beal} provide an example of a sequence $\{r_n\}$, $n \geq 0$ and an integer $k$ such that $\{r_n\}$ is not a counting function of any regular language, but $\{k^n - r_n\}$ is the counting function of a regular language. Specifically, it is shown in \cite{Beal} that the sequence $r_n$ = $b^{2n} \cos^2(n\theta)$, with $cos\ \theta$ = $a \over b$, where the integers $a$, $b$ are such that $b \neq 2a$ and $0 < a < b$, and $k$ such that $b^2 < k$, satisfies the properties stated above.

Define a language $L$ over an alphabet of size $k$ as follows: arrange the strings of length $n$ over the alphabet $\{0, 1, \ldots , k-1\}$ lexicographically. A string $w$ of length $n$ is defined to be in $L$ if and only if the rank of $w$ (in the lexicographic order) is greater than $r_n$. (Rank count starts at 1.) Clearly, the number of strings of length $n$ in $L$ is exactly $s_n$ = $k^n - r_n$. As shown in \cite{Beal}, $L$ is counting-regular and  $\overline{L}$ is not counting-regular. 

Finally, we provide a sketch of the proof that there is a deterministic polynomial time algorithm for $L$. Given a string $w$ of length $n$, and an integer $T$ (in binary) where the number of bits in $T$ is $O(n)$, it is easy to see that there is a deterministic algorithm that determines in time polynomial in $n$ if the rank of $w$ is greater than $T$. (This algorithm simply converts $T$ from binary to base $k$ and compares the resulting string to $w$ lexicographically. Base conversion can be shown to have complexity no more than that of integer multiplication.)
To complete the algorithm, we need to show how to compute $r_n$ (in binary), given $n$, in time polynomial in $n$. Note that $r_n$ is given by $b^{2n} \cos^2(n\theta)$. Clearly, $b^{2n}$ can be computed in polynomial time by repeated multiplication by $b$ . We don't even need repeated squaring to achieve a polynomial bound since the time complexity is measured in terms of $n$, not $log\ n$. Also,  $\cos \ (n \ arccos (a/b))$ can be computed as follows: $\cos \ (n \ arccos (a/b))$ is the well-known Tchebychev polynomial $T_n(a/b)$ which is explicitly given by the series \cite{Mason}:
$$\sum_{m=0}^{\lfloor n/2 \rfloor} {n \choose {2m}} \left ( a \over b \right )^{n-2m} \left( \left ( a \over b \right )^2 -1 \right)^m. $$

Since each of the terms in the above series can be computed in time polynomial in $n$, and since there are $\lfloor {n \over 2} \rfloor$ terms in the series, it is clear that $r_n$ can be computed in time polynomial in $n$. 
\qed \end{proof}

It is evident from Corollary \ref{cortomain} and closure of reversal-bounded $\DPDA$'s
under complement that counting-regular reversal-bounded $\DPDA$'s are closed under complement. For $\DPDA$'s generally,
it is open whether counting-regular $\DPDA$'s are closed under complement.
 At the end of the proof of Theorem \ref{main2}, we observed that the mapping that we used to map the strings from a language $L$ accepted by an unambiguous reversal-bounded $\NTM$ to a regular language increased the size of the alphabet from $k$ to $k+1$. If for every counting-regular $\DPDA$, the size of the alphabet of the simulating $\NFA$ is $k$, then it will follow that counting-regular $\DCFL$'s are closed under complement.

In fact, we can show the following stronger claim.

\begin{theorem}
The following statements are equivalent:
\begin{enumerate}
\item For every counting-regular $\DCFL$ over a $k$-letter alphabet, there is a regular language $L'$ over a $k$-letter alphabet such that $f_L(n)$ = $f_{L'}(n)$ for all $n$.

\item Counting-regular $\DCFL$'s are closed under complement.
\end{enumerate}
\end{theorem}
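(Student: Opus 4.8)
The plan is to prove the two implications separately, with the B\'{e}al--Perrin characterization (Theorem~\ref{Beal}) carrying the essential content. The entire difficulty is the \emph{alphabet-size} bookkeeping flagged in the discussion preceding the statement: complementing a regular language interacts well with counting functions only when the complement is taken over an alphabet of the \emph{same} size as the language. So throughout I would fix an arbitrary counting-regular $\DCFL$ $L \subseteq \Sigma^*$ with $|\Sigma| = k$, and recall that the $\DCFL$'s are closed under complement, so that $\overline{L} = \Sigma^* \setminus L$ is again a $\DCFL$, with counting function $f_{\overline{L}}(n) = k^n - f_L(n)$.

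For the direction $(1) \Rightarrow (2)$, which is the elementary one, I would apply statement~(1) to $L$ to obtain a regular language $L'$ over a $k$-letter alphabet with $f_{L'}(n) = f_L(n)$ for all $n$. Complementing $L'$ \emph{inside that same $k$-letter alphabet} yields a regular language whose counting function is $k^n - f_{L'}(n) = k^n - f_L(n) = f_{\overline{L}}(n)$. Hence $\overline{L}$ is counting-regular; and since it is also a $\DCFL$, this shows that counting-regular $\DCFL$'s are closed under complement. No appeal to Theorem~\ref{Beal} is needed here.

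For the direction $(2) \Rightarrow (1)$, where the real content lies, I would reason as follows. Since $L$ is counting-regular, the sequence $\{f_L(n)\}$ is regular. By statement~(2) applied to the counting-regular $\DCFL$ $L$, the language $\overline{L}$ is again a counting-regular $\DCFL$, so the sequence $\{f_{\overline{L}}(n)\} = \{k^n - f_L(n)\}$ is also regular. These are precisely the two hypotheses of Theorem~\ref{Beal} for a $k$-letter alphabet, and its conclusion yields a regular language $L'$ \emph{over a $k$-letter alphabet} with $f_{L'}(n) = f_L(n)$ for all $n$, which is exactly statement~(1).

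The main obstacle is concentrated entirely in this second implication. A priori, counting-regularity of $\overline{L}$ only supplies \emph{some} regular witness, over an alphabet of unknown (possibly larger) size, and the passage from ``both $\{f_L(n)\}$ and $\{k^n - f_L(n)\}$ are regular sequences'' to ``$\{f_L(n)\}$ is realized by a regular language over exactly $k$ letters'' is not elementary. Theorem~\ref{Beal} is precisely the tool that licenses this step, so the proof ultimately amounts to recognizing that the two statements are the two faces of its biconditional. The remaining verifications --- that $\overline{L}$ is a $\DCFL$ and the arithmetic identity $f_{\overline{L}}(n) = k^n - f_L(n)$ --- are routine.
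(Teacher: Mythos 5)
Your proof is correct and follows essentially the same route as the paper's: the $(1)\Rightarrow(2)$ direction is the complement-within-the-same-$k$-letter-alphabet observation the paper defers to its preceding discussion, and the $(2)\Rightarrow(1)$ direction is exactly the paper's application of Theorem~\ref{Beal} to the two regular sequences $\{f_L(n)\}$ and $\{k^n - f_L(n)\}$. Your write-up is in fact somewhat more explicit about why the first direction needs the witness to live over exactly $k$ letters, which the paper only gestures at.
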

\begin{proof} (1) $\Rightarrow$ (2) is immediate from the above discussion.

To show that (2) $\Rightarrow$ (1), let $L$ be a counting-regular $\DCFL$ over a $k$-letter alphabet. This means $f_L(n)$ is a regular sequence. By (2), the complement of $L$ is counting-regular, so $k^n - f_L(n)$ is also a regular sequence. From Theorem \ref{Beal}, it follows that there is a regular language $L'$ over a $k$-letter alphabet such that $f_L(n)$ = $f_{L'}(n)$ for all $n$. 
\qed \end{proof}

Our conclusion is that the class of counting-regular languages (or counting-regular $\CFL$s) are very fragile in that it is not closed under basic operations such as union or intersection with regular languages. We conjecture that the counting-regular $\CFL$s are also not closed under Kleene star.

\section{Some Decision Problems Related to Counting-Regularity}
\label{sec:decidability}

In this section, some decision problems in regards to counting-regularity are addressed.
In particular, we will show the following:
(1) It is undecidable,
given  a real-time $1$-reversal-bounded $2$-ambiguous $\PDA$ $M$,
whether $L(M)$ (resp.\ $\overline{L(M)}$)
is counting-regular;
(2) It is undecidable, given a real-time $\NCM(1,1)$
 $M$, whether $L(M)$ (resp.\ $\overline{L(M)}$)
is counting-regular. 

We begin with the following result:









\begin{theorem}
\label{undecide1}
It is undecidable, given two real-time $1$-reversal-bounded $\DPDA$'s $M_1$ and $M_2$ accepting strongly counting-regular languages, whether $L(M_1) \cap L(M_2)$ is counting-regular. 
\end{theorem}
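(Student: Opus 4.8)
The plan is to reduce from the Post Correspondence Problem (PCP), whose (un)solvability is undecidable. First I would recall the standard encoding of a PCP instance with pairs $(u_1,v_1),\ldots,(u_k,v_k)$ as an intersection of two linear languages: over an alphabet containing index symbols $\{1,\ldots,k\}$, a fresh $\#$, and the letters of the $u_i,v_i$, put $L_1=\{\,i_1\cdots i_n\,\#\,(u_{i_1}\cdots u_{i_n})^R\,\}$ and $L_2=\{\,i_1\cdots i_n\,\#\,(v_{i_1}\cdots v_{i_n})^R\,\}$. A machine for $L_1$ pushes the block $u_{i_j}$ (a bounded string) on each index read, switches to popping at $\#$, and matches the reversed target symbol-by-symbol; this is deterministic, real-time, and makes a single stack reversal, and similarly for $L_2$. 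Thus $M_1,M_2$ are real-time $1$-reversal-bounded $\DPDA$'s, and by Corollary \ref{cortomain} each accepts a strongly counting-regular language, so the side hypothesis of the theorem is automatic. Moreover $L_1\cap L_2=\{\,i_1\cdots i_n\,\#\,(u_{i_1}\cdots u_{i_n})^R \mid u_{i_1}\cdots u_{i_n}=v_{i_1}\cdots v_{i_n}\,\}$ is nonempty if and only if the instance has a solution, so intersection-emptiness is already seen to be undecidable.

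The issue is that emptiness is not the same as counting-regularity: an empty intersection is trivially counting-regular, but a nonempty solution set can still be counting-regular (for instance a unique minimal solution $w$ yields the thin set $\{w^m\}$, which has $f(n)\le 1$). So the heart of the argument is to arrange the reduction so that a \emph{positive} instance forces a \emph{non-rational} generating function. To do this I would not reduce from plain PCP but from a Turing machine acceptance problem, choosing the induced PCP-style instance so that its solution set, when nonempty, inherits a central-binomial census of the kind exhibited by the language $L_6=\{x1^n\mid |x|_a=|x|_b=n\}$ of Theorem \ref{counter-examples}, whose generating function $1/\sqrt{1-4z^3}$ is algebraic but not rational. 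Concretely, I would build a ``balanced-choice'' (Dyck-like) gadget into the index encoding so that in the solvable case the number of solution encodings of a given length grows like $\binom{2m}{m}$, while in the unsolvable case the gadget can never be completed and the intersection is empty. The two matching machines $M_1,M_2$ are then exactly the $u$-matcher and $v$-matcher for this augmented instance, still real-time, deterministic, and $1$-reversal-bounded.

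With this in place the two cases separate cleanly: if the instance is negative then $L(M_1)\cap L(M_2)=\emptyset$, which is counting-regular; if it is positive then the generating function of $L(M_1)\cap L(M_2)$ has a branch-point singularity (coming from the $\binom{2m}{m}$ contribution) rather than a dominating pole, so by Theorem \ref{regularcharacterization} (equivalently Theorem \ref{Soittola}) it cannot be the generating function of any regular language, and the intersection is not counting-regular. Hence a decision procedure for counting-regularity of $L(M_1)\cap L(M_2)$ would decide the PCP-style instance, giving undecidability.

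The main obstacle, and the step I expect to require the most care, is the tension between the reversal budget and the counting gadget: each machine has only a single stack reversal, and the two reversals are consumed verifying the two sides of the correspondence, so there is no ``spare'' reversal with which to count. The delicate engineering is therefore to let the \emph{same} matching structure that enforces the correspondence also carry the combinatorial freedom that produces the central-binomial count, i.e.\ to make the solution set itself non-counting-regular by construction rather than appending a separate counter. The secondary technical point is the rigorous proof of non-rationality in the positive case: one must control how the length-census of the solution set multiplies the gadget's generating function so that the branch point at the gadget's singularity survives (it has nonnegative coefficients, so it cannot be cancelled), which is exactly the sort of analysis enabled by the Berstel--Soittola toolkit (Theorems \ref{regularcharacterization}, \ref{Soittola}, and \ref{Beal}) already used for Theorems \ref{counter-examples} and \ref{MAJ}.
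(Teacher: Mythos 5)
Your high-level strategy is the right one --- arrange a reduction so that a negative instance gives the empty intersection (trivially counting-regular) and a positive instance gives a language whose length census is central-binomial, hence has a non-rational generating function and cannot be counting-regular. You also correctly isolate the hard point: each $\DPDA$ has a single stack reversal, which seems to be fully consumed by the correspondence check, leaving nothing with which to count. But the proposal stops exactly there. The ``balanced-choice gadget woven into the index encoding'' is announced, not constructed, and it is precisely the construction that constitutes the proof. As written, there is no demonstration that a real-time $1$-reversal-bounded $\DPDA$ can simultaneously verify one half of an undecidable correspondence and impose the combinatorial constraint yielding $\binom{2m}{m}$; nor is there any control over the census of the solution set of a PCP-style instance in the positive case, which in general can be an arbitrary infinite set of solutions rather than something with a clean binomial count.

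The paper resolves both difficulties with two moves you did not find. First, it reduces from halting of a $\DTM$ $Z$ on blank tape and uses the classical even/odd split of the computation history: $L_1'$ checks $ID_i \Rightarrow ID_{i+1}$ for odd $i$ and $L_2'$ for even $i$, so that $L_1' \cap L_2'$ is either empty or a \emph{single word} $w$ (the unique halting history). This eliminates your census-control problem entirely: in the positive case the intersection of the checker languages contributes only a fixed factor $z^{|w|}$ to the generating function. Second, the counting gadget is attached \emph{outside} the correspondence check, not woven into it: $L_1 = \{x\,w\,1^n \mid x \in (a+b)^+,\ w \in L_1',\ |x|_a = n\}$ and $L_2$ analogously with $|x|_b = n$. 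The single reversal suffices because the machine pushes the count of $a$'s (resp.\ $b$'s) at the \emph{bottom} of the stack, then runs the $1$-reversal ID-check on top of it (push phase, then pop phase returning to the count), and finally keeps popping to match $1^n$ --- one reversal in total. The intersection in the positive case is then exactly $\{x\,w\,1^n \mid |x|_a = |x|_b = n\}$, whose census is $\binom{2n}{n}$ supported on lengths $3n+|w|$, giving an algebraic non-rational generating function. So your plan has the correct skeleton and the correct obstruction, but the missing construction is the theorem; without it the argument does not go through.
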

\begin{proof} 

Let $Z$ be a single-tape $\DTM$ working on an initially blank tape. We assume that if $Z$ halts on blank tape, it makes $2k$ steps for some $k \ge 2$.
We assume that $Z$ has one-way infinite tape and does not write blank. Let
$$L_1' = \{ \begin{array}[t]{l} ID_1 \# ID_3 \# \cdots  \# ID_{2k-1} \$  ID_{2k}^R \# \cdots \# ID_4^R \# ID_2^R \mid k \ge 2, \mbox{~each~} ID_i\\
\mbox{is a configuration of~} Z,  ID_1 \mbox{~is initial~} ID \mbox{~of~} Z \mbox{~on blank tape,}   \\
ID_{2k} \mbox{~is a unique halting~} ID,  ID_i \Rightarrow ID_{i+1} \mbox{~for~} i = 1, 3, \ldots, 2k-1 \},\end{array}$$
$$L_2' = \begin{array}[t]{l} \{ ID_1 \# ID_3 \# \cdots \# ID_{2k-1} \$  ID_{2k}^R \# \cdots  \# ID_4^R \# ID_2^R \mid k \geq 2, \mbox{~each~} ID_i\\
\mbox{is a configuration of~} Z,  ID_1 \mbox{~is initial~} ID \mbox{~of~} Z \mbox{~on blank tape}, \\
ID_{2k} \mbox{~is a unique halting~} ID,  ID_i  \Rightarrow ID _{i+1} \mbox{~for~} i = 2, 4, \ldots, 2k-2 \},\end{array}$$
with $L_1',L_2' \subseteq \Sigma^*$.
Clearly, $L_1'$  and $L_2'$ can be accepted by real-time $1$-reversal-bounded $\DPDA$'s $M_1'$ and $M_2'$. Moreover, $L_1' \cap L_2'$ is empty or
a singleton (if and only if $Z$ accepts, which is undecidable).
Let $a$, $b$, $1$ be three new symbols. Let
\begin{eqnarray*}
L_1 &=& \{ x w 1^n\ |\ x \in (a+b)^+, \ w \in L(M_1'),\ |x|_a = n \},\\
L_2 &=& \{x w 1^n\ |\  x \in (a+b)^+, \ w \in L(M_2'),\ |x|_b = n \}.
\end{eqnarray*}
We can construct a real-time $1$-reversal-bounded $\DPDA$ $M_1$ accepting $L_1$
as follows: $M_1$ reads $x$ and stores the number of $a$'s in the stack.
Then it simulates the $1$-reversal-bounded $\DPDA$ $M_1'$ on $w$, and
finally checks (by continuing to pop the stack) that the number
of $a$'s stored in the stack is $n$. Similarly, we can construct a 
real-time $1$-reversal-bounded $\DPDA$ $M_2$ accepting $L_2$.
By Corollary \ref{cortomain}, $L_1$ and $L_2$ are
strongly counting-regular.

Clearly, if $L(M_1') \cap L(M_2')$ = $\emptyset$,
then $L_1 \cap L_2$ = $\emptyset$,
hence $L_1 \cap L_2$ is counting-regular.

On the other hand, if $L(M_1') \cap L(M_2')$ is not empty, the intersection
is a singleton $w$ and thus $L_1 \cap L_2$ is given by:
$$L_1 \cap L_2 = \{ x w 1^n\ | \ x  \in (a+b)^+ ,\ |x|_a = |x|_b = n\}.$$

As in the proof of Theorem \ref{counter-examples}(ii), we can show that
$L_1 \cap L_2$ is not counting-regular.  In fact, if $|w|$ = $t$, then the 
number of strings of length $3n+t$ is ${2n}\choose{n}$ from which we can explicitly construct the generating function $f(z)$ for $L_1 \cap L_2$ as:
$$f(z) = {1 \over 3} \left ( {1 \over {\sqrt{1-4z^3}}} + {1 \over {\sqrt{1-4{\omega}z^3}}}  +  {1 \over {\sqrt{1-4{\omega^2}z^3}}} \right ) z^t.$$
Clearly $f(z)$ is not rational and the claim follows.
\qed \end{proof}

This result is used within the next proof:

\begin{theorem} \label{thm16}
It is undecidable, given a $\PDA$ $M$ that is real-time $2$-ambiguous and $1$-reversal-bounded,
whether $\overline{L(M)}$ is counting-regular.
Also, it is undecidable, given such a machine $M$,
whether $L(M)$ is counting-regular.
\end{theorem}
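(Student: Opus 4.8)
The plan is to reduce from Theorem \ref{undecide1}, reusing the two real-time $1$-reversal-bounded $\DPDA$'s $M_1,M_2$ constructed there (whose languages $L(M_1),L(M_2)$ are strongly counting-regular) for which it is undecidable whether $L(M_1)\cap L(M_2)$ is counting-regular. In both statements I would take $M$ to be a single nondeterministic $\PDA$ that, while reading its first input symbol, guesses which of two deterministic branches to run (with $\epsilon$ handled separately through the start state). Since each branch is deterministic, at most one accepting computation arises per branch, so $M$ is $2$-ambiguous; and since each branch is real-time and $1$-reversal-bounded, so is $M$.

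For the statement about $\overline{L(M)}$, I would first observe that the real-time $1$-reversal-bounded $\DPDA$ languages are effectively closed under complement while \emph{preserving} both the real-time property and the $1$-reversal bound: after converting each $M_i$ to accept by final state, one routes every undefined transition to a reject-sink that consumes the remaining input with the stack frozen. This changes only the state behaviour, reads one symbol per step, and performs no stack reversal, so swapping final and non-final states yields real-time $1$-reversal-bounded $\DPDA$'s $\overline{M_1},\overline{M_2}$ for $\overline{L(M_1)},\overline{L(M_2)}$. Letting $M$ guess at the first symbol whether to simulate $\overline{M_1}$ or $\overline{M_2}$ gives $L(M)=\overline{L(M_1)}\cup\overline{L(M_2)}=\overline{L(M_1)\cap L(M_2)}$, hence $\overline{L(M)}=L(M_1)\cap L(M_2)$. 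As $M$ is $2$-ambiguous, real-time, and $1$-reversal-bounded, and it is undecidable whether $L(M_1)\cap L(M_2)$ is counting-regular, the first claim follows directly.

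For the statement about $L(M)$ itself, I would instead let $M$ guess whether to simulate $M_1$ or $M_2$ directly, so that $L(M)=L(M_1)\cup L(M_2)$, with $M$ again $2$-ambiguous, real-time, and $1$-reversal-bounded. The key identity is $f_{L(M)}(n)=f_{L(M_1)}(n)+f_{L(M_2)}(n)-f_{L(M_1)\cap L(M_2)}(n)$. By Corollary \ref{cortomain} the languages $L(M_1),L(M_2)$ are strongly counting-regular, so their counting functions have rational generating functions. When the underlying machine $Z$ does not halt, $L(M_1)\cap L(M_2)=\emptyset$ and $L(M)$ is a disjoint union of counting-regular languages, which is counting-regular (take regular witnesses over disjoint alphabets and union them, adding the counting functions); when $Z$ halts, $f_{L(M_1)\cap L(M_2)}$ has the non-rational generating function computed in Theorem \ref{undecide1}, so $f_{L(M)}$ has a non-rational generating function and $L(M)$ is not counting-regular. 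Since halting is undecidable, counting-regularity of $L(M)$ is undecidable.

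The main obstacle I anticipate is the complement step in the first part: I must argue that complementing $M_1,M_2$ keeps them \emph{both} real-time and \emph{exactly} $1$-reversal-bounded, not merely reversal-bounded. This rests on the reject-sink never executing a push-to-pop reversal and on the absence of $\epsilon$-moves in a real-time $\DPDA$, which removes the usual end-of-input detection difficulties of $\DPDA$ complementation. In the second part the only delicate point is the equivalence between counting-regularity and rationality of the generating function, but I would use only the easy direction (a counting-regular language shares its counting function with a regular language, whose generating function is rational) for the halting case, and an explicit regular witness for the non-halting case, so no deep characterization is needed.
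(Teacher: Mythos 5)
Your proposal is correct. For the first claim (about $\overline{L(M)}$) you follow essentially the paper's route: the paper also takes $M$ to accept $\overline{L_1}\cup\overline{L_2}$ so that $\overline{L(M)}=L_1\cap L_2$, and it likewise asserts---without your explicit reject-sink justification---that real-time $1$-reversal-bounded $\DPDA$'s can be complemented within the class; your argument for why completing the transition function and swapping final states preserves both the real-time property and the reversal bound is a welcome elaboration of a step the paper leaves implicit. For the second claim you genuinely diverge: the paper keeps the \emph{same} machine $M$ (so that $L(M)=\overline{L_1\cap L_2}$, which is $\Sigma^*$ when $Z$ does not halt) and deduces non-counting-regularity in the halting case from the identity $g(z)=\frac{1}{1-sz}-f(z)$, where $f$ is the non-rational generating function of $\overline{L(M)}=L_1\cap L_2$; you instead build a second machine accepting $L_1\cup L_2$ and use inclusion--exclusion, $f_{L_1\cup L_2}=f_{L_1}+f_{L_2}-f_{L_1\cap L_2}$, together with rationality of $f_{L_1}$ and $f_{L_2}$ (which holds since they are counting-regular by Corollary \ref{cortomain}). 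Both arguments are sound: the paper's version has the small elegance of a single witness machine for both claims, while yours avoids reasoning about complements over the full alphabet and needs only the easy direction ``counting-regular implies rational generating function.'' One pedantic point in your non-halting case: when you take regular witnesses over disjoint alphabets and union them so that the counting functions add, you need the witnesses to be disjoint also at length $0$; this is harmless here because $L_1,L_2\subseteq(a+b)^+\Sigma^+1^+$ exclude the empty word, but it is worth stating.
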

\begin{proof}
Consider the languages $L_1$ and $L_2$ in the proof of Theorem \ref{undecide1}.  Since $L_1$ and $L_2$
can be accepted by real-time $1$-reversal-bounded $\DPDA$'s,
$\overline{L_1}$ and $\overline{L_2}$ 
can also be accepted by real-time $1$-reversal-bounded $\DPDA$'s.
Hence $L = \overline{L_1} \cup \overline{L_2}$ 
can be accepted by a real-time $1$-reversal-bounded $2$-ambiguous $\PDA$ $M$.
It follows that $\overline{L(M)}$ is
counting-regular if and only if
$L_1 \cap L_2$ is counting-regular, which is undecidable
from Theorem \ref{undecide1}.

Now consider $L(M)$.  If $Z$ does not halt on blank tape,
then $L(M) = \Sigma^*$, which is counting-regular.
If $Z$  halts on blank tape,
 then we will show that $L(M)$ is not counting-regular as follows:
the generating function $g(z)$ of $L(M)$ is given by 
${1 \over {1-sz}} - f(z)$ where $f(z)$ is the generating function of $\overline{L(M)}$, and $s$ is the size of the alphabet over which $M$ is defined. If $L(M)$ is counting-regular, then $g(z)$ is rational, then so is $f(z)$ = ${1 \over {1-kz}} - g(z)$, contradicting the fact that $f(z)$ is not rational.
Hence, $L(M)$ is counting-regular if and only if $Z$ does not halt.
 This completes the proof. 
\qed \end{proof}

Next, we will consider
$\NCM(1,1)$ ($1$-reversal-bounded $1$-counter machines).
\begin{theorem}
It is undecidable, given a real-time $\NCM$(1,1) $M$, whether $\overline{L(M)}$ is counting-regular. Also, it is undecidable, given such a machine $M$, whether $L(M)$ is counting-regular.
\end{theorem}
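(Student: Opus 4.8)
The plan is to follow the same skeleton as Theorems~\ref{undecide1} and~\ref{thm16}, reducing from the halting problem of a single-tape $\DTM$ $Z$ started on blank tape (adopting the convention from Theorem~\ref{undecide1} that $Z$ halts in $2k$ steps, $k\ge 2$, and never writes a blank). The only substantive change is that the two real-time $1$-reversal-bounded $\DPDA$'s used there---which push one configuration and pop-match it against the next, thereby checking \emph{all} tape cells at once---must be replaced by complements of real-time $\NCM(1,1)$ languages. The enabling observation is that the set of \emph{invalid} computation histories of $Z$ is accepted by a real-time $1$-reversal-bounded $1$-counter machine. Encode a history as $ID_1\#ID_2^R\#ID_3\#ID_4^R\#\cdots$, with every $ID_i$ padded by a fresh symbol to a common length, so that corresponding cells of two consecutive configurations are equidistant from the separating $\#$. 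Being syntactically malformed, or having a wrong initial or halting configuration, is a regular condition; a single faulty step $ID_i\not\Rightarrow ID_{i+1}$ is detected by guessing one discrepant position and ``folding'' across the adjacent $\#$---incrementing the counter up to that $\#$ and decrementing afterwards---so that a single $1$-reversal counter aligns and compares the two corresponding cells (and the bounded window around the head). Thus VALID, the set of all correct histories, is a singleton if $Z$ halts and empty otherwise.

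Now introduce new symbols $a,b,1$ and define
$$L_1=\{\,xw1^n \mid x\in(a+b)^+,\ |x|_a=n,\ w\in\text{VALID}\,\},\qquad L_2=\{\,xw1^n \mid x\in(a+b)^+,\ |x|_b=n,\ w\in\text{VALID}\,\}.$$
Each of $\overline{L_1}$ and $\overline{L_2}$ is accepted by a real-time $\NCM(1,1)$: the machine nondeterministically guesses which defining condition fails and verifies exactly that one, using the counter at most once. Guessing a structural defect (wrong shape of $xw1^n$, or $x\notin(a+b)^+$) needs no counter; guessing $|x|_a\neq n$ (respectively $|x|_b\neq n$) needs one up-then-down sweep comparing the number of $a$'s (respectively $b$'s) in $x$ with the number of trailing $1$'s; and guessing that $w$ is invalid uses the single sweep of the previous paragraph. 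Since each branch makes at most one reversal and reads one input symbol per step, $\overline{L_1},\overline{L_2}\in\NCM(1,1)$ are real-time. Because $\NCM(1,1)$ is closed under union (a real-time machine can commit to a branch while reading its first symbol), the language $L(M)=\overline{L_1}\cup\overline{L_2}=\overline{L_1\cap L_2}$ is accepted by a single real-time $\NCM(1,1)$ machine $M$.

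It remains to read off the counting behaviour of $\overline{L(M)}=L_1\cap L_2$, exactly as in Theorems~\ref{counter-examples} and~\ref{thm16}. Here
$$L_1\cap L_2=\{\,xw1^n \mid x\in(a+b)^+,\ |x|_a=|x|_b=n,\ w\in\text{VALID}\,\}.$$
If $Z$ does not halt then $\text{VALID}=\emptyset$, so $L_1\cap L_2=\emptyset$ and $L(M)=\Sigma^*$; both are counting-regular. If $Z$ halts then VALID is a singleton $\{w_0\}$ with $|w_0|=t$, the strings of $L_1\cap L_2$ have length $3n+t$, and their number is ${2n \choose n}$ (for $n\ge 1$), giving a generating function involving $1/\sqrt{1-4z^3}$, which is not rational; hence $\overline{L(M)}=L_1\cap L_2$ is not counting-regular. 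For $L(M)$ itself, if it were counting-regular its generating function $g(z)$ would be rational, and then so would $f(z)=\frac{1}{1-sz}-g(z)$ (with $s$ the alphabet size of $M$), the generating function of $\overline{L(M)}=L_1\cap L_2$---a contradiction. Thus both ``is $\overline{L(M)}$ counting-regular?'' and ``is $L(M)$ counting-regular?'' hold exactly when $Z$ fails to halt, and both are therefore undecidable.

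The main obstacle is the enabling observation of the first paragraph: one must confirm that a \emph{single} $1$-reversal counter can certify a faulty transition. The point is that invalidity is witnessed locally, by one discrepant cell (or the bounded window around the head), so it suffices to align and compare one position rather than an entire configuration; the alternating-reversed, equal-length encoding is precisely what makes the relevant offsets symmetric about a $\#$, so that the up-then-down sweep aligns them. The remaining care is bookkeeping: ensuring that in every nondeterministic branch of $\overline{L_1}$ and $\overline{L_2}$ the counter serves only one task (keeping the reversal bound at one) and that all checks run in real time.
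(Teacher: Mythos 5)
Your proposal follows essentially the same route as the paper: reduce from halting of a $\DTM$ $Z$ on blank tape, pad all configurations to a common length, design real-time $\NCM(1,1)$ machines that nondeterministically guess a single defect in the purported history and verify it with at most one counter reversal (the ``fold'' across a separator being the key alignment trick), take the union, and finish with the same generating-function computation as in Theorems~\ref{undecide1} and~\ref{thm16}. Your two deviations---a locally alternating encoding $ID_1\#ID_2^R\#ID_3\#ID_4^R\cdots$ instead of the paper's globally nested one, and putting the full validity of $w$ into both $L_1$ and $L_2$ rather than splitting odd/even transitions between them---are both harmless: the fold works in either encoding once the $ID$'s have equal length, and $L_1\cap L_2$ is the same singleton-$w$ language either way.

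Two bookkeeping points are glossed over and genuinely need to be stated. First, you fold the equal-length requirement on the $ID$'s into ``being syntactically malformed \ldots\ is a regular condition,'' but it is not regular; it needs its own nondeterministic branch with one up-then-down counter sweep comparing the lengths of two adjacent blocks (the paper lists this as a separate task). This matters because your faulty-transition branch relies on the fold aligning corresponding cells, which fails when lengths differ---so without the explicit length branch the machine can miss invalid histories whose only detectable defect is a length mismatch. Second, for VALID to be a \emph{singleton} when $Z$ halts, the common padding length must be pinned down; the paper does this by requiring the halting $ID$ to consist entirely of non-blanks in the unique halting state. Without such a condition there is one valid history for every sufficiently large padding length, VALID is infinite, and your closed form $\binom{2n}{n}$ at length $3n+t$ (hence the $1/\sqrt{1-4z^3}$ generating function) is no longer correct as stated. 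Both fixes are one-liners and are already present in the paper's version of the argument.
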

\begin{proof} 
Again, we will use the undecidability of the halting problem for a $\DTM$  $Z$ 
on an initially blank tape. As before, assume that $Z$ has a one-way infinite tape
and does not write blank symbols. Represent each $ID$ (configuration) of $Z$ with 
blank symbols filled to  its right, since for the  languages we will define below, we
require that all $ID_i$'s have the same length.  So, e.g,  $ID_1 = q_0  B \cdots B$  
(where $B$ is the blank symbol).  We also require that the halting $ID$ have all non-blanks in state $f$, which is unique.  Clearly since 
$Z$ does not write any blank symbols,  if $Z$ halts on the blank tape, the lengths of the $ID$'s in
the halting sequence of $ID$'s do not decrease in length.  Assume that if $Z$ halts, 
it halts after $k$ steps for some $k \ge 2$.  

Let
$$L_1 = \begin{array}[t]{l} \{ ID_1 \# ID_3 \# \cdots \# ID_{2k-1} \$  ID_{2k}^R \# \cdots  \# ID_4^R \# ID_2^R \mid k \ge 2, \mbox{~each~} ID_i\\
\mbox{is a configuration of~} Z,  ID_1 \mbox{~is initial~} ID \mbox{~of~} Z \mbox{~on blank tape,~} 
ID_{2k} \mbox{~is the}\\ \mbox{halting~} ID,  ID_i  \Rightarrow ID_{i+1} \mbox{~for~} i = 1, 3, \ldots, 2k-1, |ID_1| = \cdots = |ID_{2k}|\},\end{array}$$
$$L_2 = \begin{array}[t]{l}\{ ID_1 \# ID_3 \# \cdots  \# ID_{2k-1} \$  ID_{2k}^R \# \cdots  \# ID_4^R \# ID_2^R \mid k \ge 2, \mbox{~each~} ID_i \mbox{~is} \\
\mbox{~a configuration of~} Z,  ID_1 \mbox{~is initial~} ID \mbox{~of~} Z \mbox{~on blank tape}, ID_{2k} \mbox{~is the}\\
\mbox{halting~} ID,  ID_i  \Rightarrow ID _{i+1} \mbox{~for~} i = 2, 4, \ldots, 2k-2, |ID_1| = \cdots = |ID_{2k}|\},\end{array}$$
with $L_1,L_2 \subseteq \Sigma^*$.
Note that $ID_{2k}$ must have all non-blanks in
state $f$.

Let $a, b, 1$ be new symbols.  We can construct a real-time
$\NCM$(1,1)  $M_1$ which operates as follows. When given input string $z$,
$M_1$ nondeterministically  selects one of the following tasks to execute:

\begin{enumerate}
\item $M_1$ checks and accepts if  $z$ is not a string of the form $x w 1^n \in (a+b)^+ \Sigma^+ 1^+$.
(This does not require the use of the counter.)

\item $M_1$ checks and accepts if $z$ is of the form $x w 1^n \in (a+b)^+ \Sigma^+ 1^+$
but $|x|_a \ne n$.  (This requires only one counter reversal.)

\item  $M_1$ checks that $z$ is of the form $x w 1^n \in (a+b)^+ \Sigma^+ 1^+$, but $w$ 
is not a string of the form in $L_1$.  $M_1$ does not check the lengths of the
$ID$'s and whether $ID_{i+1}$ is a successor of $ID_i$.  (This does not require
a counter.)

\item $M_1$ checks that $z$ is of the form $x w 1^n \in (a+b)^+ \Sigma^+ 1^+$
and $$w =  ID_1 \# ID_3 \# \ \cdots 
\# ID_{2k-1}  \$ ID_{2k}^R \# \ \cdots \ \# ID_4^R \# ID_2^R,$$ for
some $k \ge 2$  but $|ID_i| \ne |ID_j|$ for some $i \ne j$, or $ID_1$ is not the initial $ID$,
or $ID_{2k}$ is not the halting $ID$. (This requires one counter reversal.)

\item $M_1$ assumes that $z$ is of the form $x w 1^n \in (a+b)^+ \Sigma^+ 1^+$
and $$w = ID_1 \# ID_3 \#  \cdots 
\# ID_{2k-1} \$  ID_{2k}^R \#  \cdots  \# ID_4^R \# ID_2^R,$$ for
some $k \ge 2$ , $|ID_1| =   \cdots  = |ID_{2k}|$, $ID_1$ is the initial $ID$, $ID_{2k}$
is the halting $ID$, and accepts if $ID_{i+1}$  is not the successor of $ID_i$ for
some $i$ = $1, 3, \ldots , 2k-1$. Since all the $ID$'s are assumed to have the same length,
$M_1$ needs to only use a counter that reverses once to check one of the conditions.
\end{enumerate}
Similarly, we can construct a real-time $\NCM$(1,1) $M_2$ as above using $L_2$.
Let $M$ be a real-time $\NCM(1,1)$ accepting $L(M_1) \cup  L(M_2)$ and consider $\overline{L(M)}$.

If $Z$ does not halt on blank tape,
then $\overline{L(M)} = \overline{L(M_1) \cup L(M_2)} = \overline{L(M_1)}
\cap \overline{L(M_2)} = \emptyset$, which is counting-regular.

If $Z$ halts on blank tape, then
$\overline{L(M)}$ = $\{ x w 1^n\  |\  x \in (a+b)^+ ,\ |x|_a = |x|_b = n\}$ for some $w$
which is not counting-regular (as shown in Theorem \ref{undecide1}). 

Now consider $L(M)$.
If $Z$ does not halt on blank tape, then $L(M)$ = $\Sigma^*$, hence is counting-regular. If $Z$ halts on blank tape, then
$L(M)$ = $L(M_1) \cup  L(M_2)$, which we will show to be not counting-regular as follows: the generating function $g(z)$ of $L(M)$ is given by 
${1 \over {1-sz}} - f(z)$ where $f(z)$ is the generating function of $\overline{L(M)}$, and $s$ is the size of the alphabet over which $M$ is defined. If $L(M)$ is counting-regular, then $g(z)$ is rational, then so is $f(z)$ = ${1 \over {1-kz}} - g(z)$, contradicting the fact that $f(z)$ is not rational. 
\qed \end{proof}







\noindent
Note that the machine $M$ constructed in the proof above
is 1-reversal-bounded but not finitely-ambiguous.
Next, it is shown to be undecidable for $2$-ambiguous  machines but without the reversal-bound.

\begin{theorem} \label{thmXX}
It is undecidable, given a one-way 2-ambiguous nondeterministic one counter machine 
$M$,
whether $\overline{L(M)}$ is counting-regular.
Also, it is undecidable, given such a machine 
$M$,
whether $L(M)$ is counting-regular.
\end{theorem}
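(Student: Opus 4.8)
The plan is to follow the \emph{complement–union} template used in Theorems \ref{undecide1} and \ref{thm16}: produce two languages $L_1,L_2$, each encoding half of a computation of a fixed machine $Z$, arrange that $L_1\cap L_2$ is the (now familiar) non-counting-regular language $\{x w 1^n\mid x\in(a+b)^+,\ |x|_a=|x|_b=n\}$ exactly when $Z$ halts and is empty otherwise, and then let $M$ accept $\overline{L_1}\cup\overline{L_2}$. Since $\overline{L(M)}=L_1\cap L_2$, its counting-regularity is equivalent to non-halting of $Z$, and the claim about $L(M)$ itself follows from the same generating-function identity $g(z)=\tfrac{1}{1-sz}-f(z)$ used at the end of Theorem \ref{thm16}. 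The entire difficulty is therefore pushed into one requirement: each of $\overline{L_1},\overline{L_2}$ must be accepted by a \emph{single unambiguous one-counter machine}, so that their union is $2$-ambiguous, and this must be done \emph{without} a reversal bound (which is exactly the extra freedom granted compared with Theorem \ref{thm16}).

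To meet that requirement I would abandon the symbol-by-symbol identification of configurations that the pushdown store performs in Theorem \ref{undecide1}, since a single counter cannot compare two far-apart blocks symbol-by-symbol. Instead I would take $Z$ to be a two-counter Minsky machine, write each configuration as $q_i\,1^{c_1^{(i)}}\,\$\,1^{c_2^{(i)}}$ with all configurations padded to a common length and laid out so that consecutive configurations are adjacent. Validity then becomes: the (regular) state sequence is legal, and for every $i$ the block lengths satisfy $c_j^{(i+1)}=c_j^{(i)}+\delta$, where $\delta\in\{-1,0,+1\}$ is dictated by $q_i$. The reason for splitting into $L_1$ and $L_2$ is that a single counter \emph{can} verify a family of \emph{non-overlapping} adjacent comparisons in one left-to-right sweep, resetting to zero between successive pairs and so spending one reversal per pair (hence unboundedly many overall, which is now permitted). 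Concretely, $L_1$ would own the pairs $(ID_1,ID_2),(ID_3,ID_4),\dots$ and $L_2$ the pairs $(ID_2,ID_3),(ID_4,ID_5),\dots$; within each list the compared blocks are disjoint, no block is needed twice, and a deterministic one-counter machine can check the whole list together with the regular state condition and the $|x|_a=n$ (resp.\ $|x|_b=n$) prefix condition. As deterministic one-counter languages are closed under complement, $\overline{L_1}$ and $\overline{L_2}$ are each unambiguous.

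The step I expect to be the real obstacle is making this work with \emph{exactly two} machines. Each configuration carries two unbounded quantities ($c_1$ and $c_2$), and a single counter, reading one-way, can discharge only one unary comparison per pair in a given sweep; the $c_1$- and $c_2$-comparisons for a pair interleave in the input and contend for the one counter. A naive split therefore needs four deterministic one-counter languages (odd/even $\times\ c_1/c_2$) and yields $4$-ambiguity rather than $2$. Resolving this — by reorganising the layout so that the two quantities compared in a pair are presented consecutively rather than interleaved, so that one counter can settle the $c_1$-comparison and then the $c_2$-comparison in the same sweep, or by folding the two counters into a single unary quantity whose correct increment still encodes the Minsky move — is the delicate part, and it is where the argument must exploit the unbounded reversals most carefully. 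Once $\overline{L_1}$ and $\overline{L_2}$ are in hand as unambiguous one-counter languages, the rest is routine: $M$ guesses which of the two to run, giving at most two accepting computations, and both counting-regularity claims reduce to the halting of $Z$ exactly as in the previous theorems.
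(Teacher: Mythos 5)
Your high-level template is the same as the paper's (complement of a union of two complements of deterministic one-counter languages, followed by the generating-function argument from Theorem \ref{thm16}), but the core technical step is not carried out, and you say so yourself. Everything hinges on producing \emph{two} deterministic one-counter machines $M_1,M_2$ such that emptiness of $L(M_1)\cap L(M_2)$ is undecidable and the intersection is a singleton when nonempty. Your attempted construction via Minsky two-counter configurations founders exactly where you flag it: each configuration carries two unbounded quantities, a single counter read one-way can only discharge one unary comparison per adjacent pair, and the natural split is into four deterministic one-counter languages (odd/even pairs $\times$ $c_1/c_2$), giving $4$-ambiguity. The fixes you gesture at (reordering the blocks so the compared quantities are consecutive, or folding the two counters into one) are not worked out, and the first one is genuinely problematic: making $c_j^{(i)}$ adjacent to $c_j^{(i+1)}$ for \emph{every} $i$ forces each value to appear in two places in the string, and the equality of the two copies is itself a new long-distance comparison that must be farmed out somewhere. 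So as written the proof does not establish $2$-ambiguity, which is the whole point of the theorem.

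The paper avoids this entirely: it simply invokes the known fact (implicit in \cite{undecidablehartmanis}) that it is undecidable, given two deterministic one-counter machines with no reversal restriction, whether $L(M_1)\cap L(M_2)=\emptyset$, and that the intersection is a singleton when nonempty. With that in hand it sets $L_1=\{wx1^n\mid w\in L(M_1),\,x\in(a+b)^+,\,|x|_a=n\}$ and $L_2$ symmetrically with $|x|_b=n$ (each still deterministic one-counter, hence with deterministic one-counter complement), and the rest is exactly the argument of Theorems \ref{undecide1} and \ref{thm16}. If you do not want to cite that result, you must actually prove a two-machine version of the intersection-emptiness undecidability for unbounded-reversal deterministic one-counter machines; that is a real lemma, not a routine rearrangement, and it is the missing piece of your proposal.
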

\begin{proof}
It is known that it is undecidable, given two deterministic
one counter machines (with no restriction on counter reversals) $M_1$ and
$M_2$, whether $L(M_1) \cap L(M_2) = \emptyset$ (shown implicitly in \cite{undecidablehartmanis}).  Moreover, if
the intersection is not empty, it is a singleton.
Let $\Sigma$ be the input alphabet of $M_1$ and $M_2$. Let $a, b, 1$ be three new symbols. Let
\begin{eqnarray*}
L_1 &=& \{ w x 1^n\ |\ w \in L(M_1), \ x \in (a+b)^+, \ |x|_a = n \},\\
L_2 &=& \{ w x 1^n\ |\ w \in L(M_2), \ x \in (a+b)^+, \ |x|_b = n \}.
\end{eqnarray*}
Clearly, we can construct deterministic one counter machines
accepting $L_1$ and $L_2$. Hence, 
$\overline{L_1}$ and $\overline{L_2}$ can be accepted by 
deterministic one counter machines.  It follows that  
$\overline{L_1} \cup \overline{L_2}$
can be accepted by a $2$-ambiguous nondeterministic one counter machine $M$.  Then, as in the
proof of Theorem \ref{thm16}, $\overline{L(M)}$ (resp., $L(M)$)
is counting-regular if and only if $L(M_1) \cap L(M_2)  = \emptyset$,
which is undecidable.
\qed
\end{proof}

In view of the above theorems, it is an interesting
open question whether the undecidability holds for reversal-bounded
finitely-ambiguous $\NCM$ machines.








\section{Slender Semilinear and Length-Semilinear Languages and Decidability Problems}
\label{sec:slender1}

A topic closely related to counting functions of formal languages
is that of slenderness. 
Decidability and closure properties of context-free languages ($\CFL$s) have
been investigated in \cite{Ilie,matrix,Salomaa,Ilie2,Honkala1998}.
For example, \cite{Ilie}
shows that it is decidable whether a $\CFL$ is slender, and
in \cite{matrix}, it is shown
that for a given $k \geq 1$, it is decidable whether a language generated by
a matrix grammar is $k$-slender (although here, the $k$ needs to be provided as input
in contrast to the $\CFL$ result).

In this section, we generalize these results to arbitrary language families that
satisfy certain closure properties.  These generalizations would then
imply the known results for context-free languages and matrix languages,
and other families where the problem was open.

First, we discuss the bounded language case. 
The constructive result of Theorem \ref{cor9} above, plus
decidability of slenderness for regular languages implies the following:
\begin{corollary} \label{cor13}
Let $\LL$ be a semilinear trio (with all properties effective). 
Then, it is decidable, given $L$ bounded in $\LL$ and words 
$u_1, \ldots, u_k$ such that $L \subseteq u_1^* \cdots u_k^*$,
whether $L$ is slender.
\end{corollary}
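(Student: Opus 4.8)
The plan is to reduce the slenderness question for a bounded language $L \subseteq u_1^* \cdots u_k^*$ in a semilinear trio to the decidability of slenderness for regular languages. The key observation is that Theorem \ref{cor9} is stated as an \emph{effective} construction: given $L$ bounded in a semilinear trio $\LL$ (with all properties effective) together with the words $u_1, \ldots, u_k$ witnessing boundedness, we can effectively construct a regular language $L'$ such that $f_L(n) = f_{L'}(n)$ for every $n \geq 0$. This is precisely the hypothesis we are handed in the corollary, so the construction applies directly.

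First I would invoke Theorem \ref{cor9} on the given input $(L, u_1, \ldots, u_k)$ to produce, effectively, a (bounded) regular language $L'$ with $f_L(n) = f_{L'}(n)$ for all $n$. Since $L'$ is regular, we may obtain a $\DFA$ for it by standard constructions. Second, I would observe that slenderness depends only on the counting function: $L$ is $k'$-slender (for some bound $k'$) if and only if $f_L(n) \leq k'$ for all $n$, and $L$ is slender if and only if there is some constant bound on $f_L(n)$. Because $f_L = f_{L'}$ pointwise, $L$ is slender precisely when $L'$ is slender. Third, I would appeal to the known decidability of slenderness for regular languages: given a $\DFA$, one can decide whether its counting function is bounded by a constant. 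In fact this falls out immediately from Theorem \ref{regularcharacterization} (Berstel), whose trichotomy says that for a regular language either $f_{L'}(n)$ is bounded by a constant (case (i), the slender case) or it grows polynomially (case (ii)) or exponentially (case (iii)); which case holds can be read off effectively from the structure of the minimal $\DFA$ (e.g.\ by checking for the pumping patterns that force unbounded growth). Hence deciding whether $L'$ is slender is decidable, and this decision transfers back to $L$.

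Putting these together yields the decision procedure: construct $L'$ effectively via Theorem \ref{cor9}, then decide slenderness of the regular language $L'$, and output the same answer for $L$. The correctness is immediate from $f_L = f_{L'}$.

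The main obstacle, and the only real content beyond bookkeeping, is ensuring that the invocation of Theorem \ref{cor9} is genuinely effective under the corollary's hypotheses. The theorem's effectiveness clause requires that the words $u_1, \ldots, u_k$ be given (they are, by assumption) and that all closure properties of $\LL$ be effective (this is exactly the ``all properties effective'' assumption built into the statement). One must also make sure the decidability-of-slenderness step for regular languages is itself effective rather than merely asserting existence of a bound; this is standard, since one can test the $\DFA$ directly for the reachable-and-returnable cycle structure that distinguishes bounded from unbounded counting functions (equivalently, applying the effective form of Berstel's theorem). Neither step requires estimating the slenderness constant $k'$ in advance, so unlike the matrix-grammar result of \cite{matrix} there is no need to supply $k$ as input for the plain slenderness question.
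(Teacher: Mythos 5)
Your proposal is correct and follows essentially the same route as the paper, which derives the corollary directly from the effective construction in Theorem \ref{cor9} together with the decidability of slenderness for regular languages. The extra detail you give on deciding slenderness of the regular language $L'$ (via Berstel's trichotomy or the cycle structure of the minimal $\DFA$) is a standard fact the paper simply takes for granted.
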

In \cite{Honkala1998}, the similar result was shown that 
it is decidable whether or not a given bounded semilinear language $L$
is slender.   

In our definition of a semilinear family of languages $\LL$, we only
require that every language in $\LL$ has a semilinear Parikh map.
However, it is known that in every semilinear trio $\LL$, all
bounded languages are bounded semilinear \cite{CIAA2016}, and therefore
the result of \cite{Honkala1998} also implies Corollary \ref{cor13}. Conversely, all
bounded semilinear languages are in the semilinear trio $\NCM$
\cite{ibarra1978,CIAA2016}; hence, given any bounded semilinear 
language (in any semilinear family
so long as we can construct the semilinear set), slenderness is decidable.
This method therefore also provides an alternate proof to the result
in \cite{Honkala1998}.

Next, we will examine the case where $L$ is not necessarily bounded.
One recent result is quite helpful in studying $k$-slender languages. In \cite{fullaflcounters}, the following was shown.
\begin{theorem}\cite{fullaflcounters}
Let $\LL$ be any semilinear full trio where the semilinearity and intersection with regular languages properties are effective. Then the smallest full
AFL containing intersections of languages in $\LL$ with $\NCM$,
denoted by $\hat{{\cal F}}(\LL \wedge \NCM)$, is effectively semilinear.
Hence, the emptiness problem for 
$\hat{{\cal F}}(\LL \wedge \NCM)$ is decidable.
\label{fullAFL}
\end{theorem}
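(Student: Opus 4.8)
The plan is to reduce the emptiness claim to the semilinearity claim, and then to prove effective semilinearity in two stages. For the reduction, note that a language $L$ is empty if and only if $\psi(L)$ is empty, and that a semilinear set presented by its constants and periods is empty exactly when it is given as the empty union; hence, once every $L \in \hat{{\cal F}}(\LL \wedge \NCM)$ is equipped with an effectively computable semilinear representation of $\psi(L)$, emptiness is decided by computing that representation and inspecting it. So all the work lies in showing that $\hat{{\cal F}}(\LL \wedge \NCM)$ is effectively semilinear, which I would split into (i) the base family $\LL \wedge \NCM$ of intersections is effectively semilinear, and (ii) the full-AFL closure of an effectively semilinear family of this kind remains effectively semilinear.

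Part (i) is the technical heart, and the step I expect to be the main obstacle. Fix $L \in \LL$ over $\Sigma$ and an $\NCM$ $M$ with $k$ counters, each making at most $r$ reversals. I would work over the alphabet $\Delta$ of transitions of $M$, each transition recording the scanned symbol in $\Sigma \cup \{\epsilon\}$, the state change, and the counter operations, and then refine $\Delta$ by tagging each transition with the index of the current monotone segment of each counter (a counter is cut at its reversal points into at most $r+1$ monotone phases, so there are boundedly many segment patterns). A regular language $R_M$ over this tagged alphabet enforces the finite control of $M$, the initial and accepting conditions, and that the segment indices evolve according to a legal reversal pattern. The decisive point, which is Ibarra's classical reversal-bounded argument \cite{ibarra1978} specialized to our setting, is that because the counters are reversal-bounded, the requirements ``every counter stays non-negative throughout'' and ``$M$ accepts'' reduce to inequalities at the boundedly many segment endpoints, and each such counter value is a fixed linear function of the per-segment transition counts recorded in the Parikh vector over the tagged $\Delta$; these requirements therefore cut out a semilinear (Presburger) set $S$. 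Writing $\rho$ for the homomorphism that reads off the input symbol of each transition, one has $L \cap L(M) = \rho\big(R_M \cap \rho^{-1}(L) \cap \{\pi \mid \psi(\pi) \in S\}\big)$. Since $\LL$ is a full trio with effective closure, $R_M \cap \rho^{-1}(L) \in \LL$ and is therefore semilinear with an effectively computable Parikh image $P$; imposing the counter constraint gives the Parikh image $P \cap S$, semilinear because semilinear sets are effectively closed under intersection; and applying the linear map induced by $\rho$ shows $\psi(L \cap L(M))$ is effectively semilinear.

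Part (ii) I would prove by structural induction on an expression building $K \in \hat{{\cal F}}(\LL \wedge \NCM)$ from generators in $\LL \wedge \NCM$ using the full-AFL operations, maintaining the invariant that the current language is effectively semilinear. The operations of union, concatenation, Kleene-$*$, and homomorphism preserve this invariant directly at the level of Parikh images, since they correspond to union, Minkowski sum, the generated submonoid, and a linear image, each effective and semilinearity-preserving. The only delicate operations are inverse homomorphism and intersection with a regular language, under which the bare class of languages with semilinear Parikh image is not closed. These are handled by first observing that $\LL \wedge \NCM$ is itself closed under both: indeed $g^{-1}(L \cap L(M)) = g^{-1}(L) \cap g^{-1}(L(M))$ and $(L \cap L(M)) \cap R = L \cap (L(M) \cap R)$, and $\LL$ and $\NCM$ are full trios closed under inverse homomorphism and intersection with regular languages. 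Using the standard full-AFL representation theory \cite{G75}, these two operations can be confined to act on the generators, where they keep us inside $\LL \wedge \NCM$ and hence, by part (i), inside the effectively semilinear languages, while the remaining top-level operations are the semilinearity-preserving ones above. Combining (i) and (ii) shows that $\hat{{\cal F}}(\LL \wedge \NCM)$ is effectively semilinear, and the reduction in the first paragraph then yields decidability of emptiness.
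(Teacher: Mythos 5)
The paper does not actually prove this statement: Theorem \ref{fullAFL} is imported verbatim from \cite{fullaflcounters} and used here as a black box, so there is no in-paper argument to compare yours against. Judged on its own terms, your reconstruction follows the standard two-stage route and is sound in outline: the identity $L \cap L(M) = \rho\bigl(R_M \cap \rho^{-1}(L) \cap \psi^{-1}(S)\bigr)$, together with the facts that $\psi(A \cap \psi^{-1}(S)) = \psi(A) \cap S$ and that $\psi$ commutes with homomorphic images via the induced linear map, does yield effective semilinearity of the base family, and your treatment of union, concatenation, star and homomorphism at the level of Parikh images is correct. Two points deserve more care than you give them. First, the segment tagging handles non-negativity but not the zero-tests of an $\NCM$: a transition conditioned on a counter being zero is not a constraint on the Parikh vector of the whole run, so you need the standard normal form in which each counter's zero/positive status changes only boundedly often, the status pattern is guessed in the finite control (i.e.\ folded into $R_M$), and the resulting equalities at the status-change points are added to $S$. (Relatedly, your construction needs inverse homomorphism on $\LL$ to be effective, which the statement as quoted does not list; elsewhere the paper assumes ``all properties effective,'' so this is presumably an imprecision of the restatement rather than of your proof.) Second, the claim that inverse homomorphism and intersection with regular sets ``can be confined to the generators'' rests on the Ginsburg--Greibach theorem that the full AFL generated by a \emph{full trio} is its closure under union, concatenation and star; since $\LL \wedge \NCM$ is not visibly closed under homomorphism, you must first pass to its homomorphic closure (using the usual composition of rational transductions to push a trailing homomorphism past $h^{-1}$ and $\cap R$) before that theorem applies. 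Neither issue is fatal --- both are standard --- but both should be spelled out for the argument to be complete.
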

We make frequent use of this throughout the proofs of the next two sections.

First, decidability of $k$-slenderness is addressed.
\begin{theorem}
\label{kslender}
Let $\LL$ be a full trio which is either:
\begin{itemize}
\item semilinear, or
\item is length-semilinear, closed under concatenation, and intersection with $\NCM$,
\end{itemize}
with all properties effective.
It is decidable, given $k$ and $L \in \LL$, whether $L$ is a k-slender language.
\end{theorem}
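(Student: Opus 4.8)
The plan is to reduce $k$-slenderness to an emptiness question for a single witness language assembled from the closure operations available in $\LL$ (and $\NCM$), and then to decide that emptiness. Note that $L$ fails to be $k$-slender exactly when, for some length $n$, the language $L$ contains at least $k+1$ distinct words of length $n$. I would capture such a witness by a single string: fix a fresh separator $\#$ and consider encodings of the form $w_0 \# w_1 \# \cdots \# w_k$ where each $w_i \in L$, the lengths satisfy $|w_0| = \cdots = |w_k|$, and the $w_i$ are pairwise distinct. Thus $L$ is not $k$-slender iff the witness language $W$ of all such encodings is nonempty.

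The ``each block lies in $L$'' requirement I would handle by concatenation: let $E = L\{\#\}L\{\#\}\cdots\{\#\}L$ ($k+1$ copies of $L$). (If $L = \emptyset$ then $L$ is trivially $k$-slender; otherwise, as $\LL$ is a full trio containing a nonempty language, it contains all regular languages, so $\{\#\} \in \LL$.) The remaining conditions — equal block lengths together with pairwise distinctness — form an $\NCM$ language $A$ over $\Sigma \cup \{\#\}$, constructible effectively from $k$. Equal lengths are checked with $k$ reversal-bounded counters: while scanning $w_0$ increment all of them, and while scanning $w_j$ ($1 \le j \le k$) decrement the $j$-th, requiring it to be zero at the end of the $j$-th block, so each counter makes a single reversal. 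Pairwise distinctness uses one further counter per pair $(i,j)$ with $i<j$ (hence $\binom{k+1}{2}$ counters): the machine guesses a position $p$, counts $p$ steps into block $i$ while recording $w_i[p]$, holds the counter fixed, then decrements it through block $j$ until it reaches zero and checks that the symbol there differs from the recorded one, again one reversal per counter. The regular skeleton ($k$ separators, blocks in $\Sigma^*$) sits in the finite control. Hence $A \in \NCM$ and $W = E \cap A$.

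It remains to place $W$ in a family with decidable emptiness. For the semilinear case, set $\LL_1 = \hat{{\cal F}}(\LL \wedge \NCM)$; by Theorem \ref{fullAFL} this is an effectively semilinear full AFL, in particular a semilinear full trio with effective semilinearity and effective intersection with regular languages. Since $L \in \LL \subseteq \LL_1$ and $\LL_1$ is a full AFL containing all regular languages, $E \in \LL_1$, and therefore $W = E \cap A \in \LL_1 \wedge \NCM$. Applying Theorem \ref{fullAFL} a second time, now with $\LL_1$ in the role of $\LL$, the family $\hat{{\cal F}}(\LL_1 \wedge \NCM)$ is effectively semilinear with decidable emptiness and contains $W$, so deciding $W = \emptyset$ settles $k$-slenderness. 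For the length-semilinear case, $\LL$ is closed under concatenation, so $E \in \LL$, and closed under intersection with $\NCM$, so $W = E \cap A \in \LL$; since $\LL$ is effectively length-semilinear, the length set of $W$ is an effectively computable semilinear set, and $W$ is empty iff that set is empty, which is decidable.

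The main obstacle is this last step: arranging for the witness to land in a family whose emptiness is decidable. Encoding ``all $k+1$ blocks lie in $L$'' via concatenation rather than a $(k+1)$-fold intersection is what makes it work, since full AFLs (and the length-semilinear family) handle concatenation but not arbitrary intersection, and the combinatorial comparisons are then isolated into $\NCM$. In the semilinear case the delicate point is that intersecting $E$ with $A$ need not remain inside $\LL_1$, which is precisely why I apply Theorem \ref{fullAFL} twice instead of assuming $\LL_1$ is itself closed under intersection with $\NCM$. Checking that the counter constructions for $A$ are genuinely reversal-bounded and uniform in the input $k$ is routine, but it is where the encoding must be laid out with care.
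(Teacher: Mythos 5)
Your proof is correct and follows essentially the same route as the paper: encode a violation of $k$-slenderness as a word $x_1\#\cdots\#x_{k+1}$ with each block in $L$ (via concatenation in the full AFL), impose equal lengths and pairwise distinctness with a reversal-bounded $\NCM$, and reduce to emptiness in $\hat{{\cal F}}(\LL \wedge \NCM)$ via Theorem \ref{fullAFL}. Your only deviation is applying Theorem \ref{fullAFL} a second time to justify intersecting with the $\NCM$ after the concatenation, whereas the paper places that intersection directly back into $\LL'$; your version is the more cautious reading of what Theorem \ref{fullAFL} literally guarantees, and it is equally valid.
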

\begin{proof} Let $L \subseteq \Sigma^*$, and let $\#$ be a new symbol. 
First construct an $\NCM$ $M_1$
which when given $x_1 \# x_2\# \ldots \# x_{k+1}$, $x_i \in \Sigma^*, 1 \leq i \leq k+1$,
accepts if $|x_1|  = \cdots = |x_{k+1}|$, and $x_i\neq x_j$ are different, for
all $i \neq j$.  To do this, $M_1$ uses (many) counters to verify the lengths,
and it uses counters to guess positions and verify 
the discrepancies between $x_i$ and $x_j$, for each $i \ne j$.
Let $\LL' = \hat{{\cal F}}(\LL \wedge \NCM)$ (or just $\LL$ in the second case), which is semilinear
by Theorem \ref{fullAFL}, (or length-semilinear in the second case, by assumption).
Construct $L_2 \in \LL'$ 
which consists of all words of the form 
$x_1 \# x_2\# \ldots \# x_{k+1}$, where $x_i \in L$ (every full AFL is closed
under concatenation). 
Then $L_3 = L(M_1) \cap L_2 \in \LL'$.
Clearly, $L$ is not $k$-slender if and only if $L_3$ is not empty, 
which is decidable by Theorem \ref{fullAFL}.
\qed \end{proof}

There are many known semilinear full trios listed in Example \ref{semilinearfulltrioexamples}. Plus,
it is known that languages generated by matrix grammars form a length-semilinear (but not semilinear, in general) full
trio closed under concatenation and intersection with $\NCM$ (\cite{matrix},
where is it is shown that the languages are closed under intersection with
BLIND multicounter languages, known to be equivalent to $\NCM$ \cite{G78}). Therefore, the
result is implied for matrix grammars as well, although this is already known.
\begin{corollary}
\label{allcorollaries}
Let $\LL$ be any of the families listed in Example \ref{semilinearfulltrioexamples}. 
Then, the problem, ``for $k \geq 1$ and $L \in \LL$, is $L$ a $k$-slender language?''
is decidable.
\end{corollary}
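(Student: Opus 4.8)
The plan is to derive this as a direct specialization of Theorem \ref{kslender}, so that the entire task reduces to checking that each family named in the corollary meets the hypotheses of that theorem. First I would recall that Theorem \ref{kslender} establishes decidability of $k$-slenderness for any full trio $\LL$ that is semilinear with all properties effective (the first of the two alternatives in its hypothesis). Then I would observe that Example \ref{semilinearfulltrioexamples} asserts exactly this for every family on the list: each of the context-free languages, the $\NCM$s, the finite-index $\ETOL$ systems, the $k$-flip $\NPDA$s, the reversal-bounded queue automata, the finite-crossing $\NTM$s, the uncontrolled finite-index indexed grammars, and the multi-push-down machines forms a semilinear full trio in which the closure properties and semilinearity are effective. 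Since each satisfies the first alternative of the hypothesis, Theorem \ref{kslender} applies verbatim and yields decidability of the stated problem.

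Unwinding Theorem \ref{kslender} for a fixed $\LL$ and inputs $k, L$, the decision procedure builds the $\NCM$ $M_1$ that accepts $x_1 \# \cdots \# x_{k+1}$ precisely when the blocks have equal length and are pairwise distinct, forms $L_2$ in $\hat{{\cal F}}(\LL \wedge \NCM)$ consisting of the $(k+1)$-fold $\#$-separated concatenations of words of $L$, and then tests whether $L_3 = L(M_1) \cap L_2$ is empty. The only points needing verification are that, for the chosen family, the full-AFL closure under concatenation, the intersection with $\NCM$, and the semilinearity (hence decidable emptiness via Theorem \ref{fullAFL}) are all constructive. But this is precisely the effectiveness recorded parenthetically in Example \ref{semilinearfulltrioexamples}, drawing on the references cited there, so nothing further is needed.

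I do not anticipate a genuine obstacle, since the corollary merely instantiates an already-proved general theorem at a catalogue of families for which the required effectiveness has been tabulated. The one place warranting care is confirming that each cited source supplies \emph{effective} constructions for closure and semilinearity rather than only their existence; granting that (as Example \ref{semilinearfulltrioexamples} claims), the result is immediate.
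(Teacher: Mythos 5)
Your proposal is correct and matches the paper's own (implicit) justification: the corollary is obtained exactly by instantiating Theorem \ref{kslender} at the families of Example \ref{semilinearfulltrioexamples}, each of which is a semilinear full trio with effective closure and semilinearity properties. Nothing further is needed.
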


All the machine models used in Example \ref{semilinearfulltrioexamples} have one-way inputs, however with a two-way input,
the problem is more complicated.
Now let $2\DCM(k)$ (resp., $2\NCM(k)$) be a two-way $\DFA$ (resp., two-way
$\NFA$) with end-markers on both sides of the input, augmented with $k$ reversal-bounded counters.
\begin{theorem} It is decidable, given $k$ and  $2\DCM(1)$ $M$, whether $M$ accepts 
a $k$-slender language.
\label{deckslender}
\end{theorem}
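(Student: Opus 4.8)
The plan is to reduce $k$-slenderness to an emptiness question for a two-way one-counter machine, paralleling the one-way argument of Theorem \ref{kslender} but taking care of the two-way head motion. The key observation is that $L(M)$ fails to be $k$-slender exactly when there is some length $n$ admitting $k+1$ pairwise distinct words of length $n$ in $L(M)$. So I would build a machine $M'$ accepting the encoding language
$$L' = \{x_1 \# \cdots \# x_{k+1} \mid x_i \in L(M),\ |x_1| = \cdots = |x_{k+1}|,\ x_i \neq x_j \text{ for } i \neq j\},$$
where $\#$ is a new separator, and then observe that $L(M)$ is $k$-slender if and only if $L' = \emptyset$.

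First I would describe how $M'$ processes the encoded input. For each block $x_i$, $M'$ simulates $M$ on $x_i$ in place: since $M$ is two-way, its head sweeps within the delimiters surrounding $x_i$, and $M'$ uses its own reversal-bounded counter to play the role of $M$'s counter. These $k+1$ simulations are carried out one after another (resetting the counter to zero between blocks), so the counter reverses only boundedly often overall. The equal-length conditions $|x_i| = |x_{i+1}|$ are checked the same way, and each inequality $x_i \neq x_j$ is verified by guessing a position $p$, incrementing the counter to $p$ while walking into $x_i$ to read $x_i[p]$, then decrementing it while walking into $x_j$ to read $x_j[p]$ and confirming the two symbols differ. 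Because all of these phases are performed sequentially on a single counter, and each contributes only boundedly many reversals, $M'$ is a two-way machine with one reversal-bounded counter (a $2\NCM(1)$), where nondeterminism is used only to guess the witnessing positions. The decisive step is then to apply decidability of the emptiness problem for two-way machines with a single reversal-bounded counter to $M'$: $L(M)$ is $k$-slender iff $M'$ accepts nothing. I would invoke the known decidability of emptiness in this two-way one-counter setting, the same line of results that makes $2\DCM(1)$ tractable and which contrasts sharply with the undecidability that already appears for two reversal-bounded counters.

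The hard part will be twofold. The first difficulty is organizing the simulation so that the whole computation of $M'$ stays within a \emph{single} reversal-bounded counter: the $k+1$ membership tests plus the $\binom{k+1}{2}$ distinctness tests must be serialized, with clean counter resets and careful head repositioning, so that the global reversal count depends only on $k$ and on $M$'s reversal bound, not on the input length. The second, and more essential, difficulty is that the distinctness checks appear to genuinely require nondeterminism (a two-way deterministic reversal-bounded counter cannot compare two equal-length blocks), so the argument must rest on emptiness being decidable for the nondeterministic two-way one-counter model, not merely for the deterministic one. Securing or citing exactly that emptiness result is the crux of the proof; everything else is the bookkeeping needed to keep the single counter reversal-bounded across all the serialized phases.
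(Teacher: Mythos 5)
There is a genuine gap, and you have in fact put your finger on it yourself without resolving it: your construction yields a \emph{nondeterministic} two-way machine with one reversal-bounded counter (a $2\NCM(1)$), and the decidability of emptiness for that model is not available. The known emptiness result in the two-way one-counter setting is for the \emph{deterministic} model $2\DCM(1)$ \cite{IbarraJiang}; for $2\NCM(1)$ the general emptiness problem is open, and only the special case of machines accepting bounded languages over given words $w_1^*\cdots w_r^*$ is known to be decidable \cite{DangIbarra}. Your language $L'$ of concatenated tuples $x_1\#\cdots\#x_{k+1}$ is not bounded in that sense, so neither citation applies, and the reduction does not terminate in a decidable emptiness question. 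The nondeterminism is forced on you precisely by the concatenation encoding: to compare $x_i$ with $x_j$ lying in different segments of a one-dimensional tape, you must guess a discrepancy position and transport it with the counter.

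The fix is to change the encoding rather than the target emptiness problem. The paper writes the $k+1$ candidate words on $k+1$ parallel \emph{tracks} of a single tape (each tape cell carries a $(k+1)$-tuple of symbols). Equality of lengths is then automatic, and distinctness of track $i$ from track $j$ can be checked \emph{deterministically} by a plain left-to-right sweep comparing the two track symbols in each cell, with one sweep per pair $(i,j)$ — no counter and no guessing involved. After these sweeps, the machine simulates $M$ on each track in turn (one may assume $M$ always halts \cite{IbarraJiang}), keeping the single counter reversal-bounded overall, exactly the serialization you describe. The resulting machine is a $2\DCM(1)$, and the conclusion follows from decidability of emptiness for $2\DCM(1)$ \cite{IbarraJiang}. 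Your bookkeeping for the membership simulations is essentially the same as the paper's; the essential missing idea is the multi-track alignment that keeps the whole construction deterministic.
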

\begin{proof} We may assume that $M$ always halts \cite{IbarraJiang}. Given $M$,
construct another $2\DCM(1)$ $M'$ with a $(k+1)$-track tape. First  for each
$1 \le i  < k+1$, $M'$ checks that  the string in track $i$ is different from
the strings in tracks $i+1, \ldots, k+1$. Thus, $M$ needs to make multiple
sweeps of the $(k+1)$-track input.

Then $M'$ checks that the string in each track is accepted.  Clearly, $L(M)$ is 
not $k$-slender if and only if $L(M')$ is not empty. The result follows, since
emptiness is decidable for $2\DCM(1)$ \cite{IbarraJiang}.
\qed \end{proof}

The above result does not generalize for $2\DCM(2)$:

\begin{theorem} The following are true:
\begin{enumerate} 
\item  It is undecidable, given $k$ and a $2\DCM(2)$ $M$, whether $M$ accepts a
$k$-slender language, even when $M$ accepts a letter-bounded language
that is a subset of $a_1^* \cdots a_r^*$ for given $a_1, \ldots, a_r$.
\item It is undecidable, given $k$ and a $2\DCM(2)$ $M$, whether $M$ accepts a slender
language, even when $M$ accepts a letter-bounded language.
\end{enumerate}
\end{theorem}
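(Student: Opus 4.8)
The plan is to reduce both parts from the emptiness problem for two-way deterministic reversal-bounded two-counter machines on letter-bounded inputs, which is undecidable. Concretely, for a Diophantine polynomial $p(x_1,\ldots,x_m)$ one can build a $2\DCM(2)$ accepting $L_0 = \{a_1^{x_1}\cdots a_m^{x_m} \mid p(x_1,\ldots,x_m)=0\}$ over distinct letters $a_1,\ldots,a_m$: the two-way head ranges over the blocks to evaluate the equation, with multiplications realized by repeated sweeps (each coordinate block acting as a sweep counter) and the two reversal-bounded counters accumulating partial values. By the unsolvability of Hilbert's Tenth Problem, emptiness of $L_0$ is undecidable; I would invoke this as the known starting point, since the correspondence between two-way reversal-bounded counter machines and Diophantine predicates is due to Gurari and Ibarra and contrasts with the decidability for a single counter in \cite{IbarraJiang} used above. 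I will also use the easy consequence that infiniteness of such a language is undecidable: appending a free block $a_{m+1}^t$ ($t \ge 0$) yields a $2\DCM(2)$ language $L_0'$ that is infinite iff $L_0 \neq \emptyset$, and since a letter-bounded language has only finitely many words of each length, $L_0'$ infinite forces words of arbitrarily large length.

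For part 1, fix the given $k$ and let $b,c$ be two fresh letters. I would construct $M$ accepting $L = \{ b^j c^{\,k+1-j} w \mid w \in L_0,\ 0 \le j \le k+1 \} \subseteq b^* c^* a_1^* \cdots a_m^*$, a letter-bounded language over the given alphabet. The prefix $b^j c^{\,k+1-j}$ has the constant length $k+1$, so $M$ checks its shape purely in the finite control (no counter) and then simulates the machine for $L_0$ on $w$ using its two counters; hence $M$ is a $2\DCM(2)$. For each $w \in L_0$ of length $\ell$ the $k+2$ distinct strings $b^j c^{\,k+1-j}w$ ($0\le j\le k+1$) all have length $\ell+k+1$, so if $L_0 \neq \emptyset$ some length slice has at least $k+2 > k$ words and $L$ is not $k$-slender, whereas if $L_0 = \emptyset$ then $L = \emptyset$ is $k$-slender. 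Thus $L$ is $k$-slender iff $L_0 = \emptyset$, giving undecidability.

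For part 2, I would instead cluster a number of variants that grows with $|w|$, reducing from infiniteness of $L_0'$. Take $M$ accepting $L = \{ b^j c^{\,\ell - j} w \mid w \in L_0',\ \ell = |w|,\ 0 \le j \le \ell \} \subseteq b^* c^* a_1^* \cdots a_{m+1}^*$. Now the prefix length must match $|w|$, so $M$ runs in two phases: first it increments a counter while scanning $b^j c^{\,\ell-j}$ and decrements it while scanning $w$, accepting this check iff the counter returns to $0$ exactly at the right endmarker (one counter, one reversal); then, after repositioning the head, it simulates the two-counter machine for $L_0'$ on $w$. Because reversal-boundedness limits only the total number of reversals, the counter used in the first phase can be reused as one of the two counters in the second phase, so $M$ is still a $2\DCM(2)$. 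For $w \in L_0'$ with $|w| = \ell$ the words $b^j c^{\,\ell-j}w$ ($0 \le j \le \ell$) give $\ell+1$ distinct strings all of length $2\ell$; hence if $L_0'$ is infinite the length slices of $L$ are unbounded and $L$ is not slender, while if $L_0'$ is finite then $L$ is finite and therefore slender. So $L$ is slender iff $L_0'$ is finite iff $L_0 = \emptyset$, which is undecidable.

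The main obstacle is the first paragraph: the reduction machinery is elementary once we have a $2\DCM(2)$ whose emptiness is undecidable on a letter-bounded domain, so the real content is the known (and nontrivial) fact that two reversal-bounded counters together with a two-way head suffice to decide a Diophantine predicate over the block lengths. I would lean on the correspondence of Gurari and Ibarra for this rather than re-deriving the polynomial-evaluation simulation, and then verify carefully the two bookkeeping points above: that the constant-length padding in part 1 costs no counter, and that the length-matching padding in part 2 can share a counter with the simulation by exploiting that only the total reversal count is bounded.
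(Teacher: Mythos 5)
Your proposal is correct and follows essentially the same route as the paper: a reduction from the undecidability of emptiness for $2\DCM(2)$'s accepting letter-bounded languages (which the paper simply cites from \cite{ibarra1978} rather than re-deriving via Hilbert's Tenth Problem), implemented by padding each accepted word with a letter-bounded family of variants so that non-emptiness forces too many words of some length. The only real difference is that the paper uses one construction for both parts --- it appends a free suffix $c^i d^j$, so a single word of length $\ell$ in the base language yields $n-\ell+1$ words of each length $n\ge\ell$, killing $k$-slenderness for every $k$ and slenderness simultaneously --- which avoids your separate length-matched padding for part 2 and the accompanying counter-reuse argument.
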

\begin{proof} It is known \cite{ibarra1978} that it is undecidable, given a $2\DCM(2)$
$M$ accepting a language that is in $b_1^* \cdots b_r^*$ for given $b_1, \ldots, b_r$, 
whether $L(M) = \emptyset$.  Let $c$ and $d$ be new symbols.
Construct another $2\DCM(2)$ $M'$ which when given
a string $w = b_1^{i_1} \cdots b_k^{i_r} c^i d^j$, simulates $M$ on 
$b_1^{i_1} \cdots b_k^{i_r}$, and when $M$ accepts, $M'$ accepts $w$.
Then $L(M')$ is not $k$-slender for any given $k$ (resp., not slender) if and only if $L(M)$
is not empty, which is undecidable.
\qed \end{proof}

Whether or not Theorem \ref{kslender} holds for $2\NCM(1)$ $M$ is open.
However, we can prove a weaker version using the fact that it is
decidable, given a $2\NCM(1)$ $M$ accepting a bounded language
over $w_1^* \cdots w_r^*$ for given $w_1, \ldots, w_r$, whether
$L(M) = \emptyset$ \cite{DangIbarra}.

\begin{theorem}  It is decidable, given $k$ and $2\NCM(1)$ $M$ that accepts a 
language over $w_1^* \cdots w_r^*$ for given $w_1, \ldots, w_r$, whether $M$ accepts
a $k$-slender language.
\end{theorem}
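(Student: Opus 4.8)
The plan is to reduce $k$-slenderness to the emptiness problem for bounded $2\NCM(1)$ languages, which is decidable by \cite{DangIbarra}, following the witness-language strategy already used in Theorems \ref{kslender} and \ref{deckslender}. Let $\#$ be a new symbol and consider the witness language
$$L' = \{x_1 \# x_2 \# \cdots \# x_{k+1} \mid x_i \in L(M),\ |x_1| = \cdots = |x_{k+1}|,\ x_i \neq x_j \mbox{ for } i \neq j\}.$$
Then $L(M)$ fails to be $k$-slender precisely when $L' \neq \emptyset$. The crucial observation is that $L'$ is itself a bounded language over an explicitly given generating set: since every $x_i$ lies in $w_1^* \cdots w_r^*$, we have
$$L' \subseteq w_1^* \cdots w_r^* \#^* w_1^* \cdots w_r^* \#^* \cdots \#^* w_1^* \cdots w_r^*,$$
with $k+1$ copies of $w_1^* \cdots w_r^*$ interleaved with $k$ copies of $\#^*$. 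This is of the form $u_1^* \cdots u_s^*$ whose generator words (the list $w_1, \ldots, w_r, \#$ repeated, with a final block $w_1, \ldots, w_r$) are given. Hence, once I exhibit a $2\NCM(1)$ accepting $L'$, emptiness of $L'$, and therefore $k$-slenderness of $L(M)$, is decidable.

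It remains to build a $2\NCM(1)$ $M'$ for $L'$ using only a single reversal-bounded counter. The key point is that every required check can be performed in a bounded number of independent passes over the input, each pass using the single counter and returning it to $0$ before the next begins; since the number of passes depends only on $k$ and $M$, the total number of counter reversals stays bounded, so $M'$ is genuinely reversal-bounded. Using two-wayness, $M'$ first checks \emph{membership}: it positions its head inside each block $x_i$ (delimited by the $\#$'s and endmarkers, which play the role of $M$'s endmarkers), simulates $M$ on that block while mirroring $M$'s reversal-bounded counter on its own counter, and then drains the counter to $0$; performing this for $i = 1, \ldots, k+1$ only multiplies $M$'s reversal bound by a constant. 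Next $M'$ checks \emph{equal lengths} by, for each consecutive pair, counting up across $x_i$ and down across $x_{i+1}$ and testing for zero, one reversal per pair.

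The main obstacle, and the place where the single counter must be used most carefully, is verifying pairwise \emph{distinctness} $x_i \neq x_j$. For a fixed pair $i < j$ whose blocks are already known to have equal length, $M'$ nondeterministically guesses a mismatch position $p$: it scans $x_i$ while incrementing the counter, nondeterministically stops at position $p$, records the scanned symbol $x_i[p]$ in the finite control, then moves its head to $x_j$ and decrements the counter as it advances so that the counter reaches $0$ exactly at position $p$ of $x_j$, and accepts this sub-check if $x_j[p] \neq x_i[p]$. This witnesses a mismatch with one counter reversal, and there are only ${k+1 \choose 2}$ pairs to handle. Chaining the membership, equal-length, and distinctness passes conjunctively, $M'$ accepts exactly $L'$, remains two-way with a single reversal-bounded counter, and accepts a bounded language; so emptiness of $L'$ is decidable by \cite{DangIbarra}, which completes the decision procedure.
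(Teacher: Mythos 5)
Your proposal is correct and follows essentially the same route as the paper: build the witness language of $k+1$ pairwise-distinct accepted words of equal length separated by $\#$, observe that it is bounded over an explicitly given generating set, accept it with a $2\NCM(1)$ whose single counter is reused in a bounded number of zero-returning passes (guessing a discrepancy position for each pair), and invoke decidability of emptiness for bounded $2\NCM(1)$ languages from \cite{DangIbarra}. If anything, your write-up is more careful than the paper's, which leaves the equal-length check implicit.
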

\begin{proof}
We construct from $M$ another $2\NCM(1)$ $M'$ which, when given 
$x_1 \# x_2\# \cdots  \# x_{k+1}$,  where each
$x_i$ is in $w_1^* \cdots w_r^*$ first checks that all $x_i$'s 
are different.  For each $i$, and $j = i+1, \ldots, k+1$, $M'$ guesses the position
of discrepancy between $x_i$ and $x_j$ and records this position in the counter
so that it can check the discrepancy. (Note that only one reversal-bounded
counter is needed for this.) Then $M'$ checks that each $x_i$ is accepted.  
Clearly, $M$ is not $k$ slender if and only if $L(M')$ is not empty, which is 
decidable, since the language accepted by $M'$ is bounded.
\qed \end{proof}

We can also prove some closure properties.  Here is an example:

\begin{theorem}
The following are true:
\begin{enumerate} 
\item Slender (resp., thin) $\NCM$ languages are closed under intersection.
\item Slender (resp., thin) $2\DCM(1)$ languages ($2\NCM(1)$ languages) are 
    closed under intersection.
\end{enumerate}
\end{theorem}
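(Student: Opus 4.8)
The plan is to verify, for each of the two parts separately, two independent facts: first that the intersection of two slender (resp.\ thin) languages is again slender (resp.\ thin), and second that the relevant machine class is closed under intersection. The first fact is a pure counting observation that is uniform across both parts, and the second is where the machine models actually matter.

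For the slenderness bookkeeping, I would observe that for any languages $L_1,L_2$ and any $n$ we have $f_{L_1\cap L_2}(n)\le \min(f_{L_1}(n),f_{L_2}(n))$, since every length-$n$ word of $L_1\cap L_2$ lies in both $L_1$ and $L_2$. Hence if $L_1$ is $k_1$-slender and $L_2$ is $k_2$-slender, then $L_1\cap L_2$ is $\min(k_1,k_2)$-slender; taking $k_1=k_2=1$ handles the thin case. So in each part it only remains to keep the intersection inside the stated class. For Part 1 I would appeal to the closure of $\NCM$ under intersection already recorded in the preliminaries \cite{ibarra1978}: given one-way reversal-bounded multicounter machines $M_1,M_2$ for $L_1,L_2$, a machine for $L_1\cap L_2$ simulates both in parallel on the single one-way input, using the disjoint union of their counters, each of which retains its own reversal bound. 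Together with the counting argument, this gives closure of the slender (resp.\ thin) $\NCM$ languages under intersection.

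For Part 2 the delicate point — and the step I expect to be the real content — is that we must stay within a \emph{single} counter: running the two one-counter machines in parallel would use two counters and only place the result in $2\DCM(2)$ (resp.\ $2\NCM(2)$). Instead I would reuse the one counter \emph{sequentially}. Given $M_1,M_2$, construct $M$ which first simulates $M_1$ on the two-way input; assuming without loss of generality (as in Theorem~\ref{deckslender}, via \cite{IbarraJiang}) that $M_1$ halts, when $M_1$ accepts, $M$ moves its head back to the left endmarker and decrements the counter down to $0$, then simulates $M_2$ from its initial configuration, accepting iff $M_2$ accepts. The counter is reset without any head interaction, and the total number of reversals is at most (reversals of $M_1$) $+\,1+$ (reversals of $M_2$), still a constant, so the one counter remains reversal-bounded; determinism (resp.\ nondeterminism) is inherited, and in the nondeterministic case correctness follows because an accepting computation of $M$ is precisely a finite accepting run of $M_1$ followed by a finite accepting run of $M_2$, so such a run exists iff $w\in L_1$ and $w\in L_2$. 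This places $L_1\cap L_2$ in $2\DCM(1)$ (resp.\ $2\NCM(1)$), and with the counting step yields the claimed closure.

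I would flag that the only place slenderness (resp.\ thinness) is genuinely used is the counting bound of the second paragraph; the class-membership constructions are valid for arbitrary languages, so the proof in fact also shows that $2\DCM(1)$ and $2\NCM(1)$ are closed under intersection in general, with the one-counter reuse being the essential idea. The main care, rather than any deep obstruction, lies in checking that the sequential reset preserves both the single-counter restriction and the reversal bound, and that acceptance is correctly inherited in the nondeterministic setting.
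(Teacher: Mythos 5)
Your proposal is correct and takes essentially the same route as the paper, whose entire proof is the one-line remark that the statement is ``straightforward since the families of languages above are closed under intersection.'' You have simply filled in the two implicit ingredients --- the bound $f_{L_1\cap L_2}(n)\le\min(f_{L_1}(n),f_{L_2}(n))$ and the closure of $\NCM$ (parallel counters) and of $2\DCM(1)$/$2\NCM(1)$ (sequential reuse of the single counter, with a bounded number of extra reversals) under intersection --- both of which are standard and which the paper takes for granted.
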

\begin{proof}
Straightforward since the families of languages above are
closed under intersection.
\qed \end{proof}

Deciding if an $\NCM$ language (or anything more general than $\CFL$s) is $k$-slender for some $k$ (where $k$ is not part of the input) is open,
although as we showed in Theorem \ref{deckslender}, for a given $k$, we can 
decide if an $\NCM$ or $\NPCM$ accepts a $k$-slender language. We conjecture
that every $k$-slender $\NPCM$ language is bounded. If  
this can be proven, we will also need an algorithm to determine words $w_1, \ldots,
w_r$ such that the language is  a subset of $w_1^* \cdots w_r^*$, which 
we also do not yet know how to do.

Let $c \ge 1$.  A $2\DCM(k)$ ($2\NCM(k))$ $M$ is $c$-crossing if  the number of times
the input head  crosses the boundary of any two adjacent cells of the input is
at most $c$. Then $ M$  is finite-crossing if it is $c$-crossing for some $c$. It is known
that a finite-crossing $2\NCM(k)$ can can be converted to an $\NCM(k')$ for some
$k'$ \cite{Gurari1981220}. Hence every bounded language accepted by any
finite-crossing $2\NCM(k)$ is counting-regular.  The next result shows  that this is
not true if the two-way input is unrestricted:
\begin{theorem} The letter-bounded  language $L = \{a^i b^{ij} ~|~ i, j \ge 1\}$
is accepted by a $2\DCM(1)$ whose counter makes only $1$ reversal, but $L$ is not
counting-regular.
\end{theorem}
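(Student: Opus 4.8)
The plan is to prove the two assertions separately: first exhibit a two-way deterministic one-counter machine with a single counter reversal accepting $L$, and then show that its counting function is essentially the divisor function, which Berstel's classification (Theorem \ref{regularcharacterization}) forbids for regular languages.

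For the machine, I would have the counter perform exactly one up-then-down sweep, exploiting the two-way input to reuse the $a$-block. On input $a^i b^m$ (the finite control first verifies the shape $a^+b^+$), the machine scans left to right and increments the counter once per $b$, so that after reaching the right end marker the counter holds $m$; the head then returns, with the counter held constant, to the left end of the $a$-block. This completes the non-decreasing phase. In the non-increasing phase the machine repeatedly sweeps the $a$-block from left to right, decrementing the counter once per $a$; whenever it reaches the $a/b$ boundary it tests the counter, accepting if it is $0$ and otherwise moving the head back to the start of the $a$-block to sweep again, and it rejects immediately if the counter reaches $0$ in the middle of a sweep. Each complete sweep subtracts $i$, so the counter reaches $0$ exactly at a boundary iff $i \mid m$ with quotient $j = m/i \ge 1$, i.e.\ iff $a^i b^m \in L$. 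The counter only increases and then only decreases, so it makes a single reversal, the machine is deterministic, two-way, and uses one counter; note that it is \emph{not} finite-crossing, which is precisely why it escapes the positive result preceding this theorem.

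For non-counting-regularity, I would first compute $f_L(n)$. A word $a^i b^m$ of length $n=i+m$ lies in $L$ iff $i\ge 1$, $m\ge 1$, and $i \mid m$; since $i \mid m \iff i \mid n$ (as $i \mid i$), and every divisor $i<n$ of $n$ yields a valid $m=n-i\ge 1$ with $i \mid m$, we get $f_L(n) = d(n)-1$, where $d(n)$ denotes the number of divisors of $n$. Now suppose $L$ were counting-regular, so that $d(n)-1$ is the counting function of a regular language and must satisfy one of the three cases of Theorem \ref{regularcharacterization}. I would rule each out using two elementary facts about $d$: it is unbounded (e.g.\ $d(2^r)=r+1$), and $d(n) \le 2\sqrt n$ (divisors pair as $e \leftrightarrow n/e$). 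Unboundedness kills case (i). For case (ii), $d(n)-1 \le 2\sqrt n$ gives $\limsup_n (d(n)-1)/n^k = 0$ for every integer $k\ge 1$, contradicting $c>0$. For case (iii), the bound $d(n)\le 2\sqrt n$ forces sub-exponential growth, so no base $\alpha>1$ can give a nonzero finite limsup; the remaining degenerate possibilities ($\alpha=1$, or $\alpha<1$) are excluded by unboundedness together with $d(n)-1\ge 1$ for $n \ge 2$. Hence all three cases fail, a contradiction.

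The main obstacle is the asymptotic case analysis in the second part: one must argue cleanly that the erratic oscillation of $d(n)$ --- dropping back to small values at primes yet unbounded along prime powers, while remaining $O(\sqrt n)$ --- is incompatible with each of Berstel's three regimes simultaneously. A self-contained alternative I would keep in reserve is the generating-function route: $\sum_{n\ge 1} d(n) z^n = \sum_{k\ge 1} z^k/(1-z^k)$ is a Lambert series whose poles accumulate at every root of unity, so $\sum_n (d(n)-1)z^n$ has infinitely many poles and is not rational, whereas every regular language has a rational generating function. Either route closes the argument.
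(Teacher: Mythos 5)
Your proof is correct, and while the machine construction is essentially the one in the paper (store the number of $b$'s in the counter, then repeatedly sweep the $a$-block decrementing to test divisibility, for a single counter reversal), your argument for non-counting-regularity takes a genuinely different route. The paper computes the counting function as the number of divisors of $n$ and then shows the generating function $\sum_n d(n)z^n$ is a Lambert-type series whose poles accumulate at every root of unity, hence is not rational --- which is exactly the ``reserve'' argument you sketch at the end. Your primary route instead feeds the counting function directly into Berstel's growth classification (Theorem \ref{regularcharacterization}) and kills all three regimes using only two elementary facts, that $d$ is unbounded and that $d(n)\le 2\sqrt{n}$; this avoids generating functions and complex analysis entirely and is arguably more self-contained, at the cost of the slightly fussy case analysis for $\alpha\le 1$ in case (iii) (which you do handle, using $f_L(n)\ge 1$ for $n\ge 2$). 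Two small points in your favor: your count $f_L(n)=d(n)-1$ is actually more careful than the paper's, which lists $a^{d}b^{(n/d-1)d}$ for \emph{every} divisor $d$ of $n$ including $d=n$ (giving $j=0$, which is excluded by the definition of $L$) and moreover mislabels the divisor function as Euler's totient; the off-by-one is harmless to both arguments since $d(n)-1$ and $d(n)$ have the same growth behavior. Your observation that the machine is not finite-crossing, and that this is what lets it escape the preceding positive result, is also a useful remark that the paper makes only implicitly.
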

\begin{proof}
A $2\DCM(1)$ $M$ accepting $L$ operates as follows, given input $a^ib^k$ ($i, k \ge 1$):
$M$ reads and stores $k$ in the counter.  Then it makes multiple sweeps on $a^i$
while decrementing the counter to check that $k$ is divisible by $i$.

To see that $L$ is not counting-regular, we note 
that, for any $n \geq 2$, the number of strings of length $n$ in $L$ is $\phi(n)$, Euler's totient function 
is equal to the number of divisors of $n$. In fact, let the divisors of $n$ be
$d_1, d_2, \ldots, d_m$. Then, there are exactly $m$ strings of length $n$, namely: $a^{d_1} b^{(n/d_1 - 1)d_1}, a^{d_2} b^{(n/d_2 - 1)d_2}, \ldots , \\
a^{d_m} b^{(n/d_m - 1)d_m}$. Conversely, for each string $w$ of length $n$ in $L$, there is a unique divisor of $n$ (namely the number of $a$'s in $w$) associated with the string. This means the generating function of $L$ is $f(z)$ = $\sum_{n \geq 2} \phi(n) z^n$. 

It can be shown that $f(z)$ = $\sum_{n \geq 2} a_n {z^n \over {1-z^n}}$ where $a_n$ = $\sum_{d|n} \phi(d) \mu(n/d)$ where $\mu$ is the Mobius function.
From this expression, it is clear that every solution to $z^n=1$ is a pole of $f(z)$ and so the $n$'th root of unity is a pole of $f(z)$
for each positive integer $n \geq 2$. Since a rational function can have only a finite number of poles, it follows that $f(z)$ is not
rational and hence $L$ is not counting-regular.
\qed \end{proof}
However, it is known that  unary languages accepted by $2\NCM(k)$s  are
regular \cite{IbarraJiang}; hence, such languages are counting-regular.

This is interesting, as $2\DCM(1)$ has a decidable emptiness and slenderness problem, yet
there is a letter-bounded language from the theorem above accepted by a $2\DCM(1)$
that is not counting-regular.

The following result provides an interesting contrast, as it
involves a model with an undecidable membership
(and emptiness) problem, but provides an example of a (non-recursively enumerable) 
language that is all of slender, thin,
bounded, semilinear, but also counting-regular. 

\begin{theorem}
There exists a language $L$ that is letter-bounded and semilinear and thin and counting regular but is not recursively enumerable. Moreover, we can effectively construct a 
$\DFA$ $M$ such that
$f_{L(M)}(n) = f_L(n)$. 

\end{theorem}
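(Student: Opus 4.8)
The plan is to exploit the clause in the definition of \emph{letter-bounded} that permits the bounding letters to repeat, using a language $L \subseteq a^* b^* a^*$ in which exactly one bit of information per length is hidden in the \emph{position} of a single $b$ --- a bit that is invisible to the Parikh map. Fix a non-recursive but recursively enumerable set $S \subseteq \mathbb{N}$ (for concreteness the halting set $K$; we may assume $0,1 \notin S$). For each $n \geq 2$ let the unique word of length $n$ be $a b a^{n-2}$ if $n \in S$ and $b a^{n-1}$ if $n \notin S$, and let the unique word of length $1$ be $b$; let $L$ be the collection of all these words.

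Several properties are then immediate. By construction $L \subseteq a^* b a^* \subseteq a^* b^* a^*$, so $L$ is letter-bounded. Every word of length $n$ contains exactly one $b$ and $n-1$ occurrences of $a$, regardless of where the $b$ sits; hence its Parikh vector is $(n-1,1)$ and $\psi(L) = \{(m,1) \mid m \geq 0\}$, a linear set, so $L$ is semilinear. Since there is at most one word of each length, $f_L(n) \leq 1$ for all $n$, so $L$ is thin. Moreover $f_L(n) = 1$ for every $n \geq 1$ and $f_L(0) = 0$, so $f_L = f_{L(M)}$ for the fixed two-state $\DFA$ $M$ accepting $a^+$; this $M$ is the required (trivially) effectively constructible automaton and witnesses that $L$ is counting-regular.

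It remains to argue that $L$ is not recursively enumerable, and here the design pays off. Suppose $L$ were r.e. Scanning an enumeration of $L$ and reading off the words of the form $b a^{n-1}$ would enumerate $\{n \mid n \notin S\} = \overline{S}$, so $\overline{S}$ would be r.e. (Equally, reading off the words $a b a^{n-2}$ shows $S$ itself is r.e., so $L$ being r.e.\ would make $S$ both r.e.\ and co-r.e., i.e.\ recursive.) Either way this contradicts the choice of $S = K$, which is r.e.\ but not recursive; hence $L$ is not r.e.

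The step I expect to be the genuine obstacle is the \emph{choice of model} rather than any calculation: one must recognise that over a bounded language with \emph{distinct} bounding letters, say $a^* b^*$, the Parikh map is injective, so a semilinear Parikh image is recursive and forces $L$ itself to be recursive, ruling out a non-r.e.\ example of that shape. The whole construction hinges on allowing the bounding letters to repeat, so that $\psi$ becomes non-injective on the bounded set and semilinearity can coexist with arbitrary --- even non-r.e.\ --- information encoded in a Parikh-invisible coordinate. Once that observation is made, verifying the listed properties is routine.
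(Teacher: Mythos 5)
Your construction is correct and follows essentially the same idea as the paper's: hide non-computable information in the position of a single $b$ inside a word of $a$'s, which is invisible to the Parikh map, yielding a thin, letter-bounded, semilinear subset of $a^*ba^*$ with counting function equal to that of $a^+$. The only (immaterial) differences are that the paper starts from an arbitrary non-r.e.\ unary language and recovers it via a gsm mapping, whereas you start from an r.e.\ non-recursive set and argue non-r.e.-ness by showing its complement would become r.e.; both verifications are sound.
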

\begin{proof}
Let $L \subseteq a^*$ be a unary language that is not recursively enumerable, which is known to exist \cite{Minsky}. Assume without loss of generality that the empty word is not in $L$, 
but the letter $a$ itself is in $L$.

Let $L'$ be the language consisting of, for each $n \geq 1$, the single word
with all $a$'s except for one $b$ in the position of the largest $m \leq n$ such that
$a^m \in L$.

Then $L'$ has one word of every length, and is therefore thin. Also, it has one word of every length and exactly one $b$ in every word, so it has the same Parikh map as $a^* b$, so is semilinear. Also, it is clearly bounded in $a^*b a^*$. It is also not recursively enumerable, otherwise if it were, then make a gsm \cite{Hopcroft} that outputs $a$'s for every $a$ until a $b$, then it outputs $a$. Then, for every remaining $a$, it outputs the empty word. Applying this to $L'$ gives $L$ (the position of the $b$ lets the gsm recover the words of $L$). But the recursively enumerable languages are closed under gsm mappings, a contradiction.
\qed \end{proof}

\section{Characterization of $k$-Slender Semilinear and Length-Semilinear Languages}
\label{sec:slender2}

This section discusses decidability properties (such as the problem of testing whether two
languages are equal, or one language is contained in another) for $k$-slender languages in
arbitrary families of languages satisfying certain closure properties. 
It is known that the equivalence problem for $\NPCM$ languages 
that are subsets of $w_1^* \cdots w_r^*$ for given $w_1, \ldots ,  w_r$ 
is decidable. However, as mentioned above, we do not know yet 
if $k$-slender languages are bounded and even if they are, we do not
know yet how to the determine the associated words $w_1, \ldots, w_r$.
Hence, the definition and results below are of interest.

The following notion is useful for studying decidability properties
of slender languages.
Let $k \ge 1$ be given.  A language $L$ is $k$-slender effective if we 
can effectively construct a $\DFA$ over a unary alphabet $\{1\}$ with $k+1$
distinguished states $s_0, s_1, \ldots, s_k$ (other states can exist) 
which, when given an input
$1^n$ where $n \ge 0$, halts in state $s_i$ if $f_L(n) = i$,  where  
$0 \le i \le k$. (Hence, the $\DFA$ can determine the number of strings 
in $L$ of length $n$, for every $n$.)

For example, consider the language
  $$L = \{a^i b^i \mid i \ge 1\}  \cup  \{c^i a^i d^i \mid i \ge 1\}.$$
Then,
$$f_L(n) = 
\begin{cases} 0 & \mbox{if $n = 0$ or $n=1$ or not divisible by $2$ and $3$,}\\
    1 & \mbox{if $n \ge 2$, is divisible by $2$, and not divisible by $3$,}\\
     1 & \mbox{if $n \ge 3$, is divisible by $3$, and not divisible by $2$,}\\
     2 & \mbox{if $n \ge 2$, is divisible by $2$ and divisible by $3$}.
\end{cases}$$
Clearly, $L$ is $2$-slender effective.

Next, we will discuss which $k$-slender languages are $k$-slender effective.
First, we say that the length-semilinear property is effective if it is possible
to effectively construct, for $L \in \LL$, a $\DFA$ accepting $\{1^n \mid f_L(n) \geq 1 \}$.
\begin{theorem}
\label{finiteunion} Let $\LL$ be an effective length-semilinear trio.
A finite union of thin languages in $\LL$ is $k$-slender effective.
\end{theorem}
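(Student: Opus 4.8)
The plan is to reduce the statement to constructing a single unary automaton that computes the (ultimately periodic) function $n \mapsto f_L(n)$, where $L = L_1 \cup \cdots \cup L_k$ and each $L_i \in \LL$ is thin. First I would use the effectiveness of the length-semilinear property: since each $L_i$ is thin, $f_{L_i}(n) \in \{0,1\}$, so $U_i = \{1^n \mid f_{L_i}(n) \ge 1\}$ is exactly $\{1^n \mid f_{L_i}(n) = 1\}$, and by hypothesis a \DFA\ $A_i$ for the ultimately periodic set $U_i$ is effectively constructible. For each length $n$ this already records, for every $i$, whether $L_i$ contributes its (then unique) word $w_i(n)$ of length $n$.

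The counting function of the union is $f_L(n) = |\{\, w_i(n) \mid i \in A(n)\,\}|$, where $A(n) = \{i \mid n \in U_i\}$ is the active index set; the subtlety is that distinct active $L_i$ may contribute the \emph{same} word. Because equality of words is an equivalence relation, $f_L(n)$ is the number of classes of the coincidence relation $i \sim_n j \iff w_i(n) = w_j(n)$ on $A(n)$, and by transitivity this relation is determined by its pairwise bits $C_{ij} = \{1^n \mid f_{L_i \cap L_j}(n) = 1\}$. Hence $f_L(n)$ is a fixed, finitely computable function of the tuple $\big((\,[\,n \in U_i\,]\,)_i, ([\,1^n \in C_{ij}\,])_{i<j}\big)$: given these bits one simply counts the equivalence classes. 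Assuming each $C_{ij}$ is effectively ultimately periodic, I would form the product of the $A_i$ with automata $B_{ij}$ for the $C_{ij}$; this product has finitely many states, in each reachable state the residues fix all membership bits, so one computes the class count $c \in \{0,\dots,k\}$ there and routes the computation to an output state $s_c$. Since $n \mapsto f_L(n)$ is then ultimately periodic with values in $\{0,\dots,k\}$, relabeling yields a unary \DFA\ with the required $k+1$ distinguished states $s_0,\dots,s_k$, establishing that $L$ is $k$-slender effective.

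The main obstacle is the flagged step: producing the coincidence sets $C_{ij}$, i.e.\ the length sets of the pairwise intersections $L_i \cap L_j$, effectively and within the available machinery. When the thin languages are pairwise disjoint (as in the displayed example $\{a^ib^i\}\cup\{c^ia^id^i\}$) every $C_{ij}$ is empty and the argument collapses to the clean case $f_L(n) = |A(n)| = \sum_i [\,n \in U_i\,]$, which the product of the $A_i$ computes directly. In general I would try to realize $C_{ij}$ as (a computable transform of) the length set of an $\LL$-language obtained by a two-track overlay: introduce a disjoint primed copy $\Sigma'$ of $\Sigma$, the projection homomorphisms onto the unprimed and primed tracks, and the regular ``diagonal'' language forcing the two tracks to agree, then use closure under inverse homomorphism and intersection with regular languages to carve out the words lying in $L_i$ on one track and in $L_j$ on the other. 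The delicate point, and where the real work lies, is that this overlay must enforce two non-regular constraints simultaneously; since a trio is not closed under intersection of two of its members, making this go through genuinely uses thinness (each track carries at most one word per length) together with the effective length-semilinearity, and I expect this to be the technical heart of the proof. If it cannot be pushed through under the bare trio hypothesis, the fallback is to invoke the stronger closure assumptions used elsewhere in this section (e.g.\ intersection with \NCM) to obtain the $C_{ij}$, after which the product-automaton construction finishes the argument unchanged.
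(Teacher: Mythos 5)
Your core construction is exactly the paper's proof. For each thin $L_i$ the paper forms $h(L_i)$, where $h$ erases every letter to $1$, so that $h(L_i) = \{1^n \mid f_{L_i}(n) = 1\}$ is effectively regular by length-semilinearity; this is precisely your $U_i$ with its $\DFA$ $A_i$. The paper then runs the $\DFA$s $M_1,\dots,M_k$ in parallel and routes $1^n$ to the distinguished state $s_j$ when exactly $j$ of them accept --- i.e.\ it computes $|A(n)|$ and stops there. That is the entirety of the published argument, so your ``clean case'' \emph{is} the paper's proof. The coincidence issue you flag --- that two thin languages may contribute the \emph{same} word of length $n$, so that counting active indices overcounts $f_L(n)$ --- is a genuine subtlety that the paper's proof simply does not address; it is a criticism of the published argument rather than a defect of yours. (It is harmless where the theorem is actually invoked later, namely the base case $k=1$ of Theorem \ref{thm22}, where no union occurs, and in the paper's worked example the thin languages are pairwise disjoint.) Your proposed repair via the coincidence sets $C_{ij}$ has the right shape but, as you yourself note, cannot be completed from the bare trio axioms, since a trio need not be closed under intersection of two of its members; to carry it out one must either assume pairwise disjointness or import the stronger closure hypotheses (e.g.\ intersection with $\NCM$) used elsewhere in this section. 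In short: same approach as the paper, and you identified a gap that the paper leaves open rather than one in your own reasoning.
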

\begin{proof}
Let $L$ is the finite union of $L_1, \ldots, L_k \in \LL$, each thin. Then construct
$h(L_i)$, where $h$ maps $L_i$ onto the single letter $1$. Then $h(L_i) = \{1^n \mid f_{L_i}(n) = 1\}$. Then since $\LL$ is length-semilinear, each of $h(L_i)$ are regular, and can be accepted by a $\DFA$ $M_i$. Thus, make another $\DFA$ $M'$ such that, on input $1^n$, $M'$ runs each $M_i$ in parallel on $1^n$, and then switches to distinguished state $s_j$ if there are $j$ of the $\DFA$s $M_1, \ldots, M_k$ that are accepting.
\qed \end{proof}

It will be seen next that for all $k$-slender languages in ``well-behaved''
families, they are $k$-slender effective. First, the following lemma is needed.
\begin{lemma} 
\label{makeeffective}
Let $\LL$ be a full trio which is either:
\begin{itemize}
\item semilinear, or
\item is length-semilinear, closed under concatenation, and intersection with $\NCM$,
\end{itemize}
with all properties effective.
Let $ k \ge 1$ and $L \in \LL$, a $k$-slender
language such that $f_{L}(n)$ is equal to $0$ or $k$ for every $n$.  
There
is a $\DFA$ that can determine $f_L(n)$. Hence $L$ is a $k$-slender
effective language.
\end{lemma}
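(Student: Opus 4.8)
The plan is to reduce the whole statement to computing the length set of $L$, exploiting the special hypothesis that $f_L(n)$ only ever takes the two values $0$ or $k$. Set $A = \{1^n \mid f_L(n) \geq 1\}$; this is exactly the length set of $L$ written over the unary alphabet $\{1\}$. Under the hypothesis that $f_L(n) \in \{0,k\}$ for every $n$, deciding whether $1^n \in A$ is the same as computing the exact value of $f_L(n)$: if $1^n \in A$ then $f_L(n) = k$, and if $1^n \notin A$ then $f_L(n) = 0$. Hence the entire lemma comes down to showing that $A$ is recognized, effectively, by a $\DFA$ over $\{1\}$.

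First I would establish the effective regularity of $A$ in each of the two cases. In the second case ($\LL$ length-semilinear with all properties effective) this is immediate from the definition preceding the lemma: the effectiveness of the length-semilinear property is precisely the ability to construct a $\DFA$ accepting $\{1^n \mid f_L(n) \geq 1\} = A$. In the first case ($\LL$ a semilinear full trio with effective properties) I would apply the homomorphism $h$ mapping every letter of $\Sigma$ to the single symbol $1$. Since $\LL$ is a full trio it is closed under homomorphism, so $h(L) \in \LL$, and by construction $h(L) = A$. As a member of the semilinear family $\LL$, the unary language $A$ has a semilinear, hence ultimately periodic, Parikh map; because all properties are effective we may construct its constants and periods and from them build a $\DFA$ for $A$. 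Thus in both cases one obtains, effectively, a $\DFA$ $M_A$ over $\{1\}$ with $L(M_A) = A$.

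Finally I would convert $M_A$ into the device demanded by the definition of $k$-slender effective. Label every accepting state of $M_A$ with output value $k$ and every non-accepting state with output value $0$; take $s_k$ to be (a representative of) the accepting states and $s_0$ (a representative of) the non-accepting states, leaving $s_1, \ldots, s_{k-1}$ present but unused. On input $1^n$, $M_A$ halts in an accepting state exactly when $n \in A$, i.e.\ exactly when $f_L(n) = k$, and in a non-accepting state exactly when $f_L(n) = 0$; in either case the value read off the halting state equals $f_L(n)$. Hence $M_A$ determines $f_L(n)$, witnessing that $L$ is $k$-slender effective.

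The main (and rather modest) point is the reduction itself: recognizing that the hypothesis $f_L(n) \in \{0,k\}$ collapses the computation of a multi-valued counting function to a single membership question about the length set, which is exactly what effective length-semilinearity supplies. I would emphasize that, correspondingly, only effective length-semilinearity is genuinely used here; the closure of $\LL$ under concatenation and under intersection with $\NCM$ assumed in the second case — and the fuller semilinear full-trio machinery available in the first — are not needed for this lemma and are present only because the statement is phrased to share hypotheses with the surrounding results.
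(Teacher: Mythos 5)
Your proof is correct for the lemma as stated, but it takes a genuinely different and much shorter route than the paper. The paper does not use the hypothesis $f_L(n)\in\{0,k\}$ until the very end: it first builds the language $L''=\{x_1\#\cdots\#x_k \mid x_i\in L,\ |x_1|=\cdots=|x_k|,\ x_i\neq x_j \text{ for } i\neq j\}$ inside $\hat{{\cal F}}(\LL\wedge\NCM)$ (or $\LL$ itself in the second case), projects it to the unary alphabet, and recovers $\{1^n\mid f_L(n)\geq k\}$ from the resulting regular length set via the substitution $n\mapsto (n-(k-1))/k$. This detects the condition $f_L(n)=k$ directly for an \emph{arbitrary} $k$-slender $L$; the two-valuedness hypothesis only enters to conclude that knowing whether $f_L(n)=k$ pins down $f_L(n)$ completely. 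You instead use the hypothesis up front to collapse the problem to membership in the length set $\{1^n\mid f_L(n)\geq 1\}$, which is handed to you for free by effective length-semilinearity (or by applying the erasing homomorphism in the semilinear full trio case). What your simplification costs is reusability: the proof of Theorem \ref{thm22} invokes this lemma to obtain a $\DFA$ for $\{1^n\mid f_L(n)=k+1\}$ where $L$ is merely $(k+1)$-slender and its counting function may take any value in $\{0,1,\ldots,k+1\}$; there the length set of $L$ cannot distinguish $f_L(n)=1$ from $f_L(n)=k+1$, and the tuple construction is genuinely needed. So your closing remark that closure under concatenation and intersection with $\NCM$ is superfluous is true of the statement as literally phrased, but those hypotheses are exactly what power the stronger fact that the paper's proof establishes and that the downstream induction actually relies on. (A minor cosmetic point: the definition of $k$-slender effective asks for a single distinguished state $s_i$ for each value $i$, so you should formally merge the accepting states of $M_A$ into $s_k$ and the rejecting ones into $s_0$, e.g.\ by a quotient or an added bookkeeping step; the paper is equally casual about this.)
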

\begin{proof}
Let $\LL' = \hat{{\cal F}}(\LL \wedge \NCM)$ (or just $\LL$ in the second case), which is semilinear
by Theorem \ref{fullAFL}, (or length-semilinear by assumption).
 
Consider
$L' = \{ x_1 \# \cdots \# x_k \mid x_1, \ldots, x_k \in L\} \in \LL'$.
Create $L''$ by intersecting $L'$ with an $\NCM$ language that enforces that all words
of the form 
$ x_1 \# \cdots \# x_k $ have
$|x_1| = \cdots = |x_k|$, and $x_i \ne x_j$ for each $i \ne j$.
Thus $L'' = \{  x_1 \# \cdots \# x_k  \mid  x_1, \ldots, x_k \in L,
|x_1| = \cdots = |x_k|, x_i \ne x_j \mbox{~for each~} i \ne j\}$.
Hence, $L''\in \LL'$. Let $L'''$ be the language obtained from $L''$ by
homomorphism that projects onto the single letter $1$. 
Since $L'''$ is length-semilinear, it can be accepted by a $\DFA$. Moreover, the length $n$ of a word
$x_1\# \cdots \# x_k \in L''$ can be transformed into $|x_1|$ via $\frac{n-(k-1)}{k}$. Given the $\DFA$,
then another 
$\DFA$ can be built that can determine $f_L(n)$.
\qed \end{proof}

Then, the following is true.
\begin{theorem}
Let $\LL$ be a full trio which is either:
\begin{itemize}
\item semilinear, or
\item is length-semilinear, closed under concatenation, union, and intersection with $\NCM$,
\end{itemize}
with all properties effective.
If $L \in \LL$ be a $k$-slender language $L$, then
$L$ is $k$-slender effective.
\label{thm22}
\end{theorem}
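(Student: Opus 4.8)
The plan is to generalize the construction of Lemma \ref{makeeffective} from the special case $f_L(n)\in\{0,k\}$ to an arbitrary $k$-slender $L$ by treating each threshold $1\le j\le k$ separately and then amalgamating the results. Put $\LL' = \hat{{\cal F}}(\LL \wedge \NCM)$ in the semilinear case (or simply $\LL'=\LL$ in the length-semilinear case), which is effectively length-semilinear by Theorem \ref{fullAFL} and the hypotheses. For each $j$ with $1\le j\le k$, introduce a new symbol $\#$ and form, exactly as in Lemma \ref{makeeffective},
$$ L_{(j)} = \{\, x_1 \# \cdots \# x_j \mid x_1,\ldots,x_j \in L,\ |x_1|=\cdots=|x_j|,\ x_i \ne x_{i'}\text{ for all }i\ne i' \,\}. $$
This lies in $\LL'$: the $j$-fold marked concatenation is built using closure under concatenation (every full trio contains $\{\#\}$), and the equal-length and pairwise-distinctness conditions are imposed by intersecting with the same $\NCM$ checker used in Lemma \ref{makeeffective}. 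By construction $L_{(j)}$ contains a word of length $m$ iff $m=(j-1)+jn$ for some $n$ such that $L$ has at least $j$ pairwise distinct words of length $n$; hence the unary projection $h$ onto $\{1\}$ satisfies $h(L_{(j)}) = \{\, 1^{(j-1)+jn} \mid f_L(n)\ge j \,\}$.

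Since $\LL'$ is effectively length-semilinear, the effective length-semilinear property yields a $\DFA$ $M_j$ over $\{1\}$ accepting $h(L_{(j)})$. From $M_j$ I then build a $\DFA$ $B_j$ over $\{1\}$ that, on input $1^n$, accepts precisely when $f_L(n)\ge j$: starting from the state reached by $M_j$ after its first $j-1$ transitions (a fixed state, as $M_j$ is deterministic), $B_j$ advances $M_j$ by $j$ transitions for each input symbol and accepts iff $M_j$ is in an accepting state after $1^n$ is consumed. This simply evaluates $M_j$ at index $(j-1)+jn$, so each $B_j$ is effectively constructible with $L(B_j)=\{\,1^n \mid f_L(n)\ge j\,\}$.

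Finally, run $B_1,\ldots,B_k$ in parallel on $1^n$ inside one product $\DFA$ $A$. Because $f_L(n)\ge j$ holds exactly for $j\le f_L(n)$, the number of components accepting after reading $1^n$ is exactly $f_L(n)$, which is at most $k$ by $k$-slenderness. Labelling each product state by its number of accepting components gives $k+1$ distinguished classes $s_0,\ldots,s_k$, and $A$ reaches $s_i$ on $1^n$ iff $f_L(n)=i$; thus $A$ determines $f_L(n)$ and $L$ is $k$-slender effective, exactly in the sense concluded in Lemma \ref{makeeffective}. The only steps needing genuine care are the correspondence between lengths in $L_{(j)}$ and the threshold condition $f_L(n)\ge j$ (which reuses the distinctness/length $\NCM$ of Lemma \ref{makeeffective}) and the index shift $n\mapsto(j-1)+jn$ in passing from $M_j$ to $B_j$; the parallel amalgamation and the read-off of the count are then routine, and no closure property beyond concatenation, intersection with $\NCM$, homomorphism, and effective length-semilinearity is essential.
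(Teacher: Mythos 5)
Your proof is correct, but it takes a genuinely different route from the paper's. The paper proceeds by induction on $k$: it uses the construction of Lemma \ref{makeeffective} only at the top level to build a $\DFA$ for $\{1^n \mid f_L(n) = k+1\}$, then forms $B = \{w \in L \mid f_L(|w|) \neq k+1\}$, observes that $B$ is $k$-slender, and applies the induction hypothesis to $B$. You instead avoid induction entirely: for each threshold $j \le k$ you build the marked $j$-fold product $L_{(j)}$, read off the regular unary language $\{1^{(j-1)+jn} \mid f_L(n) \ge j\}$ from length-semilinearity of $\LL'=\hat{{\cal F}}(\LL \wedge \NCM)$, rescale indices to get a $\DFA$ $B_j$ for $\{1^n \mid f_L(n)\ge j\}$, and then count accepting components in the product automaton. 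Your version is arguably cleaner on one point: the paper's induction hands the language $B$ (which a priori lives in $\LL'$ rather than $\LL$, being obtained from $L$ by intersecting with a regular length-filter) back to an induction hypothesis stated for languages of $\LL$, a mild bookkeeping wrinkle your one-shot argument sidesteps, since you only ever need each $L_{(j)}$ to lie in the effectively length-semilinear family $\LL'$. What the paper's inductive decomposition buys in exchange is the explicit stratification of $L$ into the sublanguages with $f_L(n)$ equal to each fixed value, which is reused verbatim to establish the corollary following Theorem \ref{thm22} (the decomposition of a $k$-slender language into $i$-slender effective pieces with $f_{L_i}(n)\in\{0,i\}$); your threshold automata $B_1,\ldots,B_k$ yield the same information but would require a short extra remark to recover that corollary. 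Both arguments rely on the same unstated closure of $\LL'$ under intersection with $\NCM$ after concatenation, so you incur no gap the paper does not.
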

\begin{proof} The case $k = 1$ is true by Theorem \ref{finiteunion}. 
 Assume by induction that
the theorem is true for $k \geq 1$.

Now consider an $L \in \LL$ that is a $(k+1)$-slender language,
$k \ge 1$.  Hence $f_{L}(n) \leq (k+1)$ for each $n \geq 0$. 

Let $\LL' = \hat{{\cal F}}(\LL \wedge \NCM)$ (or just $\LL$ in the second case), which is semilinear
by Theorem \ref{fullAFL}, (or length-semilinear by assumption).
Let $A = \{x_1 \# \cdots \# x_{k+1} \mid x_1, \ldots,x_{k+1} \in L\} \in \LL'.$
Then intersect
$A$ with an $\NCM$ that enforces that all words
$x_1 \# \cdots \# x_{k+1}$ have
$|x_1| = \cdots = |x_{k+1} |$, and $x_i \ne x_j$ for each $i \ne j$.
Let $A'$ be the resulting language.
Then
$A' = \{x_1 \# \cdots \# x_{k+1} \mid x_1, \ldots,x_{k+1}\in L \mbox{~such that~} 
               |x_1| = \cdots = |x_{k+1} |, x_i \ne x_j \mbox{~for each~} i \ne j\}.$
By Lemma \ref{makeeffective}, a $\DFA$ accepting $\{1^n \mid f_L(n) = k+1\}$
can be effectively constructed. Thus, a $\DFA$ accepting
$\{1^n \mid f_L(n) \neq k+1\}$ can also be constructed. Furthermore,
$B = \{w \mid w\in L, f_L(|w|) \neq k+1\} \in \LL'$.
Then, $B$ is $k$-slender and, hence $k$-slender effective by induction hypothesis.
Hence, $L$ is $(k+1)$-slender effective.
\qed \end{proof}

The proof of Theorem \ref{thm22} actually shows the following:
\begin{corollary} 
Let $\LL$ be a full trio closed under concatenation, union, and intersection with $\NCM$, and is length-semilinear
with all properties effective.
Let $k \ge 1$. A language $L \in \LL$ is a $k$-slender language 
if and only if $L = L_1 \cup \cdots \cup L_k$, where  for $1 \le i \le k$, $L_i$ is
an $i$-slender effective language such that $f_{L_i}(n)$ is equal to $0$ or $i$ for each $n$.
\end{corollary}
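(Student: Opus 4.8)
The plan is to read the decomposition directly off the inductive construction in the proof of Theorem~\ref{thm22}, grouping the words of $L$ according to the value of the counting function on their length. For the forward implication, assume $L \in \LL$ is $k$-slender and for each $i$ with $1 \le i \le k$ set
$$L_i = \{ w \in L \mid f_L(|w|) = i \}.$$
Since every $w \in L$ satisfies $1 \le f_L(|w|) \le k$, it is immediate that $L = L_1 \cup \cdots \cup L_k$, and that the $L_i$ have pairwise disjoint length supports: for any fixed length $n$ there is exactly one $i$ (if any) having a word of length $n$ in $L$, namely $i = f_L(n)$, and all such words lie in $L_i$. Hence $f_{L_i}(n) = i$ when $f_L(n) = i$ and $f_{L_i}(n) = 0$ otherwise, so $f_{L_i}(n) \in \{0,i\}$ as required.

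It then remains to place each $L_i$ in $\LL$ and to argue $i$-slender effectiveness. First I would build, exactly as in the proof of Theorem~\ref{thm22} and Lemma~\ref{makeeffective}, a $\DFA$ recognizing $\{1^n \mid f_L(n) = i\}$. Taking $\LL' = \hat{{\cal F}}(\LL \wedge \NCM)$ (semilinear by Theorem~\ref{fullAFL}), I form the $i$-fold marked concatenation $\{x_1 \# \cdots \# x_i \mid x_1,\ldots,x_i \in L\}$, intersect it with an $\NCM$ language forcing $|x_1| = \cdots = |x_i|$ and pairwise distinctness, and project onto the single letter $1$; after the length rescaling $n \mapsto (n-(i-1))/i$ this yields a $\DFA$ for $\{1^n \mid f_L(n) \ge i\}$. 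Differencing the machines for $i$ and $i+1$ gives a $\DFA$ $D_i$ for $\{1^n \mid f_L(n) = i\}$. Letting $h \colon \Sigma^* \to \{1\}^*$ be the projection homomorphism, I would then set $L_i = L \cap h^{-1}(L(D_i))$; since $h^{-1}(L(D_i))$ is regular and $\LL$ is a full trio, $L_i \in \LL$ (not merely in $\LL'$, which is the reason for intersecting with $L$ itself rather than reusing the $\LL'$ witnesses). As $f_{L_i}(n)$ is identically $0$ or $i$, Lemma~\ref{makeeffective} applies and shows $L_i$ is $i$-slender effective, completing the forward direction.

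For the converse, suppose $L = L_1 \cup \cdots \cup L_k$ with each $L_i$ an $i$-slender effective language satisfying $f_{L_i}(n) \in \{0,i\}$. The point I expect to be delicate is that this bound alone does not force $k$-slenderness unless the $L_i$ have disjoint length supports: otherwise two pieces could both be active at the same length and their contributions would add, pushing $f_L(n)$ above $k$. I would therefore carry the disjoint-support property through from the forward construction as part of the intended reading of the decomposition, so that for each $n$ at most one index $i$ contributes, whence $f_L(n) = f_{L_i}(n) \le i \le k$ and $L$ is $k$-slender. This is the only real obstacle; everything else is bookkeeping that follows from the effectiveness of the closure and length-semilinearity hypotheses on $\LL$ together with Lemma~\ref{makeeffective} and Theorem~\ref{fullAFL}.
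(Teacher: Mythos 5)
Your proposal is correct and is essentially the paper's argument: the paper gives no separate proof of this corollary beyond the remark that the proof of Theorem~\ref{thm22} shows it, and your fibration $L_i = \{w \in L \mid f_L(|w|) = i\}$ is exactly what that induction produces when unrolled, with $L_i \in \LL$ (via intersection with the regular language $h^{-1}(L(D_i))$) and $i$-slender effectiveness supplied by Lemma~\ref{makeeffective} just as in the paper. Your caveat about the converse is also well taken: as literally stated the right-to-left direction fails (e.g.\ $L_1=\{a\}$ and $L_2=\{b,c\}$ give $f_{L_1\cup L_2}(1)=3>2$), so the decomposition must be understood as having pairwise disjoint length supports --- which is what the forward construction delivers --- and making that explicit, as you do, is the right reading of the statement rather than a gap in your proof.
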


Next, decidability of containment is addressed.
\begin{theorem} 
Let $\LL$ be a full trio which is either:
\begin{itemize}
\item semilinear, or
\item is length-semilinear, closed under concatenation, and intersection with $\NCM$,
\end{itemize}
with all properties effective.
It is decidable, given $L_1, L_2 \in \LL$ with $L_2$
being a $k$-slender language, whether 
$L_1 \subseteq L_2$.
\label{containment}
\end{theorem}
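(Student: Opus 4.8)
The plan is to reduce $L_1 \subseteq L_2$ to finitely many emptiness tests inside a family with decidable emptiness. Containment fails exactly when there is a witness $x_0 \in L_1$ with $x_0 \notin L_2$, so the whole difficulty is to certify, using only membership in $L_1$ and $L_2$, that a candidate $x_0$ genuinely lies outside $L_2$. The key observation is that since $L_2$ is $k$-slender, ``$x_0 \notin L_2$'' can be certified syntactically: if we can exhibit $f_{L_2}(|x_0|)$ distinct words of $L_2$, all of length $|x_0|$ and all different from $x_0$, then those words must be \emph{all} the words of $L_2$ of that length, forcing $x_0 \notin L_2$. To pin down the count $f_{L_2}(|x_0|)$ I would invoke Theorem~\ref{thm22}: the $k$-slender language $L_2$ is $k$-slender effective, so a unary $\DFA$ computing $f_{L_2}(n)$ is available.

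Concretely, set $\LL' = \hat{{\cal F}}(\LL \wedge \NCM)$ in the semilinear case (and $\LL' = \LL$ in the length-semilinear case), which is semilinear (resp.\ length-semilinear) and has decidable emptiness by Theorem~\ref{fullAFL}. For each $j$ with $0 \le j \le k$ I would construct
$$A_j = \{ x_0 \# x_1 \# \cdots \# x_j \mid x_0 \in L_1,\ x_1, \ldots, x_j \in L_2,\ |x_0| = \cdots = |x_j|,\ x_i \ne x_{i'} \text{ for } i \ne i',\ f_{L_2}(|x_0|) = j \}.$$
This is assembled exactly as in Lemma~\ref{makeeffective} and Theorem~\ref{thm22}: first form the concatenation $L_1 \# L_2 \# \cdots \# L_2$ (a language of $\LL'$ by the full-AFL, resp.\ concatenation, closure), then intersect with an $\NCM$ language enforcing the equal-length and pairwise-distinctness conditions (boundedly many reversal-bounded counters suffice, since $j \le k$ is fixed and each of the $\binom{j+1}{2}$ comparisons guesses and verifies one discrepancy position), and finally intersect with the length predicate $f_{L_2}(|x_0|) = j$ read off the unary $\DFA$ of Theorem~\ref{thm22} (a semilinear condition on $|x_0|$, hence checkable by intersection with an $\NCM$). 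Thus $A_j \in \LL'$.

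I would then establish $L_1 \subseteq L_2 \iff A_j = \emptyset$ for all $0 \le j \le k$. If some $A_j$ is nonempty, its witness supplies $j$ distinct words of $L_2$ of length $n = |x_0|$ together with $f_{L_2}(n) = j$; these are therefore exactly the words of $L_2$ of length $n$, and since $x_0$ differs from all of them, $x_0 \in L_1 \setminus L_2$. Conversely, if $x_0 \in L_1 \setminus L_2$, then with $n = |x_0|$, $j = f_{L_2}(n) \le k$, and $x_1, \ldots, x_j$ enumerating the words of $L_2$ of length $n$, we obtain a witness of $A_j$. Each $A_j$ has decidable emptiness by Theorem~\ref{fullAFL}, so running these $k+1$ tests decides containment.

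The main obstacle is precisely this certification step: distinctness of $x_0$ from $j$ words of $L_2$ is worthless unless we also know those $j$ words \emph{exhaust} $L_2$ at that length, which is why $k$-slender effectiveness is indispensable. One subtlety deserves care in the length-semilinear branch, where the invoked Theorem~\ref{thm22} nominally also assumes closure under union: the predicate $f_{L_2}(n) = j$ can be obtained without union of $\LL$ languages, since each $\{1^n \mid f_{L_2}(n) \ge i\}$ is built from $L_2$ using only concatenation, intersection with $\NCM$, and projection to a unary alphabet, and is therefore regular; the desired $\{1^n \mid f_{L_2}(n) = j\}$ is then a Boolean combination of these \emph{regular} unary sets, so all union/difference happens at the regular level. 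With that remark, everything else is a routine assembly of the closure properties of $\LL'$ and the decidable emptiness of Theorem~\ref{fullAFL}.
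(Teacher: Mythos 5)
Your proof is correct and follows essentially the same route as the paper's: both certify $x \notin L_2$ by packaging $x$ together with all $f_{L_2}(|x|)$ words of $L_2$ of that length (using $k$-slender effectiveness from Theorem~\ref{thm22}), enforce equal lengths and pairwise distinctness with an $\NCM$, and reduce to emptiness in $\hat{{\cal F}}(\LL \wedge \NCM)$ via Theorem~\ref{fullAFL}; your splitting into $k+1$ emptiness tests indexed by $j = f_{L_2}(n)$ is only a cosmetic variant of the paper's single test, in which $r$ is carried inside the witness word and checked against the unary $\DFA$. Your closing remark on avoiding union closure of $\LL$ in the length-semilinear branch is a welcome extra, since the hypothesis of this theorem omits union while Theorem~\ref{thm22} as stated requires it.
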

\begin{proof} Then $L_2$ is $k$-slender effective by Theorem \ref{thm22}. Without loss of generality, assume that the input alphabet 
of both $L_1$ and $L_2$ is $\Sigma$.  Let $1, \#$, and $\$$ be new symbols. 
Let $\LL' = \hat{{\cal F}}(\LL \wedge \NCM)$ (or just $\LL$ in the second case), which is semilinear
by Theorem \ref{fullAFL}, (or length-semilinear by assumption).
We will construct a sequence of machines and languages below.
\begin{enumerate}
\item First, let $M_1'$ (resp.\ $M_2'$) be the unary $\DFA$ accepting all words $1^n$ where
a word of length $n$ is in $L_1$ (resp.\ in $L_2$). Let $A_1 = L(M_1) - L(M_2)$. (This is
empty if and only if all lengths of words in $L_1$ are lengths
of words in $L_2$. This language is regular.

\item Construct $A_2 \in \LL'$ consisting of all words
$w = 1^n \$ x \$ y_1\# \cdots y_r\$$, where 
$x \in L_1$ and each $y_j \in L_2$.

\item Construct an $\NCM$ $A_3$ which, when given 
$w = 1^n\$x\$y_1\# \cdots y_r\$$, accepts $w$ if the following is true:
\begin{enumerate}
\item $r = f_{L_2}(n)$ (which can be tested since $L_2$ is $k$-slender effective). 
\item $|x| = |y_1| = \cdots = |y_r| = n$.
\item $y_i \ne y_j$ for each $i \ne j$.
\item $x \ne y_i$ for each $i$.
\end{enumerate}
Note that $A_3$ needs multiple reversal-bounded counters to carry out the four
tasks in parallel.
\item Construct $A_4 = A_2 \cap A_3 \in \LL'$.
\item Finally construct an $A_5 = A_4 \cup A_1 \in \LL'$ (full trios are closed
under union with regular languages \cite{G75}).
\end{enumerate}

It is easy to verify that $L_1 \not\subseteq L_2$ if and only 
if $A_5$ is not empty, which is decidable, since emptiness is decidable.
\qed \end{proof}
\begin{corollary}
Let $\LL$ be a full trio which is either:
\begin{itemize}
\item semilinear, or
\item is length-semilinear, closed under concatenation, and intersection with $\NCM$,
\end{itemize}
with all properties effective.
It is decidable, given $L_1,L_2 \in \LL$ that are 
$k$-slender languages, whether $L_1 = L_2$.
\end{corollary}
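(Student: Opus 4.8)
The plan is to reduce equality to a pair of containment tests and then invoke Theorem \ref{containment} in both directions. The key observation is that $L_1 = L_2$ holds if and only if both $L_1 \subseteq L_2$ and $L_2 \subseteq L_1$ hold, so it suffices to decide each of these two inclusions and return ``equal'' precisely when both succeed.

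First I would check that each of the two containment queries satisfies the hypotheses of Theorem \ref{containment}. That theorem decides $L_1 \subseteq L_2$ whenever $L_1, L_2 \in \LL$ and the \emph{second} (containing) language is $k$-slender; the first language is allowed to be arbitrary in $\LL$. In our situation we are given that \emph{both} $L_1$ and $L_2$ are $k$-slender languages in $\LL$, which is strictly stronger than what either application needs. Hence the instance $L_1 \subseteq L_2$ is decidable by Theorem \ref{containment} with $L_2$ playing the role of the $k$-slender language, and the instance $L_2 \subseteq L_1$ is decidable by the same theorem with the roles reversed, using that $L_1$ is $k$-slender.

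The algorithm therefore runs the decision procedure of Theorem \ref{containment} twice, once for each inclusion, and accepts if and only if both report containment; this decides $L_1 = L_2$. There is no real obstacle here, as all the work has already been done in Theorem \ref{containment} (which in turn relies on the $k$-slender effectiveness established in Theorem \ref{thm22} and on decidability of emptiness via Theorem \ref{fullAFL}); the only point worth stressing is that Theorem \ref{containment} is asymmetric in its hypotheses, requiring only the containing language to be $k$-slender, so the symmetric hypothesis of this corollary (both languages $k$-slender) is exactly what lets us apply it in each direction.
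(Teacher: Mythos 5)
Your proposal is correct and matches the paper's (implicit) argument exactly: the corollary is stated immediately after Theorem \ref{containment} precisely because equality reduces to the two containments $L_1 \subseteq L_2$ and $L_2 \subseteq L_1$, each decidable by that theorem since the containing language is $k$-slender in both directions. Your remark about the asymmetry of the containment theorem's hypotheses is accurate and is the only point that needs checking.
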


There are many semilinear full trios in the literature for which the properties in this section hold. 
\begin{corollary}
Let $\LL$ be any of the families from Example \ref{semilinearfulltrioexamples}.
The following are decidable:
\begin{itemize}
\item For $L_1, L_2$ with $L_2$ being $k$-slender, is $L_1 \subseteq L_2$?
\item For $L_1, L_2$ being $k$-slender languages, is $L_1 = L_2$?
\end{itemize}
\end{corollary}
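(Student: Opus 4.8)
The plan is to recognize that this corollary is a direct specialization of Theorem \ref{containment} and of the equality corollary that immediately follows it; the only work is to verify that each family listed in Example \ref{semilinearfulltrioexamples} satisfies the hypotheses of those two results. First I would recall what Example \ref{semilinearfulltrioexamples} asserts: each of the listed families (the $\CFL$s, $\NCM$, finite-index $\ETOL$, $k$-flip $\NPDA$s, reversal-bounded queue automata, $\NTM$s with a one-way input and finite-crossing worktape, uncontrolled finite-index indexed grammars, and multi-pushdown machines) is a \emph{semilinear full trio with all closure properties and semilinearity effective}. In particular, each such $\LL$ is a full trio, is semilinear, and has all of these properties effective, which is exactly the first (``semilinear'') branch of the hypotheses of both Theorem \ref{containment} and the subsequent equality corollary.

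Second, for the containment claim I would invoke Theorem \ref{containment} directly: since $\LL$ is a semilinear full trio with all properties effective and $L_2 \in \LL$ is $k$-slender, it is decidable whether $L_1 \subseteq L_2$. For the equality claim I would invoke the equality corollary following Theorem \ref{containment}: since both $L_1, L_2 \in \LL$ are $k$-slender languages in such a family, it is decidable whether $L_1 = L_2$. (Equality also follows by applying Theorem \ref{containment} in both directions, which is legitimate here because both languages are $k$-slender, so the theorem applies with either one in the role of $L_2$.)

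The only verification that is not entirely mechanical is the effectiveness of semilinearity and of the closure properties for each individual family; but this is precisely what is asserted, with references, in Example \ref{semilinearfulltrioexamples}, so no additional obstacle arises. There is thus no genuinely hard step: the corollary is obtained by specializing the general decidability results to the concrete families whose required effective properties have already been catalogued, with the underlying decidability in every case tracing back to the decidable emptiness problem for $\hat{{\cal F}}(\LL \wedge \NCM)$ furnished by Theorem \ref{fullAFL}.
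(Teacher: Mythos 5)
Your proposal is correct and matches the paper's (implicit) argument exactly: the corollary is obtained by specializing Theorem \ref{containment} and the equality corollary that follows it to the families of Example \ref{semilinearfulltrioexamples}, all of which are semilinear full trios with effective properties, so the hypotheses of the first branch are satisfied. No further commentary is needed.
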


Furthermore, matrix grammars are an example of a length-semilinear \cite{DassowHandbook} full trio closed under concatenation, union, and intersection with $\NCM$ (although they do accept non-semilinear languages). We therefore
get all these properties for matrix grammars as a consequence of these proofs. However, this result is already known \cite{matrix}.

Using the ideas in the constructive proof of the theorem above,
we can also show:
\begin{theorem} Let $\LL$ be a union and concatenation closed length-semilinear full trio with all properties effective
that is closed under intersection with $\NCM$.
Let $L_1, L_2 \in \LL$ with $L_2$ a $k$-slender language.
Then $L_1-L_2 \in \LL$.  Hence, the complement of any $k$-slender language
in $\LL$ is again in $\LL$.
\label{difference}
\end{theorem}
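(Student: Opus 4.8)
The plan is to reduce the complement statement to the difference statement and then construct $L_1 - L_2$ explicitly inside $\LL$ via a witness encoding. For the final claim, taking $L_1 = \Sigma^*$ (which is regular and hence in the full trio $\LL$) gives $\overline{L_2} = \Sigma^* - L_2$, so it suffices to prove $L_1 - L_2 \in \LL$. Since $\LL$ satisfies the hypotheses of Theorem \ref{thm22} (a length-semilinear full trio closed under concatenation, union, and intersection with $\NCM$, all effective), $L_2$ is $k$-slender effective; fix the unary $\DFA$ $D$ with distinguished states $s_0, \ldots, s_k$ that, on input $1^n$, determines $f_{L_2}(n)$. The key idea is that a word $x$ of length $n$ lies outside $L_2$ exactly when one can exhibit $r = f_{L_2}(n)$ pairwise distinct length-$n$ words $y_1, \ldots, y_r \in L_2$, all different from $x$: these $r$ words are then \emph{all} the length-$n$ words of $L_2$, certifying $x \notin L_2$.

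To keep everything in $\LL$, first form a disjoint primed copy $\Sigma'$ of $\Sigma$ and let $L_2' = \rho(L_2) \in \LL$ be the renaming of $L_2$ into $\Sigma'$ (full trios are closed under homomorphism). Because $r \le k$, no Kleene-star is needed: set
$$A = L_1 \cdot \{\$\} \;\cup\; \bigcup_{r=1}^{k} L_1 \cdot \{\$\} \cdot L_2' \cdot (\{\#\} \cdot L_2')^{r-1},$$
where the first term handles lengths $n$ with $f_{L_2}(n) = 0$. Each summand is a finite concatenation of $\LL$-languages with singletons (regular, hence in $\LL$), so $A \in \LL$ by closure under concatenation and union. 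Next build an $\NCM$ $B$ that, on input $x \$ y_1 \# \cdots \# y_r$, simulates $D$ on the length of $x$ (feeding one $1$ per letter of $x$) to obtain $r^\ast = f_{L_2}(|x|)$ in its finite control, and accepts iff (i) the number of $\#$-separated blocks equals $r^\ast$, (ii) $|x| = |y_1| = \cdots = |y_r|$, (iii) $y_i \ne y_j$ for $i \ne j$, and (iv) $x \ne y_i$ for every $i$ (comparing an unprimed letter of $x$ with the unpriming of a letter of $y_i$). Conditions (ii)--(iv) are checked by guessing discrepancy positions and measuring them with finitely many reversal-bounded counters, exactly as in the proof of Theorem \ref{containment}. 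Then $A \cap B \in \LL$ by closure under intersection with $\NCM$. Finally let $h$ be the homomorphism fixing $\Sigma$ and erasing $\Sigma' \cup \{\$, \#\}$; then $L_1 - L_2 = h(A \cap B) \in \LL$, since full trios are closed under homomorphism.

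For correctness, if $x \in L_1 - L_2$ with $|x| = n$ and $r = f_{L_2}(n)$, then the $r$ distinct length-$n$ words of $L_2$, primed, furnish $y_1, \ldots, y_r$ with $x \ne y_i$ (as $x \notin L_2$), so $x \$ y_1 \# \cdots \# y_r \in A \cap B$ and hence $x \in h(A \cap B)$; conversely, any surviving word exhibits all $f_{L_2}(|x|)$ length-$|x|$ words of $L_2$ and therefore certifies $x \notin L_2$ while $x \in L_1$. The main obstacle, and the reason the primed copy is essential, is the erasure step: a full-trio homomorphism acts letter-by-letter, so it cannot selectively delete the $y_i$-part while retaining $x$ if both lie over the same alphabet; placing the witnesses over $\Sigma'$ makes the projection onto $x$ well defined. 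The only other point requiring care is that the construction must stay within $\LL$ rather than drifting into $\hat{{\cal F}}(\LL \wedge \NCM)$, which is why we rely on the bounded union (using $r \le k$) together with the explicit closure of $\LL$ under intersection with $\NCM$.
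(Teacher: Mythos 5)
Your proposal is correct and follows essentially the same route as the paper: use $k$-slender effectiveness of $L_2$ (Theorem \ref{thm22}) to build the certificate language $x\$y_1\#\cdots\#y_r$ with $r=f_{L_2}(|x|)$ witnesses, filter it with an $\NCM$ checking equal lengths and pairwise/witness distinctness, and project back with an erasing homomorphism. The only differences are cosmetic: you prime the $L_2$-witnesses where the paper primes the $L_1$-component, and you write out the bounded union over $r\le k$ explicitly.
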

\begin{proof} Let $\Sigma$ be the (without loss of generality) joint alphabet of $L_1$ and
$L_2$, and let $\Sigma'$ be the set of the
primed versions of the symbols in $\Sigma$. Let $\#$, and $\$$ be new symbols. 
Consider input 
\begin{equation}
w = x\$y_1\# \cdots \#y_r,
\label{w}
\end{equation} where $x$ is in $(\Sigma')^*$ and $y_1, \ldots, y_r$ 
are in $\Sigma^*$, for some $0 \leq r \leq k$. 
By Theorem \ref{thm22}, $L_2$ is $k$-slender effective. 
Let $M'$ be this unary $\DFA$ accepting all words of lengths in $L_2$.
Build an $\NCM$ $M''$ that on input $w$, verifies:
\begin{enumerate}
\item $r = f(n)$.
\item $|x|=|y_1| = \cdots = |y_r|$.
\item $y_i \ne y_j$ for each $i \ne j$.
\item $h(x) \ne y_i$ for each $i$, where $h(a') = a$ for each $a' \in \Sigma'$.
\end{enumerate}

Consider $L''' \in \LL$ 
consisting of all words of the form of $w$ in Equation \ref{w},
where $x \in L_1$, and each $y_i \in L_2$.
This is in $\LL$ since $\LL$ is closed under concatenation.

Now define a homomorphism $h_1$ which maps $\#, \$$, and symbols in $\Sigma$ to
$\epsilon$ and fixes letters in $\Sigma'$.  Clearly, $h_1(L''' \cap L(M''))$ is 
$L_1 - L_2$, and it is in $\LL$.
\qed \end{proof}

This holds for not only the matrix languages, but also concatenation and union-closed
semilinear full trios closed under intersection with $\NCM$. Some examples are:
\begin{corollary}
Let $\LL$ be any family of languages that are
accepted by a machine model in Example \ref{semilinearfulltrioexamples} 
that are augmented by reversal-bounded counters.
Given $L_1, L_2 \in \LL$ with $L_2$ being $k$-slender, then $L_1 - L_2 \in \LL$.
Furthermore, the complement of any $k$-slender language in $\LL$ is again in $\LL$.
\end{corollary}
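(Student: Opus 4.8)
The plan is to obtain this corollary as a direct instantiation of Theorem \ref{difference}: for each family $\LL$ formed by augmenting a model from Example \ref{semilinearfulltrioexamples} with reversal-bounded counters, I would verify that $\LL$ satisfies every hypothesis of that theorem, namely that $\LL$ is a length-semilinear full trio that is closed under union, concatenation, and intersection with $\NCM$, with all of these properties effective. Once these are in place, both conclusions of Theorem \ref{difference} --- that $L_1 - L_2 \in \LL$ and that the complement of any $k$-slender language in $\LL$ is again in $\LL$ --- transfer immediately.

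For the structural hypotheses I would appeal to the remark following Example \ref{semilinearfulltrioexamples} together with \cite{Harju,fullaflcounters}, which give that each augmented family is an effectively semilinear full AFL. Being a full AFL yields at once that $\LL$ is a full trio closed under union and concatenation; being semilinear yields length-semilinearity, since projecting every language onto a single letter gives a language whose Parikh map is semilinear, equivalently regular (as noted in Section \ref{notation}). Effectiveness is preserved throughout by the cited constructions.

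The one hypothesis not contained in the definition of a full AFL is closure under intersection with $\NCM$, and this is the step I would treat most carefully. Here I would use that the underlying machine already carries reversal-bounded counters: given $L \in \LL$ recognized by such a machine $M$ and an $\NCM$ $N$, a product construction runs $M$ and $N$ synchronously on the common one-way input and appends the counters of $N$ to those of $M$. The resulting device is again of the augmented type --- its total number of counters and their reversal bounds remaining finite --- so $L \cap L(N) \in \LL$, effectively. For the grammar-based families in the list, this closure is instead supplied by the construction $\hat{{\cal F}}(\LL \wedge \NCM)$ underlying \cite{fullaflcounters} and Theorem \ref{fullAFL}. With every hypothesis verified, Theorem \ref{difference} applies and the corollary follows. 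The only genuine obstacle is confirming this $\NCM$-intersection closure uniformly across the listed models; the remaining conditions are bookkeeping that the cited full-AFL structure supplies directly.
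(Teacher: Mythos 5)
Your proposal is correct and matches the paper's approach: the paper gives no separate proof of this corollary, justifying it only by the preceding remark that the counter-augmented families from Example \ref{semilinearfulltrioexamples} are union- and concatenation-closed (length-)semilinear full trios closed under intersection with $\NCM$ (citing \cite{Harju,fullaflcounters}), so Theorem \ref{difference} applies directly. Your more careful verification of each hypothesis, including the product construction for $\NCM$-intersection using the already-present reversal-bounded counters, is exactly the intended instantiation.
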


Next, decidability of disjointness for $k$-slender languages will be addressed.
\begin{theorem}
Let $\LL$ be a full trio which is either:
\begin{itemize}
\item semilinear, or
\item is length-semilinear, closed under concatenation, union, and intersection with $\NCM$,
\end{itemize}
with all properties effective.
Given $L_1,L_2 \in \LL$ being $k$-slender languages,
it is decidable whether $L_1 \cap L_2 = \emptyset$.
\end{theorem}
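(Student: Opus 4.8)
The plan is to express disjointness through a single application of the difference operator and then decide it with the containment test already established. The starting point is the set identity: since $L_1 - L_2 \subseteq L_1$ holds unconditionally, we have $L_1 \cap L_2 = \emptyset$ if and only if $L_1 = L_1 - L_2$, which is in turn equivalent to $L_1 \subseteq L_1 - L_2$. Thus it suffices to produce $L_1 - L_2$ inside a family meeting the hypotheses of Theorem \ref{containment} and then to test this one containment.

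First I would form $L_1 - L_2$. As $L_2$ is $k$-slender, Theorem \ref{difference} gives $L_1 - L_2 \in \LL$ when $\LL$ is the union-, concatenation-, and $\NCM$-intersection-closed family of the second hypothesis. In the purely semilinear case I would instead run the very construction of Theorem \ref{difference} inside $\LL' = \hat{{\cal F}}(\LL \wedge \NCM)$, which by Theorem \ref{fullAFL} is an effective semilinear full AFL closed under each operation that construction uses (concatenation, intersection with $\NCM$, and homomorphism); this places $L_1 - L_2 \in \LL'$. In either case the difference lies in a family ($\LL$ or $\LL'$) that satisfies every hypothesis of Theorem \ref{containment}.

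Next I would note the decisive structural fact that $L_1 - L_2$ is again $k$-slender: being a subset of $L_1$, it satisfies $f_{L_1 - L_2}(n) \le f_{L_1}(n) \le k$ for all $n$. This is exactly what lets the containment test apply, since Theorem \ref{containment} requires its containing (second) language to be $k$-slender. I would then invoke Theorem \ref{containment} with first argument $L_1$ and second argument $L_1 - L_2$ --- applied to $\LL$ in the second case, and to $\LL'$ in the semilinear case (noting $L_1 \in \LL \subseteq \LL'$) --- to decide $L_1 \subseteq L_1 - L_2$, which by the identity above decides whether $L_1 \cap L_2 = \emptyset$. Equivalently, one could build $L_1 - (L_1 - L_2) = L_1 \cap L_2$ by a second use of Theorem \ref{difference} and then test emptiness directly.

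The main obstacle, and the reason the argument must route through differences rather than through a direct witness, is that reversal-bounded counters cannot test equality of two input strings; so a naive attempt to build an $\NCM$-expressible language whose members witness a common word of $L_1$ and $L_2$ fails. Phrasing the intersection via the difference operator avoids this entirely, because the constructions of Theorems \ref{difference} and \ref{containment} only ever verify length equalities and string \emph{inequalities}, both within the power of an $\NCM$. The remaining care is purely bookkeeping: tracking whether each intermediate language resides in $\LL$ or in $\LL'$, and observing that $\LL'$ inherits all the closure and effectiveness hypotheses needed.
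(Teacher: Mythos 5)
Your proposal is correct, and it leans on the same engine as the paper --- Theorem \ref{difference} applied with a $k$-slender subtrahend, with the purely semilinear case handled by passing to $\LL' = \hat{{\cal F}}(\LL \wedge \NCM)$ --- but the reduction itself is different. The paper uses the symmetric identity $L_1 \cap L_2 = (L_1 \cup L_2) - \bigl((L_1 - L_2) \cup (L_2 - L_1)\bigr)$: two applications of Theorem \ref{difference} (the second subtracting a $2k$-slender language), union closure, and then a direct emptiness test. You instead observe that $L_1 \cap L_2 = \emptyset$ iff $L_1 \subseteq L_1 - L_2$, form the single difference $L_1 - L_2$ (which is $k$-slender as a subset of $L_1$), and hand the question to the containment test of Theorem \ref{containment}; your alternative $L_1 - (L_1 - L_2) = L_1 \cap L_2$ followed by an emptiness test is also valid and mirrors the paper's shape with a slightly leaner identity. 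Your route is marginally more economical --- it avoids the union step and the $2k$-slender bookkeeping --- and you are more explicit than the paper about the fact that in the semilinear case all constructions must take place in $\LL'$, with $L_1 \in \LL \subseteq \LL'$ and $\LL'$ inheriting the hypotheses of Theorems \ref{difference} and \ref{containment} via Theorem \ref{fullAFL}. Your closing remark about why one cannot directly build an $\NCM$-checkable witness for a common word (counters can verify inequality of strings but not equality) correctly identifies the obstruction that forces both your argument and the paper's through the difference operator.
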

\begin{proof}
Let $\LL' = \hat{{\cal F}}(\LL \wedge \NCM)$ (or just $\LL$ in the second case), which is semilinear
by Theorem \ref{fullAFL}, (or length-semilinear by assumption).

Notice that $L_1 \cap L_2 = (L_1 \cup L_2) - ((L_1 - L_2) \cup (L_2-L_1))$.
By Theorem \ref{difference}, $L_1 - L_2 \in \LL$ and $L_2 - L_1 \in \LL$, and
both must be $k$-slender since $L_1$ and $L_2$ are both $k$-slender.
Certainly $(L_1 - L_2) \cup (L_2-L_1) \in \LL$, and is also $2k$-slender.
Also, $L_1 \cup L_2 \in \LL$.
Hence, by another application of Theorem \ref{difference},
$(L_1 \cup L_2) - ((L_1 - L_2) \cup (L_2-L_1))\in \LL$. Since emptiness
is decidable in $\LL$, the theorem follows.
\qed \end{proof}
This again holds for all the families in Example \ref{semilinearfulltrioexamples} plus the languages accepted by matrix grammars.

An interesting open question is whether every $k$-slender $\NCM$ language (or other more
general families)
can be decomposed into a finite disjoint union of
thin $\NCM$ languages.
Although we have not been able to show this, we can give a related result. 
To recall, in \cite{Harju}, the model $\TCA$ is introduced consisting of a nondeterministic Turing machine with a one-way read-only input tape, a finite-crossing read/write tape, and reversal-bounded counters. It is shown that this model only accepts semilinear languages, and indeed, it is a full trio. Clearly, the model is closed under intersection with $\NCM$ by adding more counters. Although we do not know whether it is possible to decompose $\NCM$ slender languages into thin $\NCM$ languages, we can decompose them into thin $\TCA$ languages.

\begin{theorem}
Every $k$-slender $\NCM$ language $L$ is a finite union of thin $\TCA$ languages.
\end{theorem}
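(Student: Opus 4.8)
The plan is to first split $L$ into pieces with a rigid counting profile and then select words within each piece by lexicographic rank, exploiting the finite-crossing worktape of the $\TCA$ model. By Theorem~\ref{thm22}, the $k$-slender $\NCM$ language $L$ is $k$-slender effective, so for each $i$ with $1\le i\le k$ the length set $S_i=\{n\mid f_L(n)=i\}$ is recognized by the associated unary $\DFA$ and is therefore ultimately periodic; hence $R_i=\{w\mid |w|\in S_i\}$ is regular. Put $L_i=L\cap R_i$. Since $\NCM$ is closed under intersection with regular languages, $L_i\in\NCM$, and by construction $f_{L_i}(n)=i$ for $n\in S_i$ and $f_{L_i}(n)=0$ otherwise, so $f_{L_i}(n)\in\{0,i\}$ for every $n$. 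As every $w\in L$ lies in exactly $L_{f_L(|w|)}$, we get $L=L_1\cup\cdots\cup L_k$. It therefore suffices to write each such $L_i$ as a finite union of thin $\TCA$ languages.

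Fix such an $L_i$ and a linear order on its alphabet, inducing the lexicographic order on words. For $1\le j\le i$, let $L_i^{(j)}$ be the set of $w\in L_i$ that are the $j$-th smallest word of length $|w|$ in $L_i$. Whenever $f_{L_i}(n)=i$ the exactly $i$ words of length $n$ are partitioned by rank among $L_i^{(1)},\ldots,L_i^{(i)}$, and when $f_{L_i}(n)=0$ these are all empty; thus each $L_i^{(j)}$ is thin and $L_i=L_i^{(1)}\cup\cdots\cup L_i^{(i)}$.

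It remains to construct, for each $j$, a $\TCA$ accepting $L_i^{(j)}$. On input $w$ of length $n$, the machine makes a single left-to-right pass copying $w$ onto track $0$ of an $i$-track worktape while nondeterministically writing guessed words $v_1,\ldots,v_{i-1}$ on the remaining tracks; the multi-track alignment forces $|v_\ell|=n$. It then performs a constant number of further left-to-right sweeps (a number depending only on $k$) to check that: (a) each of $w,v_1,\ldots,v_{i-1}$ is accepted by the fixed $\NCM$ $M$ for $L_i$, one sweep per track, each simulation emulating $M$'s reversal-bounded counters with the $\TCA$'s own reversal-bounded counters; (b) the $i$ words are pairwise distinct, by guessing and verifying a position of difference for each pair; and (c) exactly $j-1$ of the $v_\ell$ are lexicographically smaller than $w$, by guessing and verifying the leftmost differing position of each comparison and keeping the running count (bounded by $i$) in the finite control. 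The \emph{key observation} is that, because $f_{L_i}(n)=i$ exactly, once $w,v_1,\ldots,v_{i-1}$ are shown to be $i$ pairwise distinct words of length $n$ in $L_i$ they are \emph{necessarily all} of them; so no universally quantified completeness test is required, and $w$ is the $j$-th smallest precisely when (c) holds.

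Finally one checks that this machine is a genuine $\TCA$: the input head is one-way, the worktape is read/write and undergoes only a constant number of full sweeps, so every cell boundary is crossed a bounded number of times (finite-crossing), and the counters accumulate only boundedly many reversals since each of the constantly many $\NCM$ simulations and comparisons contributes a bounded amount. The main obstacle is exactly this bookkeeping — organizing the membership, distinctness, and ordering tests into finitely many forward sweeps within a reversal-bounded counter budget — together with making the completeness-for-free argument airtight. Taking the union of all $L_i^{(j)}$ for $1\le i\le k$ and $1\le j\le i$ then exhibits $L$ as a finite union of thin $\TCA$ languages.
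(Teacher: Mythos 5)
Your proof is correct, and it shares the paper's central idea --- split $L$ by lexicographic rank within each length, and exploit the fact that once $i$ pairwise-distinct words of length $n$ are exhibited in a language with exact count $i$, they are necessarily all of them, so no universal completeness test is needed --- but it executes the decomposition and the machine construction differently. The paper works downward: it forms the tuple language $A_k = \{x_1\#\cdots\#x_k \mid x_j \in L,\ |x_1|=\cdots=|x_k|,\ x_1<\cdots<x_k\}$ directly on the one-way \emph{input} tape, builds a $\TCA$ that compares adjacent components via the worktape, extracts the components with gsm mappings to get $k$ thin languages, reads off a $\DFA$ for the lengths realizing count $k$, excludes those lengths to define $A_{k-1}$, and iterates. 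You instead obtain the length-set decomposition up front from $k$-slender effectiveness (Theorem~\ref{thm22}), intersect with the regular filters $R_i$ to get pieces $L_i$ with rigid profile $f_{L_i}(n)\in\{0,i\}$, and then build a $\TCA$ for each rank class $L_i^{(j)}$ by \emph{guessing} the $i-1$ companion words on worktape tracks and verifying membership, distinctness, and rank in a bounded number of sweeps. Your route buys a cleaner statement of which piece each word lands in (rank $j$ among count-$i$ lengths) and avoids the gsm-extraction step, at the cost of invoking the heavier Theorem~\ref{thm22} and of having to verify $\NCM$ membership of words read off the worktape rather than the input tape; both of these are legitimate for the $\TCA$ model as defined (nondeterministic, finite-crossing worktape, reversal-bounded counters), and your finite-crossing and reversal-budget bookkeeping is sound since the number of sweeps and simulations depends only on $k$.
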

\begin{proof}
Let $M$ be an $\NCM$ accepting $L$. Since $L$ is $k$-slender, for each $n$, there are either exactly $k$ words of length $n$, or $k-1$ words of length $n$, etc.\ or $0$ words of length $n$. 
Let $A_k = \{ x_1 \# \cdots \# x_k \mid x_1, \ldots, x_k \in L(M), |x_1| = \cdots = |x_k|$, $x_1 < \cdots <x_k\}$ 
(the $<$ relation uses lexicographic ordering). For all such words $x_1 \# \cdots \# x_k$,
then the lengths of the first parts, $|x_1|$, are exactly those lengths $n$ such that $f_L(n) = k$. Then build a $\TCA$ $M'$ accepting $A_k$ as follows: $M'$ reads
$x_1 \# \cdots \# x_k$, and verifies each $x_i \in L(M)$ using a set of counters, while in parallel verifying their lengths are the same.
In parallel, $M'$ writes $x_1$ on the worktape; then when reading $x_2$ letter-by-letter, it scans $x_1$ on the worktape also letter-by-letter,
and in the first position where they differ, it verifies that $x_2>x_1$. From that point on, it replaces $x_1$ on the tape with $x_2$. It then
repeats up to $x_k$.

Let $G_i$ be a gsm that extracts the $i$'th ``component'' of $A_k$. Then $G_1(A_k), \ldots, G_k(A_k)$ are all thin languages.
As they are thin, there is a $\DFA$ $M_k$ accepting all these lengths of words. 
Next, let $A_{k-1} = \{ x_1 \# \cdots \# x_{k-1} \mid x_1, \ldots, x_{k-1} \in L(M), |x_1| = \cdots = |x_{k-1}|$, $x_1 < \cdots <x_{k-1}, 1^{|x_1|} \notin L(M_k)\}$. These must be those words $x_1 \# \cdots \# x_{k-1}$ where there cannot be a $k$'th such word (since $1^{|x_1|} \notin L(M_k)$).
And again, we can use $C_1(A_{k-1}), \ldots, C_{k-1}(A_{k-1})$ to separate based on lexicographic order, and each such language is thin. 

Continuing in this fashion down to $1$, we see that $L$ is a finite union of thin languages in $\TCA$ (for all $C_i(A_{j}), 1 \leq i \leq k$).
\qed \end{proof}

A language $L$ is $k$-counting-regular if there exists a regular language
$L'$ such that $f_{L'}(n) = f_L(n) \le k$ for $n \ge 0$ (this is equivalent to $L$ being $k$-slender and counting-regular).  $L$ is finite-counting
regular if it is $k$-counting regular for some $k$.
The next result shows that  all $k$-slender languages
in ``well-behaved'' language families are $k$-counting-regular.
\begin{theorem}
Let $\LL$ be a full trio which is either:
\begin{itemize}
\item semilinear, or
\item is length-semilinear, closed under concatenation, and intersection with $\NCM$,
\end{itemize}
with all properties effective.
Let $L$ be a $k$-slender language in $\LL$. Then $L$ is $k$-counting-regular (and thus counting-regular)
and we can effectively
construct a $\DFA$ $M$ such that $f_{L}(n) = f_{L(M)}(n) \le k$  for $n \ge 0$.
Moreover, $L(M)$ is bounded with $L(M) \subseteq 1^*\{\#_1, \ldots, \#_k\}$ for some distinct symbols $\#_1, \ldots, \#_k$.
\end{theorem}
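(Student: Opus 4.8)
The plan is to read off the desired $\DFA$ from the $k$-slender effective machine supplied by Theorem \ref{thm22}. Since $L \in \LL$ is $k$-slender, Theorem \ref{thm22} lets me effectively construct a unary $\DFA$ $D$ over $\{1\}$ with distinguished states $s_0, \ldots, s_k$ such that, on input $1^n$, $D$ halts in $s_i$ exactly when $f_L(n) = i$. Thus after reading $1^n$ the machine $D$ sits in the unique state $s_{f_L(n)}$, and the test ``$f_L(n) \ge j$'' amounts to checking membership of that state in $\{s_j, \ldots, s_k\}$.

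The encoding I would use represents, for each length $n \ge 1$, exactly $f_L(n)$ distinct witnesses, namely the words $1^{n-1}\#_j$ for $1 \le j \le f_L(n)$; all of these have length $n$, so at length $n$ they contribute precisely $f_L(n)$. To recognise exactly this set with a $\DFA$ $M$ over $\{1, \#_1, \ldots, \#_k\}$, I would have $M$ run a copy of $D$, feeding it a $1$-step for every input symbol read (both the $1$'s and the single terminating $\#_j$). After reading $1^{n-1}\#_j$ the simulated $D$ is therefore in state $s_{f_L(n)}$, and $M$ accepts iff (i) the symbol $\#_j$ is the first and only symbol from $\{\#_1, \ldots, \#_k\}$ seen, and (ii) the reached state is $s_i$ with $i \ge j$. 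Any symbol following a $\#_j$, or a reached state $s_i$ with $i < j$, routes $M$ to a rejecting sink, and the start state is non-accepting. By construction $M$ is a $\DFA$, $L(M) = \{1^{n-1}\#_j \mid n \ge 1,\ 1 \le j \le f_L(n)\}$, and this language lies in $1^*\{\#_1, \ldots, \#_k\}$, which is bounded since it is contained in $1^*\#_1^* \cdots \#_k^*$.

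Verifying the counting identity is then immediate: for every $n \ge 1$ the length-$n$ words of $L(M)$ are exactly $1^{n-1}\#_1, \ldots, 1^{n-1}\#_{f_L(n)}$, so $f_{L(M)}(n) = f_L(n)$, and $k$-slenderness of $L$ gives $f_{L(M)}(n) = f_L(n) \le k$. Hence $L$ is $k$-counting-regular, and in particular counting-regular, with $M$ effectively constructible because $D$ is.

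The step needing the most care is the single length $n = 0$: the prescribed form $L(M) \subseteq 1^*\{\#_1, \ldots, \#_k\}$ forces $f_{L(M)}(0) = 0$, while $f_L(0) \in \{0, 1\}$. Using decidability of emptiness in $\LL$ (Theorem \ref{fullAFL}, applied to $L \cap \{\epsilon\}$) I can decide whether $\epsilon \in L$; if it is, the sole adjustment is to also admit $\epsilon$ into $L(M)$ (equivalently, to declare $M$'s start state accepting), which corrects the count at the lone length $0$ and changes nothing else. Every other ingredient --- the in-$M$ simulation of $D$ and the length bookkeeping --- is routine, so I expect no further obstacle.
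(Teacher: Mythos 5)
Your proposal is correct and follows essentially the same route as the paper: obtain the $k$-slender effective unary $\DFA$ from Theorem \ref{thm22}, then build $M$ to accept $1^{n-1}\#_j$ exactly when $j \le f_L(n)$ by simulating that $\DFA$ while treating the terminal marker as a $1$. Your extra attention to the length-$0$ case is a reasonable refinement the paper glosses over (though admitting $\epsilon$ into $L(M)$ technically conflicts with the stated containment $L(M) \subseteq 1^*\{\#_1,\ldots,\#_k\}$, so the cleaner reading is that the containment clause implicitly assumes $\epsilon \notin L$).
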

\begin{proof}
From Lemma \ref{makeeffective}, $L$ is $k$-slender effective, and we can construct
a $\DFA$ $ M'$  such that  when given input $1^n$ , halts in state $s_i$ $(0 \leq i \le k)$
if $f_L(n) = i$.

Let $\#_1, \ldots, \#_k$ be new symbols.  Construct a $\DFA$ $M$ which, on input
$w = 1^{n-1}\#_s$ (for some $1 \le s \le k$) simulates  $M'$ on $w$. ($M$ pretends
that $\#_s$ is $1$ in  the simulation).  If $M'$  lands in state $s_t$   ($1 \le t \le k$),
$M$ accepts $w$  if and only of $ s = 1, \ldots ,t$. (Of course, if none of $M'$ lands in
state $s_0$, $M$ does not accept $w$).

It is easy to verify that $f_L (n) = L_{L(M)}(n) \le k$ for all $n \ge 0$.
Hence, $L$ is $k$-counting-regular.
\qed \end{proof}

\begin{corollary}
Let $\LL$ be a full trio which is either:
\begin{itemize}
\item semilinear, or
\item is length-semilinear, closed under concatenation, and intersection with $\NCM$,
\end{itemize}
with all properties effective.  Then
\begin{enumerate}
\item
$L$ is $k$-slender if and only if $L$ is $k$-counting regular.
\item
$L$ is slender if and only if $L$ is finite-counting regular.
\end{enumerate}
\end{corollary}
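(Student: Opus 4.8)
The plan is to obtain both equivalences as essentially immediate consequences of the preceding theorem together with the definitions, so that no new machinery is required; the nontrivial content has already been established, since the preceding theorem shows that any $k$-slender language in $\LL$ is $k$-counting-regular. Throughout, I take $L$ to be a language in $\LL$, as the corollary is implicitly about members of this family.

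For Part (1), I would argue the two directions separately. The implication from $k$-counting-regular to $k$-slender is purely definitional: if $L$ is $k$-counting-regular, then by definition there is a regular language $L'$ with $f_L(n) = f_{L'}(n) \le k$ for all $n \ge 0$, and the condition $f_L(n) \le k$ for all $n$ is exactly the statement that $L$ is $k$-slender. The reverse implication, from $k$-slender to $k$-counting-regular, is precisely the content of the preceding theorem applied to $L \in \LL$, which produces a $\DFA$ $M$ with $f_L(n) = f_{L(M)}(n) \le k$ for all $n$. Combining the two directions yields the stated equivalence.

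For Part (2), I would reduce to Part (1) by quantifying over $k$. If $L$ is slender, then $L$ is $k$-slender for some $k$; by Part (1) this same $L$ is $k$-counting-regular, and hence finite-counting-regular. Conversely, if $L$ is finite-counting-regular, then $L$ is $k$-counting-regular for some $k$, so $f_L(n) \le k$ for all $n$, which makes $L$ $k$-slender and therefore slender.

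I do not anticipate any genuine obstacle here, as both equivalences are formal consequences of the definitions and the prior theorem; the heavy lifting (the $k$-slender to $k$-counting-regular direction) was carried out there using the full-trio, semilinear or length-semilinear, and effectiveness hypotheses. The only points requiring care are to confirm that $L$ is assumed to lie in $\LL$ so that the preceding theorem applies in the nontrivial direction, and to note that the converse directions are immediate from the definition of ($k$-)counting-regular, which already bundles in the slenderness bound $f_L(n)\le k$.
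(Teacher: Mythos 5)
Your proposal is correct and matches the paper's intent: the paper states this corollary without proof precisely because, as you observe, one direction of each equivalence is the preceding theorem and the other is immediate from the definition of $k$-counting-regular (which builds in the bound $f_L(n)\le k$). Nothing further is needed.
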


From Corollary \ref{cor13} and the above corollary:

\begin{corollary} 
Let $\LL$ be a semilinear trio (with all properties effective). 
Then, it is decidable, given $L$ bounded in $\LL$ and words 
$u_1, \ldots, u_k$ such that $L \subseteq u_1^* \cdots u_k^*$,
whether $L$ is finite-counting-regular.
\end{corollary}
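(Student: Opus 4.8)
The plan is to combine the two cited results, the only subtlety being that $\LL$ is assumed merely to be a semilinear trio, not a full one, whereas the preceding corollary (the slenderness / finite-counting-regularity equivalence) is stated for full trios. The key observation that bridges this gap is that finite-counting-regularity is an intrinsic property of the language (existence of a regular language with matching counting function), so it does not matter in which family $L$ is presented; this lets me move $L$ into a convenient full trio to apply the equivalence, while still invoking Corollary \ref{cor13} in the original family $\LL$.

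First I would establish the equivalence that, for our bounded $L$, $L$ is finite-counting-regular if and only if $L$ is slender. The forward direction is immediate from the definitions: if $L$ is $k$-counting-regular for some $k$, then by definition there is a regular $L'$ with $f_L(n) = f_{L'}(n) \le k$ for all $n$, so $L$ is $k$-slender and hence slender. For the reverse direction I would use the facts recalled earlier in the excerpt, namely that in every semilinear trio all bounded languages are bounded semilinear \cite{CIAA2016}, and that every bounded semilinear language lies in the semilinear full trio $\NCM$ \cite{ibarra1978,CIAA2016}. Thus $L$, regarded as an $\NCM$ language, falls under part (2) of the preceding corollary (which applies since $\NCM$ is a semilinear full trio with all properties effective), yielding that $L$ slender implies $L$ finite-counting-regular.

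Having pinned down that finite-counting-regularity coincides with slenderness for the language in question, I would then simply invoke Corollary \ref{cor13}, which asserts that it is decidable, given a bounded language in any semilinear trio together with words $u_1, \ldots, u_k$ witnessing $L \subseteq u_1^* \cdots u_k^*$, whether $L$ is slender. Since the two properties agree on $L$, the very same algorithm decides finite-counting-regularity. I expect no real obstacle here; the only point requiring care is the one noted above, that one must argue that the full-trio equivalence may be applied to $L$ by viewing it inside $\NCM$, even though the hypothesis of the statement only grants that $L$ lives in the possibly non-full semilinear trio $\LL$.
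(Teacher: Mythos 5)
Your proposal is correct and follows essentially the same route as the paper, which derives this corollary directly by combining Corollary \ref{cor13} (decidability of slenderness for bounded languages in semilinear trios, given the words $u_1,\ldots,u_k$) with the immediately preceding corollary (slender $\Leftrightarrow$ finite-counting-regular). Your extra step of routing $L$ through $\NCM$ to justify applying the full-trio equivalence is a legitimate and careful way to bridge a hypothesis mismatch that the paper leaves implicit.
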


As in Section 4, we assumed in this section that the closure 
properties are effective, since we wanted the results to be effective.
However, we can remove this assumption and many of the results would
still hold existentially.  For example, the closure properties
would still hold.

\section{Conclusions}

In this work, we attempted to understand languages with simple counting functions: those that have counting functions that belong to the class of counting functions of regular languages (known
as counting-regular languages), and those for which the counting function is bounded by a constant $k$,
(known as $k$-slender languages). First, it is shown that all
unambiguous $\NTM$s with a one-way input and a reversal-bounded worktape are counting-regular. Then, certain ``well-behaved'' language
families are considered, that form semilinear full trios $\LL$. 
It is shown that the counting functions for the bounded
languages in $\LL$ coincide with the counting functions for the bounded regular languages.
Also, all $k$-slender languages in $\LL$ have the same counting function as
some bounded regular language. The
containment, equality, and disjointment problems are shown to be decidable
for $k$-slender languages in $\LL$. Most results are general enough to cover
more general families that are not semilinear should they satisfy certain other closure properties (such families include the languages generated by matrix grammars).

We conclude with some open problems arising from this study. 
It is open whether for every counting-regular $\CFL$ $L$ over a $k$-letter alphabet, there is a regular language $L'$ over a $k$-letter alphabet such that $f_L$ = $f_{L'}$. 
Regarding closure properties, we conjecture that the counting-regular $\CFL$s are not closed under Kleene star.
It is unknown whether every $k$-slender language is bounded within every semilinear full trio (this is true within
the context-free languages).  Also, the decidability status of whether a language $L$ in
an arbitrary full trio is slender ($k$ is not given) is open (this is again decidable
for the context-free languages). Similarly, it is unknown whether there
is a procedure to determine words $w_1, \ldots, w_n$, if they exist, such that $L \subseteq w_1^* \cdots w_n^*$, for $L$ in such full trios (this is again possible for context-free languages).

\section*{Acknowledgements}
We thank Flavio D'Alessandro for pointing out to us that Theorem \ref{cor9} is also implied by the main result in \cite{FlavioBoundedSemilinear}.

\bibliography{bounded}{}
\bibliographystyle{elsarticle-num}

\end{document}